\renewcommand{\Re}{\operatorname{Re}}
\renewcommand{\Im}{\operatorname{Im}}
\newtheorem{theorem}{Theorem}
\newtheorem{corollary}[theorem]{Corollary}
\newtheorem{lemma}[theorem]{Lemma}
\newtheorem{prop}[theorem]{Proposition}
\newcommand\erase{\bgroup\markoverwith{\textcolor{red}{\rule[.5ex]{2pt}{0.4pt}}}\ULon}
\newcommand\eraseblue{\bgroup\markoverwith{\textcolor{blue}{\rule[.5ex]{2pt}{0.8pt}}}\ULon}
\definecolor{tableblue}{HTML}{ccffff}
\begin{document}

\doparttoc 
\faketableofcontents 
\part{}

%\preprint{APS/123-QED}

\title{
Grover's algorithm is an approximation of imaginary-time evolution \\
}
%\thanks{A footnote to the article title}%

\newcommand{\EPFLQSE}{Centre for Quantum Science and Engineering, \'{E}cole Polytechnique F\'{e}d\'{e}rale de Lausanne (EPFL), Lausanne, Switzerland}
\newcommand{\EPFL}{Institute of Physics, \'{E}cole Polytechnique F\'{e}d\'{e}rale de
Lausanne (EPFL), Lausanne, Switzerland}
\newcommand{\KEIO}{Quantum Computing Center, Keio University, Hiyoshi 3-14-1, Kohoku-ku, Yokohama 223-8522, Japan}
\author{Yudai Suzuki}
\affiliation{\EPFL}\affiliation{\EPFLQSE}\affiliation{\KEIO}
\author{Marek Gluza}
\affiliation{School of Physical and Mathematical Sciences, Nanyang Technological University, 637371 Singapore}
\author{Jeongrak Son}
\affiliation{School of Physical and Mathematical Sciences, Nanyang Technological University, 637371 Singapore}
\author{Bi Hong Tiang}
\affiliation{School of Physical and Mathematical Sciences, Nanyang Technological University, 637371 Singapore}
\author{Nelly H. Y. Ng }
\affiliation{School of Physical and Mathematical Sciences, Nanyang Technological University, 637371 Singapore}
\affiliation{Centre for Quantum Technologies, Nanyang Technological University, 50 Nanyang Avenue, 639798 Singapore}
\author{Zo\"{e} Holmes}
\affiliation{\EPFL}\affiliation{\EPFLQSE}

\date{\today}

\begin{abstract}
We reveal the power of Grover's algorithm from thermodynamic and geometric perspectives by showing that it is a product formula approximation of imaginary-time evolution (ITE), a Riemannian gradient flow on the special unitary group.
This ITE formulation provides a unified perspective on Grover’s algorithm, its variants and extensions to widely used quantum subroutines including amplitude amplification and oblivious amplitude amplification.
Specifically, the framework explains the choice of angles in the original Grover's algorithm and $\pi/3$-algorithm. It also motivates a new $\pi/2$-algorithm, for cases a modest failure probability is acceptable, that converges faster than the  $\pi/3$-algorithm without overshooting.
Our analysis further provides a link between ITE and quantum signal processing, which yields a new implementation of the fixed-point quantum search algorithm.
Moreover, the ITE formulation can systematically reproduce widely-used subroutines in modern quantum algorithms, such as (oblivious) amplitude amplification.
These results collectively establish a deeper understanding of Grover’s algorithm and suggest a potential role for thermodynamics and geometry in quantum algorithm design.

\end{abstract}
%
%\keywords{Suggested keywords}%Use showkeys class option if keyword
%display desired
\maketitle
%
%\tableofcontents
%

%%%%%%%%%%%%%%%%%%%%%%% Figure %%%%%%%%%%%%%%%%%%%%%%%
\begin{figure}[t]
\centering
\begin{tikzpicture}
\node[anchor=center] (russell) at (0.,0.0){\centering\includegraphics[width=0.42\textwidth]{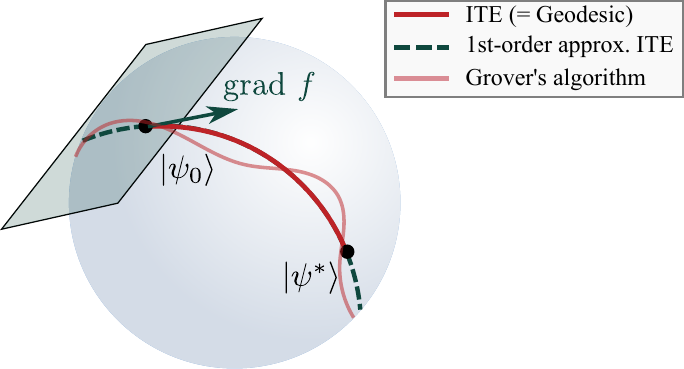}};
\end{tikzpicture}
\caption{\textbf{Geometrical picture of the unstructured search by Grover's algorithm.}
We demonstrate that Grover’s algorithm can be viewed as a product formula approximation of Imaginary-Time Evolution (ITE), which corresponds to the steepest descent direction, $\text{grad }f$, of a least-squares cost $f$ on the special unitary group.
Moreover, we find that ITE and its first-order approximation trace the shortest path (i.e., a geodesic) between the initial state $\ket{\psi_{0}}$ and the solution state $\ket{\psi^{*}}$.
Accordingly, Grover’s algorithm can be understood as a product formula approximation of the geodesic.
}
\label{fig:summary}
\end{figure}
%%%%%%%%%%%%%%%%%%%%%%% Figure %%%%%%%%%%%%%%%%%%%%%%%

\section{Introduction}
Grover's algorithm provides a quadratic speed-up for the unstructured search problem~\cite{grover1996fast}, where the goal is to identify marked items within a dataset based on a binary function $f: \mathcal X \rightarrow \{0,1\}$.
Given a search space $\mathcal{X}\in\{0,1,\ldots,N-1\}$, the aim is to find $M$ target items satisfying $f(x)=1$.
Classically, solving this problem requires $\mathcal{O}(N)$ queries, while Grover's algorithm performs the task with only $\mathcal{O}(\sqrt{N})$ queries~\cite{nielsen2010quantum}.
Thus far, significant efforts have been devoted to elucidating the characteristics of Grover's algorithm, including its optimal complexity~\cite{bennett1997strengths,beals1998tight,zalka1999grover,nielsen2010quantum} and remedies for the issue of ``overshooting"~\cite{brassard1997searching,grover2005fixed,yoder2014fixed}.
Grover's framework has been further extended to the quantum singular value transformation~\cite{gilyen2019quantum,martyn2021grand}.
These advances underscore Grover’s fundamental role in quantum algorithm design.

On the other hand, it is widely observed that, despite decades of research, there are remarkably few quantum algorithmic primitives~\cite{shor2003haven, zimboras2025myths,dalzell2023quantum}. The search for new primitives is held back, in part, by the difficulty of intuitive unifying understandings of why quantum algorithms work~\cite{shor2003haven}.
Uncovering the key factors underlying the success of Grover's algorithm, one of the field's most fundamental primitives, would offer valuable guidance for the principled design of novel, optimal quantum algorithms.

Here, we analyze Grover’s algorithm through the lens of Imaginary-Time Evolution (ITE), a thermodynamically-inspired approach to prepare the ground state of a target Hamiltonian. 
Specifically, we show that the Grover iteration is a product formula approximation of ITE. These ITE dynamics are then equivalent to the steepest descent direction, i.e., the Riemannian gradient flow on the special unitary group~\cite{gluza_DB_QITE_2024,mcmahon2025equating}. Thus we can interpret the success of Grover's algorithm as a product formula approximation of ITE, that in turn follows the steepest descent direction on the unitary manifold (see Fig.~\ref{fig:summary}).

%%%%%%%%%%%%%%%%%%%%%%% Figure %%%%%%%%%%%%%%%%%%%%%%%
\begin{figure*}[t]
\centering
\begin{tikzpicture}
\definecolor{lightgray}{HTML}{F4F4F4}
\definecolor{littlelightgray}{HTML}{ecececff}
\definecolor{pale}{HTML}{7ca3d4ff}
\definecolor{lightred}{HTML}{d8a2a2}

%\node[draw=white] at (-14.2,0.3) {(a)};

\node[draw=white, anchor=west] at (-14.25,-0.5) {\textbf{ITE formulation}};

\node[draw=black] at (-12.5,-1.3) {Unstructured Search};
\node[draw=black] at (-8.0,-1.28) {Grover's Algorithm};
\node[draw=black] at (-10.5,-3.4) {Imaginary-Time Evolution};

\draw[<->, ultra thick] (-12.5,-1.8) -- (-11.4,-2.8);
\draw[<->, ultra thick] (-8.8,-1.8) -- (-9.9,-2.8);

\node[draw=white] at (-13.0,-2.45) {\textbullet\  Lemma~\ref{lem:solvability}};
\node[draw=white] at (-8.0,-2.24) {\textbullet\  Lemma~\ref{lem:equiv_DBR_ITE}};
\node[draw=white] at (-7.88,-2.74) {\textbullet\  Corollary~\ref{lem:ITE_to_Grover}};

\draw[thick ] (-14.2, -0.8) rectangle (-6.4, -3.9);

%\node[anchor=north] (russell) at (-10.,-3.3){\centering\includegraphics[width=0.4\textwidth]{figure/Riemannian_geometry_legend_moved.pdf}};

%\draw[thick, densely dashed ] (-14.2, -3.1) rectangle (-6.0, -7.8);

\draw[->, ultra thick] (-5.9,-2.5) -- (-4.9,-2.5);

\node[draw=white, anchor=west] at (-4.4,-0.5) {\textbf{Applications}};
\draw[thick ] (-4.4, -0.8) rectangle (3.6, -3.9);

\node[draw=black, anchor=west] at (-4.2,-1.2) {{Strategies on Choosing Angles}};
\node[draw=white, anchor=west] at (-3.6,-1.73) {\textbullet\  Original Grover's Algorithm: Theorem~\ref{thm:justification_orig_Grover}};
\node[draw=white, anchor=west] at (-3.6,-2.13) {\textbullet\  $\pi/3$-Algorithm ($\pi/2$-Algorithm): Proposition~\ref{prop:pi/3_alg} };
\node[draw=white, anchor=west] at (-3.6,-2.53) {\textbullet\  Fixed-Point Quantum Search: Theorem~\ref{thm:qsp_formula_ITE} };

\node[draw=black, anchor=west] at (-4.2,-3.1) {{Extensions to Other Algorithms}};
\node[draw=white, anchor=west] at (-3.6,-3.63) {\textbullet\  (Oblivious) Amplitude Amplification: Theorem~\ref{prop:ite_oaa}};

%\node[draw=white] at (-5.5,0.3) {(b)};

\end{tikzpicture}
\caption{\textbf{Summary of our work.}
We show that ITE provides a unified perspective on Grover's algorithm, its variants, and extensions to other quantum algorithms.
To this end, we demonstrate that Grover's algorithm can be viewed as an approximation of ITE through Lemmas~\ref{lem:solvability} and~\ref{lem:equiv_DBR_ITE} and Corollary~\ref{lem:ITE_to_Grover}.  
We then provide rationales behind the choices of angles for existing methods from the ITE perspective (Theorems~\ref{thm:justification_orig_Grover} and~\ref{thm:qsp_formula_ITE}, and Proposition~\ref{prop:pi/3_alg}).
Furthermore, we extend this framework to (oblivious) amplitude amplification in Theorem~\ref{prop:ite_oaa}.
}
\label{fig:summary_flow}
\end{figure*}
%%%%%%%%%%%%%%%%%%%%%%% Figure %%%%%%%%%%%%%%%%%%%%%%%

Utilizing these findings, we present a unified perspective on Grover’s algorithm, its variants and extensions to widely used quantum subroutines, including amplitude amplification (AA)~\cite{brassard2002quantum} and oblivious amplitude amplification (OAA)~\cite{berry2014exponential}: see Fig.~\ref{fig:summary_flow} for the summary.
Concretely, we provide an ITE-based interpretation of the angle choices for the original Grover's algorithm and the $\pi/3$-algorithm.
As a byproduct, it also motivates a new $\pi/2$-algorithm that offers faster convergence than the $\pi/3$-algorithm without overshooting, in cases that tolerate a moderately large failure probability.
This framework further yields a new implementation of fixed-point quantum search by connecting the ITE dynamics for unstructured search and quantum signal processing (QSP)~\cite{low2017quantum,low2017optimal,equiangular,motlagh2024generalized}.
Moreover, we show that our ITE formulation (equivalently interpreted as a Riemannian optimization) naturally reproduces AA and OAA.
This interpretation exposes a unified structure across essential subroutines in modern quantum algorithms.
Thus, our findings offer not only a new interpretation of Grover’s algorithm, but also potentially provide useful thermodynamic and geometric perspectives to inspire the design of future quantum algorithms.

\section{Grover's algorithm as an approximation of ITE}
We begin by recalling Grover’s algorithm. 
To perform unstructured search, the algorithm first prepares a uniform superposition over all $N = 2^n$ computational basis states, 
\begin{equation} \label{eq:init}
    \ket{\psi_{0}} = \frac{1}{\sqrt{N}} \sum_{x=0}^{N-1} \ket{x}.
\end{equation}
The algorithm proceeds by repeatedly applying two key operations: the diffusion operator $D(\alpha)=e^{i\alpha \psi_{0}}$ and the oracle operator $U_f(\beta) = e^{i\beta \hat{H}_{f}}$, where $\alpha,\beta\in\mathbb{R}$, $\psi_{0}=|\psi_{0}\rangle\langle\psi_{0}|$ and
\begin{equation} \label{eq:projector}
    \hat{H}_f=\sum_{x\in\{x|f(x)=1\}} |x\rangle\langle x|
\end{equation}
is the projector onto the subspace of $M$ marked states.
Through $\mathcal{N}$ repeated applications of $G(\alpha_k,\beta_k) = -D(\alpha_k) U_f(\beta_k)$, we obtain a final state $\prod_{k=1}^{\mathcal{N}}G(\alpha_k,\beta_k)\ket{\psi_{0}}$ that approximates the solution state,
\begin{equation} \label{eq:solution}
\ket{\psi^*} = \frac{1}{\sqrt{M}} \sum_{x\in\{x|f(x) = 1\}} \ket{x}.
\end{equation}
The query complexity is then defined as the number of calls $\mathcal{N}$ to the oracle operator $U_{f}$.
The original work in Ref.~\cite{grover1996fast} uses $\alpha_k = \beta_k = \pi$, while its variants such as $\pi/3$-algorithm~\cite{grover2005fixed} and the fixed-point algorithm~\cite{yoder2014fixed} have been proposed to avoid the so-called souffl\'{e} problem~\cite{brassard1997searching}, where excessive Grover iterations cause the algorithm to overshoot the solution state; see App.~\ref{app:related_work} for comprehensive discussions and reviews.

We first show that the unstructured search problem can be solved using Imaginary-Time Evolution (ITE), which drives an initial state to the ground state of the Hamiltonian $\hat{H}$ via a non-unitary operator $e^{-\tau \hat{H}}$ with $\tau\in\mathbb{R}$.
Provided the initial state has non-zero overlap with the ground state, the evolution provably converges to the ground state as $\tau\to~\infty$~\cite{GellmannLow}.
Similarly, applying ITE with $\hat{H}_{f}$ to $\ket{\psi_{0}}$ yields the solution state in Eq.~\eqref{eq:solution} in the limit of large $\tau$; 
see App.~\ref{app:proof_of_solvability} for the proof.

\begin{lemma}[ITE solves the unstructured search problem] \label{lem:solvability}
     Given the projector Hamiltonian $\hat{H}_f$ in Eq.~\eqref{eq:projector} and the initial state in Eq.~\eqref{eq:init}, the ITE state converges to the solution state in Eq.~\eqref{eq:solution} as $\tau\to\infty$, i.e., 
    \begin{equation} \label{eq:grover_solution}
         \lim_{\tau\to\infty} \frac{e^{\tau \hat{H}_{f}}\ket{\psi_0}}{\|e^{\tau \hat{H}_{f}}\ket{\psi_0}\|_2}=\ket{\psi^{*}} 
\end{equation}
with the normalization factor $\|e^{\tau \hat{H}_{f}}\ket{\psi_0}\|_2=\sqrt{\braket{\psi_{0}|e^{2\tau \hat{H}_{f}}|\psi_{0}}}$.
\end{lemma}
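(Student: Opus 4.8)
The plan is to exploit that $\hat{H}_f$ is an orthogonal projector, so its spectral decomposition is completely explicit and the imaginary-time propagator $e^{\tau\hat{H}_f}$ has a closed form. First I would split the Hilbert space into the $M$-dimensional marked subspace $\mathcal{M} = \mathrm{span}\{\ket{x} : f(x)=1\}$ and its orthogonal complement $\mathcal{M}^\perp$. Since $\hat{H}_f$ acts as the identity on $\mathcal{M}$ and as zero on $\mathcal{M}^\perp$, we have $e^{\tau\hat{H}_f} = e^{\tau}\,\Pi_{\mathcal{M}} + \Pi_{\mathcal{M}^\perp} = \mathbb{1} + (e^{\tau}-1)\hat{H}_f$, which makes the action of ITE on any state transparent.

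Next I would decompose the initial state along this splitting. Writing $\ket{\psi_0^\perp}$ for the normalized uniform superposition over the $N-M$ unmarked basis states, one has $\ket{\psi_0} = \sqrt{M/N}\,\ket{\psi^*} + \sqrt{(N-M)/N}\,\ket{\psi_0^\perp}$, where $\ket{\psi^*}$ is exactly the solution state of Eq.~\eqref{eq:solution}. Applying the propagator then gives $e^{\tau\hat{H}_f}\ket{\psi_0} = e^{\tau}\sqrt{M/N}\,\ket{\psi^*} + \sqrt{(N-M)/N}\,\ket{\psi_0^\perp}$, and since $\ket{\psi^*} \perp \ket{\psi_0^\perp}$ the Pythagorean identity yields $\|e^{\tau\hat{H}_f}\ket{\psi_0}\|_2 = \sqrt{e^{2\tau}M/N + (N-M)/N}$. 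This also reproduces the stated normalization formula, because $(e^{\tau\hat{H}_f})^\dagger e^{\tau\hat{H}_f} = e^{2\tau\hat{H}_f}$ by Hermiticity of $\hat{H}_f$.

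Finally I would take the limit $\tau\to\infty$: dividing both numerator and denominator of the normalized state by $e^{\tau}$, the unmarked component carries a factor $e^{-\tau}\to 0$ while the denominator tends to $\sqrt{M/N}$, so the ratio converges to $\ket{\psi^*}$, which is precisely Eq.~\eqref{eq:grover_solution}. The only point I would flag explicitly is the hypothesis $M\geq 1$, needed so that $\ket{\psi_0}$ has nonzero overlap with $\mathcal{M}$; this is consistent with the general ITE convergence statement and with the fact that an empty marked set renders the search problem vacuous. I do not anticipate a genuine obstacle here — the argument is elementary once the projector structure of $\hat{H}_f$ is used, and the only care required is in tracking the normalization factors and excluding the degenerate $M=0$ case.
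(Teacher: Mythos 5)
Your proof is correct and follows essentially the same route as the paper's: both rest on the projector identity $e^{\tau\hat{H}_f} = I + (e^{\tau}-1)\hat{H}_f$, followed by dividing numerator and denominator by $e^{\tau}$ and taking $\tau\to\infty$ to land on $\hat{H}_f\ket{\psi_0}/\sqrt{\braket{\psi_0|\hat{H}_f|\psi_0}} = \ket{\psi^*}$. Your explicit flagging of the $M\geq 1$ hypothesis matches the remark the paper appends after its proof about needing nonzero overlap with the marked subspace.
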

Note that Eq.~\eqref{eq:grover_solution} involves $e^{\tau \hat{H}_{f}}$ without a negative sign, as unstructured search seeks the eigenstates of $\hat{H}_{f}$ with the largest eigenvalue.
Yet, this can also be viewed as the standard ITE, since setting the target Hamiltonian to $\hat{H}_{f} \rightarrow -\hat{H}_{f}$ converts the task into preparing its ground state and introduces a minus sign in the exponents in Eq.~\eqref{eq:grover_solution}.

Given Lemma~\ref{lem:solvability}, a natural question arises: how can Grover’s algorithm be related to ITE for unstructured search? 
We answer this using the recently-developed framework of Double-Bracket Quantum Algorithms (DBQA)~\cite{double_bracket2024}, which implements continuous differential equations called double-bracket flows (DBF)~\cite{bloch1985completely,bloch1990steepest,bloch1992completely,Brockett1991DBF,BLOCH1985103,moore1994numerical,BROCKETT1989761,deift1983ordinary,Chu_iterations,wegner1994flow,wegner2006flow,hastings2022lieb,GlazekWilson,GlazekWilson2,kehrein_flow,smith1993geometric,optimization2012,brockett2005smooth} on quantum computers.
By exploiting the equivalence between DBF and ITE (further elaborated in App.~\ref{app:ITE_and_Riemannian_grad_flow} and e.g., in Refs.~\cite{gluza_DB_QITE_2024,mcmahon2025equating}), Refs.~\cite{gluza_DB_QITE_2024,zander2025role,suzuki2025double,alghadeer2025double} have leveraged the DBQAs to design quantum algorithms for ITE. 

DBQAs first proceed by recognizing that ITE is a solution to the differential equation 
\begin{equation}\label{eq:DifEq}
    \frac{\partial \Psi(\tau)}{\partial \tau} = [[\Psi(\tau), \hat{H}],\Psi(\tau)] \, 
\end{equation}
with a pure density matrix $\Psi(\tau)=|\Psi(\tau)\rangle\langle\Psi(\tau)|$ at time~$\tau$ and $[A,B]=AB-BA$ is the standard commutator. 
This is straightforward to verify by differentiating $\Psi(\tau)$  with respect to $\tau$ for $\ket{\Psi(\tau)} = e^{-\tau\hat{H}}\ket{\Psi(0)}/\|e^{-\tau\hat{H}}\ket{\Psi(0)}\|_{2}$. 
Since the commutator $[\Psi(\tau), \hat{H}]$ in Eq.~\eqref{eq:DifEq} is anti-Hermitian, this operator can be regarded as a generator of unitary evolution, in analogy with the Schr\"{o}dinger equation.
For arbitrary Hamiltonians $\hat{H}$, the discretized solution of Eq.~\eqref{eq:DifEq} can hence be approximated to first order as  
\begin{equation} 
    \ket{\psi_s} = e^{s[\psi_{0},\hat{H}]}\ket{\psi_0}, 
\end{equation}
where $s$ is a time duration. This approximation generally incurs an error of $\mathcal{O}(s^{2})$~\cite{double_bracket2024,robbiati2024double,xiaoyue2024strategies, gluza_DB_QITE_2024}.
However, whenever the Hamiltonian is a projector, i.e., $\hat{H}^2=\hat{H}$ as is the case for Grover's algorithm, the ITE state for any imaginary time $\tau$ can be realized exactly by Eq.~\eqref{eq:dbr} (see App.~\ref{app:DBR_ITE_relation} for the proof). 

\begin{lemma}[Equivalence of ITE and commutator flow for projector Hamiltonians]
\label{lem:equiv_DBR_ITE} 
Let $\hat{H}_{f}$ be the projector Hamiltonian in Eq.~\eqref{eq:projector}.
Then, for any ITE evolution time $\tau$, there exists a time duration $s_{\tau}$ such that 
\begin{equation}\label{eq:dbr}
    \frac{e^{\tau \hat{H}_f}\ket{\psi_0}}{\|e^{\tau \hat{H}_f}\ket{\psi_0}\|_2} =e^{s_\tau[ \hat{H}_f,\psi_{0}]}\ket{\psi_0} =:\ket{\psi_{s_{\tau}}}.
\end{equation}
Since the middle expression in Eq.~\eqref{eq:dbr} covers all the states generated by ITE, we hereafter refer to $\ket{\psi_{s_{\tau}}}$ as the ITE state. 
\end{lemma}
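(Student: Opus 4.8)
The plan is to carry out everything inside the two-dimensional real subspace $\mathcal{S}=\mathrm{span}\{\ket{\psi^*},\ket{\psi^*_\perp}\}$ in which the search dynamics is confined, where $\ket{\psi^*_\perp}$ is the normalized component of $\ket{\psi_0}$ orthogonal to the marked subspace. Writing $\ket{\psi_0}=\sin\theta_0\ket{\psi^*}+\cos\theta_0\ket{\psi^*_\perp}$ with $\sin\theta_0=\sqrt{M/N}\in(0,1)$ (the boundary case $M=N$ being immediate, and $M\ge 1$ so that a solution exists), one checks that $\hat{H}_f\ket{\psi_0}=\sin\theta_0\ket{\psi^*}$ and $[\hat{H}_f,\psi_0]=\sin\theta_0\big(\ket{\psi^*}\bra{\psi_0}-\ket{\psi_0}\bra{\psi^*}\big)$, so both $e^{\tau\hat{H}_f}$ applied to $\ket{\psi_0}$ and the generator $[\hat{H}_f,\psi_0]$ leave $\mathcal{S}$ invariant. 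The whole problem thus reduces to comparing two curves on the unit circle of $\mathcal{S}$.

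I would first compute the ITE curve. Since $\hat{H}_f^2=\hat{H}_f$, we have $e^{\tau\hat{H}_f}=1+(e^{\tau}-1)\hat{H}_f$, so $e^{\tau\hat{H}_f}\ket{\psi_0}=\cos\theta_0\ket{\psi^*_\perp}+e^{\tau}\sin\theta_0\ket{\psi^*}$; normalizing (the normalization factor is read off directly) gives the ITE state $\cos\theta(\tau)\ket{\psi^*_\perp}+\sin\theta(\tau)\ket{\psi^*}$ with $\tan\theta(\tau)=e^{\tau}\tan\theta_0$. Hence $\tau\mapsto\theta(\tau)$ is a smooth strictly increasing bijection of $[0,\infty)$ onto $[\theta_0,\pi/2)$, and the ITE state sweeps exactly the arc from $\ket{\psi_0}$ to $\ket{\psi^*}$, which in passing also makes Lemma~\ref{lem:solvability} transparent.

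Next I would compute the commutator flow. Expanding $\ket{\psi_0}$ in the displayed identity for $[\hat{H}_f,\psi_0]$ gives $[\hat{H}_f,\psi_0]=\sin\theta_0\cos\theta_0\big(\ket{\psi^*}\bra{\psi^*_\perp}-\ket{\psi^*_\perp}\bra{\psi^*}\big)$, that is, $\tfrac12\sin(2\theta_0)$ times the generator $K$ of rotations within $\mathcal{S}$, which satisfies $K^2=-1$ on $\mathcal{S}$. Exponentiating, $e^{s[\hat{H}_f,\psi_0]}=\cos\phi+\sin\phi\,K$ on $\mathcal{S}$ with $\phi=\tfrac{s}{2}\sin 2\theta_0$, and acting on $\ket{\psi_0}$ yields $e^{s[\hat{H}_f,\psi_0]}\ket{\psi_0}=\cos\!\big(\theta_0+\tfrac{s}{2}\sin 2\theta_0\big)\ket{\psi^*_\perp}+\sin\!\big(\theta_0+\tfrac{s}{2}\sin 2\theta_0\big)\ket{\psi^*}$. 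So the commutator flow is a uniform rotation along the same circle, with angle covering all of $[\theta_0,\infty)$ as $s$ ranges over $[0,\infty)$.

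Finally, since $[\theta_0,\pi/2)\subset[\theta_0,\infty)$, every ITE state is reached: solving $\theta_0+\tfrac{s_\tau}{2}\sin 2\theta_0=\theta(\tau)$ gives $s_\tau=\dfrac{2\big(\arctan(e^{\tau}\tan\theta_0)-\theta_0\big)}{\sin 2\theta_0}\ge 0$, and for this value the two sides of Eq.~\eqref{eq:dbr} coincide; moreover $\tau\mapsto s_\tau$ is a bijection onto $[0,s_*)$ with $s_*=(\pi/2-\theta_0)/(\sin\theta_0\cos\theta_0)$, so the middle expression of Eq.~\eqref{eq:dbr} covers precisely the ITE states (and overshoots them for $s>s_*$). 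The content here is conceptual rather than technical: the key observation is that both flows are confined to $\mathcal{S}$ and trace the same great-circle arc from $\ket{\psi_0}$, after which the matching is just inverting a monotone function; the only place that needs care is the reduction to $\mathcal{S}$ and keeping the normalization straight.
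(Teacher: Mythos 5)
Your proof is correct, and it arrives at an $s_\tau$ that matches the paper's: your $s_\tau = 2\bigl(\arctan(e^{\tau}\tan\theta_0)-\theta_0\bigr)/\sin 2\theta_0$ is the same number as the paper's $\tfrac{1}{\sqrt{V_0}}\arccos\bigl((1+(e^{\tau}-1)E_0)/\sqrt{1+(e^{2\tau}-1)E_0}\bigr)$, since that $\arccos$ is exactly $\theta(\tau)-\theta_0$ and $\sqrt{V_0}=\tfrac12\sin 2\theta_0$. The route is genuinely different in presentation, though the core mechanism (the projector identity $e^{\tau\hat H_f}=I+(e^{\tau}-1)\hat H_f$ plus the fact that $e^{s[\hat H_f,\psi_0]}$ is a rotation in a two-dimensional subspace) is shared. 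The paper reduces the ITE state to a normalized linear polynomial $(I+c\hat H_f)\ket{\psi_0}$ with $c=e^{\tau}-1>0$ and then invokes an external result (Lemma~B.3 of Ref.~\cite{suzuki2025double}, restated in the appendix) as a black box to convert any such state into $e^{s[\hat H_f,\psi_0]}\ket{\psi_0}$; you instead prove everything from scratch by confining both flows to $\mathcal{S}=\mathrm{span}\{\ket{\psi^*},\ket{\psi^*_\perp}\}$ (the standard Grover basis, rather than the paper's $\{\ket{\psi_0},\ket{\psi_0^\perp}\}$), showing $[\hat H_f,\psi_0]=\tfrac12\sin(2\theta_0)K$ with $K^2=-1$ on $\mathcal{S}$, and matching the two angle parametrizations. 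What your version buys is self-containedness and some extra structure for free: the explicit monotone bijection $\tau\mapsto s_\tau$ onto $[0,s_*)$ with $s_*=(\pi/2-\theta_0)/(\sin\theta_0\cos\theta_0)$, which the paper only obtains afterwards by differentiating $s_\tau$, and which also anticipates Theorem~\ref{thm:geodesic_DBR} and the overshooting discussion. The one point to state explicitly (you gesture at it correctly) is that $[\hat H_f,\psi_0]$ annihilates $\mathcal{S}^{\perp}$, so the restriction of the exponential to $\mathcal{S}$ is legitimate when acting on $\ket{\psi_0}\in\mathcal{S}$.
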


In the second step of DBQA, the exponential of the commutator in Eq.~\eqref{eq:dbr} is approximately implemented using product formula techniques~\cite{dawson2006solovay,commutator_approximation_2022,product_formula2013}.
A standard approach is the group commutator
\begin{equation} \label{eq:group_commutator}
     e^{s[\hat{H}_f,\psi_{0}]}=e^{i\sqrt{s}\psi_{0}}e^{i\sqrt{s}\hat{H}_f}e^{-i\sqrt{s}\psi_{0}}e^{-i\sqrt{s}\hat{H}_f} + \mathcal O(s^{3/2})\ .
\end{equation}
This can be systematically generalized to higher-order approximations~\cite{product_formula2013}
\begin{equation}
\begin{split} \label{eq:sequence_gr}
     e^{s[\hat{H}_f,\psi_{0}]}  &=e^{it_{2\mathcal{N}}\psi_{0}}\ldots e^{it_{3}\hat{H}_f}e^{it_{2}\psi_{0}} e^{it_{1}\hat{H}_f}  + \mathcal{O}(s^{m/2}) 
\end{split}
\end{equation}
for certain $m\in\mathbb{Z_{+}}$ and $t_{k}=c_{k}\sqrt{s}$ with angles $\{c_{k}\}_{k=1}^{2\mathcal{N}}$; see App.~\ref{app:product_formula} for specifics in choosing angles.
Remarkably, the two exponentials in Eq.~\eqref{eq:sequence_gr} are exactly oracle operators $U_{f}(t_{k})$ and diffusion operators $D(t_{k})$.
Namely, these approximations of Eq.~\eqref{eq:dbr} naturally lead to the structure of Grover's algorithm up to a global phase.
\begin{corollary}[Grover's algorithm is an approximation of ITE] \label{lem:ITE_to_Grover}
    Let $\{t_i\}_{i=1}^{2\mathcal{N}}$ be parameters of the product formula approximation of the ITE dynamics using $\mathcal{N}$ queries, as shown in Eq.~\eqref{eq:sequence_gr}. Then,
    \begin{equation} \label{eq:ite_grover_approx}
    \begin{split}
    \left\|(-1)^{\mathcal{N}}
    \prod_{k=1}^{\mathcal{N}}G(t_{2k},t_{2k-1})\ket{\psi_{0}} - e^{s[\hat{H}_{f},\psi_{0}]}\ket{\psi_0} \right\|_{2} \le \mathcal O(s^{m/2}).
    \end{split}
    \end{equation}
\end{corollary}

Building on this observation, we further investigate how the ITE formulation offers insight into the design of original Grover’s algorithm and its variants; for instance, choosing angles $\{(\alpha_k, \beta_{k})\}$ for efficient implementation is non-trivial and not specified in Corollary~\ref{lem:ITE_to_Grover}.
More concretely, the formulation allows us to establish the following results: 

\begin{enumerate}
    \item An alternative perspective for the strategy underlying the original Grover' algorithm~\cite{grover1996fast}, i.e., $\alpha_k=\beta_k=\pi$.
    \item A unified framework that recovers $\pi/3$-algorithm~\cite{grover2005fixed} and yields a new fixed-point quantum search distinct from the recent approach of Ref.~\cite{yoder2014fixed}.
    \item A new systematic ITE formulation of widely-used quantum algorithms such as amplitude amplification (AA)~\cite{brassard2002quantum} and oblivious amplitude amplification (OAA)~\cite{berry2014exponential}.
\end{enumerate}

\section{Justifying the strategy of original Grover's algorithm via ITE formulation}
First, the ITE formulation admits a natural justification of the choice of original Grover's algorithm, $\alpha_k=\beta_k=\pi$.
Lemmas~\ref{lem:solvability},~\ref{lem:equiv_DBR_ITE} and Corollary~\ref{lem:ITE_to_Grover} show that unstructured search can be solved by ITE, reducing the unstructured search task to an efficient unitary compilations of ITE dynamics using the elementary gate set~$\{D(\alpha), U_{f}(\beta)\}$.
Concretely, the solution state $\ket{\psi^{*}}$ is obtained by evolving the initial state $\ket{\psi_{0}}$ for the optimal time duration $s^{*}$; i.e., $\ket{\psi^{*}}=e^{s^{*}[\hat{H}_f,\psi_{0}]}\ket{\psi_{0}}$ for 
\begin{equation} \label{eq:opt_s}
        s^{*}=\arccos(\sqrt{E_{0}})/\sqrt{V_{0}},
\end{equation}
where $E_0 = \braket{\psi_0|\hat{H}_f|\psi_0} = M/N$ and $V_0 = \braket{\psi_0|(\hat{H}_f - E_0)^2|\psi_0} = E_0(1 - E_0)$: see App.~\ref{app:geodesic_complexity} for the proof.
Hence, the goal of this task is reduced to approximating the optimal ITE using the minimal number of oracle queries~$\mathcal{N}$, by properly choosing the angles $\{(\alpha_{k},\beta_{k})\}$.

We firstly observe that the original Grover's algorithm, where the angles are chosen as $\alpha_k=\beta_k=\pi$, \textit{exactly} realizes the ITE state $\ket{\psi_{s_{\tau}}}$ in Eq.~\eqref{eq:dbr}.
Moreover, this choice provides a strategy to maximize the evolution achieved in each Grover iteration, particularly when the solution state has small overlap with the initial state.
Specifically, in the single-step approximation represented by Eq.~\eqref{eq:group_commutator}, setting $\sqrt{s}=\pi$ yields the  largest effective evolution time; that is, these angles drive the system the most along the ITE trajectory among all possible choices.
A formal proof is given in App.~\ref{app:rationale_original_Grover}.

\begin{theorem}[Original Grover's algorithm]
\label{thm:justification_orig_Grover}
The original Grover's algorithm generates the state $\ket{\psi_{s_{\tau}}}$ in Eq.~\eqref{eq:dbr}; that is, there exists a parameter $s(\mathcal{N})$ such that 
\begin{align}
    (-1)^{\mathcal{N}}G(\pi,\pi)^{\mathcal{N}}\ket{\psi_{0}} = e^{s(\mathcal{N})[ \hat{H}_f,\psi_{0}]}\ket{\psi_0}.
\end{align}
Moreover, the original Grover algorithm maximizes the fidelity of the first iteration. That is, within the first order approximations in Eq.~\eqref{eq:group_commutator}, corresponding to $\mathcal{N}=2$, the fidelity $F_{2}=|\braket{\psi^{*}|\prod_{k=1}^{2}G(\alpha_k,\beta_k)|\psi_{0}}|^2$ is maximized when $\alpha_k=\beta_k=\pi$, provided $E_{0}\le 1/8$, e.g.,  when $M\ll N$.

\end{theorem}

Theorem~\ref{thm:justification_orig_Grover} supports the choice of the original Grover algorithm~\cite{grover1996fast} from the ITE perspective: this choice realizes the ITE state with equality (which is somewhat unexpected given Eq.~\eqref{eq:ite_grover_approx}) and maximizes the fidelity for small initial overlap and a limited number of iterations $\mathcal{N}$.
However, the same mechanism also explains the possibility of overshooting.
As the fidelity with the solution state increases, repeatedly applying a maximal-step evolution drives the system beyond the optimal ITE state, leading to a different state in the trajectory.

\section{ITE formulation for fixed-point quantum search}
The ITE formulation also sheds light on fixed-point quantum search, which addresses the overshooting problem.

Theorem~\ref{thm:justification_orig_Grover} identifies the origin of overshooting and hence emphasizes the importance of controlling the step size for the ITE approximation.
If the energy $E_{0}=M/N$ is known, this issue can be avoided; App.~\ref{app:optimal_complexity} shows that a suitable choice of $\{(\alpha_{k},\beta_{k})\}$ from a simple product formula approximation yields the solution state with the optimal query scaling $\mathcal{O}(\sqrt{N})$.
Furthermore, Refs.~\cite{long2001Grover,Roy2022Grover} demonstrate more sophisticated tunings that could improve the constant prefactor for the optimal scaling.

However, $E_{0}$ is unknown in practice.
An early remedy of this difficulty is provided by $\pi/3$-algorithm~\cite{grover2005fixed} defined as
\begin{equation} \label{eq:pi/3-alg}
    U_{k+1} = U_{k} D(\theta) U_{k}^{\dagger} U_{f}(\theta) U_{k},
\end{equation}
where $\theta=\pi/3$ and $k$ denotes a recursion step.
It has been shown that one recursive step improves the fidelity from $1-\delta$ to $1-\delta^3$ for infidelity $\delta$, which guarantees monotonic convergence~\cite{grover2005fixed,chakraborty2005bounds,grover2006quantum}.

Our ITE formulation yields two key observations on this approach.
First, the recursive structure of the $\pi/3$-algorithm arises naturally within the double-bracket quantum imaginary-time evolution (DB-QITE) framework~\cite{gluza_DB_QITE_2024}, which implements ITE for general Hamiltonians.
Second, $\pi/3$ is not the only possible choice in angle for achieving monotonic convergence; in fact any angle smaller than $\pi/2$ will suffice to avoid overshooting. 
A detailed analysis is provided in App.~\ref{app:pi/3_algorithm}.

\begin{prop}[ITE formulation for the $\pi/3$-algorithm and the maximal step size with monotonic convergence] \label{prop:pi/3_alg}
The DB-QITE framework in Ref.~\cite{gluza_DB_QITE_2024} reproduces the recursive structure of $\pi/3$-algorithm of Eq.~\eqref{eq:pi/3-alg}.
Moreover, the choice of $\theta=\pi/2$ is the largest step size for the ITE approximation while ensuring monotonic fidelity contraction, independent of the number of marked items $M$.
\end{prop}

From the perspective of ITE dynamics, larger step sizes will improve the rate of convergence.
The analysis confirms this intuition -- picking the maximal step size for monotonic convergence, namely $\pi/2$, leads to faster convergence than $\pi/3$, when $E_0 = M/N$ is small.
In particular, for $E_{0} \ll 1$, the \textit{$\pi/2$-algorithm} yields a superlinear advantage in queries, i.e., $\mathcal{N}\in \mathcal{O}(N^{1/1.46})$, whereas $\pi/3$-algorithm scales as $\mathcal{N}\in \mathcal{O}(N)$.
However, in the regime $E_{0} > 2/3$ where the state is already close to the solution state, $\pi/3$-algorithm is significantly more effective, exhibiting an exponentially faster decay rate with respect to the recursion step $k$. 
Consequently, when a modest failure probability is allowed, $\pi/2$-algorithm can provide faster convergence than $\pi/3$-algorithm; with a termination condition set to a final-state infidelity below 0.1, numerical simulations show that $\pi/2$-algorithm requires no more queries than $\pi/3$-algorithm for any initial state (see App.~\ref{app:pi/3_algorithm}).
To the best of our knowledge, this is the first demonstration that choosing $\pi/2$ improves convergence in the small-$E_{0}$ regime.

\medskip
Although $\pi/3$-algorithm, and our $\pi/2$-algorithm, avoid the overshooting, this robustness comes at the expense of optimal query complexity. 
That is, the quadratic advantage in $N$ is lost, rendering its performance comparable to classical search. 
Another fixed-point quantum algorithm of Ref.~\cite{yoder2014fixed} was then proposed to resolve this limitation while retaining the optimal $\mathcal{O}(\sqrt{N})$ scaling.
This approach employs recursive quasi-Chebyshev polynomial constructions~\cite{yoder2014fixed,li2024revisiting} to determine the angles so that the final state has infidelity at most $\delta$ with respect to the solution state, for any number of marked items~$M$.

The ITE formulation offers an alternative realization of fixed-point quantum search.
In particular, we show that Grover iterations, viewed as the approximation of ITE dynamics, implement quantum signal processing (QSP)~\cite{low2017quantum,low2017optimal,equiangular,motlagh2024generalized}, which realizes polynomial transformations through interleaved sequences of unitaries on a two-dimensional subspace.
Further details on QSP are provided in End Matter and App.~\ref{app:thm_of_qsp_ite}.
This observation implies that techniques developed for QSP naturally enable efficient implementations of ITE.
Consequently, a new fixed-point quantum search emerges as a direct outcome of the QSP construction for ITE.
A complete proof is provided in App.~\ref{app:thm_of_qsp_ite}.

\begin{theorem}[ITE formulation provides a fixed-point search via a QSP framework]\label{thm:qsp_formula_ITE}
The Grover iterations $\prod_{k=1}^{\mathcal{N}}G(\alpha_k,\beta_k)\ket{\psi_{0}} $ implement QSP with a polynomial of degree $K=2\mathcal{N}$. 
In particular, QSP angles designed to approximate the sign function yields a fixed-point quantum search.
%Thus, choosing angles to approximate the sign function as a filter via QSP yields a fixed-point quantum search algorithm.
\end{theorem}

The proof proceeds by showing that the ITE trajectory follows a geodesic, i.e.,  a shortest path connecting the initial and the solution state.
This geodesic is defined in the complex projective space $\mathbb{C}P^{N-1}$ equipped with the Fubini–Study metric~\cite{anandan1990geometry,bengtsson2017geometry,mukunda1993quantum}.
See Apps.~\ref{app:geometry_review} and~\ref{app:geodesic_complexity}  for the details.
A key property of such geodesics is that the dynamics are confined to a two-dimensional subspace, facilitating the analysis and development of the algorithm.
In our setting, this subspace is spanned by $\{\ket{\psi_{0}}, \ket{\psi_{0}^{\perp}}\}$, where 
 \begin{equation} \label{eq:perp_init_state}
        \ket{\psi_{0}^{\perp}}= \frac{[\hat{H}_{f},\psi_{0}]}{\sqrt{V_0}}\ket{\psi_{0}} = \frac{\hat{H}_{f}-E_{0}I}{\sqrt{E_{0}(1-E_{0})}} \ket{\psi_{0}}
\end{equation}
with $\braket{\psi_{0}|\psi_{0}^{\perp}}=0$.
Note that our choice of basis differs from that adopted in most previous studies including Refs.~\cite{nielsen2010quantum,yoder2014fixed,li2024revisiting}, where the relevant subspace is spanned by
$\{\ket{\psi^{*}}, \ket{\psi^{*}_{\perp}}\}$ with $\ket{\psi^{*}_{\perp}}=(\hat{H}_{f}-I)\ket{\psi_{0}}/\sqrt{1-E_{0}}$.
The full ITE dynamics is characterized within the basis and hence enables a direct application of the QSP framework.
With this equivalence, implementing the sign function as a filter yields a new fixed-point quantum search.

\section{Wider applications of the ITE formulation beyond Grover's algorithm}
Lastly, we show the ITE formulation extends beyond Grover’s algorithm and applies to widely-used quantum subroutines.
Consider the task of preparing $V\ket{\phi}$ for a target unitary $V$ and an arbitrary initial state $\ket{\phi}$.
While $V$ is unavailable, we assume access to a unitary $U$ satisfying
\begin{equation}
    U\ket{0}\ket{\phi} = \sqrt{p}\ket{0}V\ket{\phi} + \sqrt{1-p} \ket{1}\ket{\phi'}
    \label{eq OAA def}
\end{equation}
for some $p\in(0,1)$, with $\ket{\phi'}$ an arbitrary state of the same system size. 
The goal is then to prepare the target state $V\ket{\phi}$ using as few queries to $U$ as possible.

One approach to solving this problem is oblivious amplitude amplification (OAA)~\cite{berry2014exponential}, a variant of amplitude amplification (AA)~\cite{brassard2002quantum} that increases the success probability $p$ through a sequence of two types of operators.
By applying the same ITE formulation used to derive Grover’s algorithm and reformulating the problem appropriately, the structure of OAA, $\prod_{k=1}^{\mathcal{N}}  U \tilde{D}(\alpha_{k}) U^{\dagger} \tilde{D}(\beta_{k}) \ket{\psi_0}$ 
with $ \tilde{D}(\theta)=e^{i\theta \ket{0}\bra{0} }\otimes I$, emerges naturally. 
The full proof is provided in App.~\ref{app:thm_oaa}.

\begin{theorem}[OAA is an approximation of ITE] \label{prop:ite_oaa}
    OAA is a product formula approximation of ITE in Eq.~\eqref{eq:sequence_gr} with the Hamiltonian $\hat{H}_{f}=\ket{0}\bra{0} \otimes V \ket{\phi} \bra{\phi} V^{\dagger}$ and the initial state $\ket{\psi_{0}}=U\ket{0}\ket{\phi}$.
\end{theorem}

The proof proceeds by first reproducing AA by choosing $\hat{H}_{f}$ and $\ket{\psi_{0}}$ in Theorem~\ref{prop:ite_oaa}.
This derivation initially yields full reflections of the form $e^{i\theta (\ket{0}\bra{0}\otimes \ket{\phi_{0}}\bra{\phi_{0}})}$ and appears to require direct access to the target unitary $V$.
Yet, the key observation here is that, for any state of the form $\ket{\Phi}=z_{0} \ket{0}\ket{\phi} + z_{1} \ket{1}\ket{\phi'} $ with $z_{0},z_{1} \in \mathbb{C}$, the identity
\begin{equation}
    e^{i\theta (\ket{0}\bra{0}\otimes \ket{\phi}\bra{\phi})} \ket{\Phi} = (e^{i\theta \ket{0}\bra{0}}\otimes I) \ket{\Phi}
\end{equation}
holds. 
Consequently, the partial reflection $\tilde{D}(\theta)$ suffices to realize AA without direct access to $V$.

Theorem~\ref{prop:ite_oaa} therefore demonstrates that the ITE formulation captures OAA, a core subroutine in Hamiltonian simulation~\cite{berry2014exponential,berry2015simulating}, thermal-state preparation~\cite{rall2023thermal}  and related applications~\cite{gilyen2019quantum}. 
Since the same reasoning applies to standard AA, the ITE framework provides a unified perspective on a broader class of quantum algorithms and suggests a systematic route toward the development of new algorithmic constructions.

\section{Discussion}

In the comment piece titled \textit{``Why haven't more quantum algorithms been found?"}~\cite{shor2003haven}, P.~Shor argued ``quantum computers operate in a manner so different from classical computers
that our techniques for designing algorithms and our intuitions for understanding the
process of computation no longer work''. 
Here, however, we show that Grover's algorithm, as well as widely-used quantum subroutine algorithms such as AA and OAA, can be viewed through the well-established lens of ITE.

Furthermore, although this connection has not been explicitly emphasized in the main text, ITE is closely linked to Riemannian optimization as detailed in the End Matter and App.~\ref{app:ITE_and_Riemannian_grad_flow}.
This suggests that Grover's algorithm simply performs Riemannian optimization, a standard classical optimization strategy, but on the manifold of unitaries.
Thus, our result provides a deep geometric implications on the design of quantum algorithms.

In this manner, our results highlight the role of geometric analysis in quantum algorithm design.
As exploited in Theorem~\ref{thm:qsp_formula_ITE}, explicit knowledge of the geodesic enables a reduction in effective dimensionality and thereby simplifies both the design and analysis of quantum algorithms.
Moreover, Apps.~\ref{app:geodesic_complexity} and~\ref{app:optimal_complexity} confirm that the query complexity associated with the geodesic length achieves the known optimal scaling in unstructured search.
These findings reinforce the significance of geometry-inspired approaches in developing quantum algorithms, advocated in Ref.~\cite{nielsen2006quantum}.
While determining geodesics may be challenging in general, our results underscore their relevance for the design and analysis of quantum algorithms via the geometry.

While the optimal query complexity for unstructured search is limited to a quadratic speed-up, ITE in general converges exponentially in $\tau$ to the target state.
This seeming tension could stem from the first-order approximation used to derive the Grover iteration from ITE.
Exploring this difference could clarify the fundamental cost of realizing thermodynamically-inspired approaches in a fully unitary framework. 
Specifically, given the optimality of Grover’s algorithm, it may suggest that implementing ITE through real-time (unitary) evolution may benefit a quadratic advantage at best.
For instance, DB-QITE, the ITE implementation via the DBQA proposed in prior work~\cite{gluza_DB_QITE_2024}, can exhibit exponential scaling; nevertheless, our analysis may help identify potential quadratic advantages in specific setting other than projector Hamiltonians.
A detailed investigation of this behavior would be an interesting direction for future study.

\medskip
\textit{Acknowledgments.}
ZH acknowledges support from the Sandoz Family Foundation-Monique de Meuron program for Academic Promotion.
MG, JS, BH and NN are supported by the start-up grant of the Nanyang Assistant Professorship at the Nanyang Technological University in Singapore. NN also acknowledges the National
Research Foundation, Singapore (W24Q3D0001) and the Tier 1 MOE grant RT1/23.

\bibliographystyle{naturemag}
\bibliography{apssamp}% Produces the bibliography via BibTeX.
%\bibliography{apssamp, quantum}% Produces the bibliography via BibTeX.

%%%%%%%%%%%%%%%%%%%%%%%%%%%%%%%%

%TC:ignore

\section*{End Matter}

\paragraph*{Links between Riemannian Algorithms, ITE, and DBF --} 

Double-bracket flows (DBF)~\cite{bloch1985completely,bloch1990steepest,bloch1992completely,Brockett1991DBF,BLOCH1985103,moore1994numerical,BROCKETT1989761,deift1983ordinary,Chu_iterations,wegner1994flow,wegner2006flow,hastings2022lieb,GlazekWilson,GlazekWilson2,kehrein_flow,smith1993geometric,optimization2012,brockett2005smooth} are matrix-valued ordinary differential equations that have been employed to perform matrix diagonalization, QR decomposition, eigenvalue sorting, and related tasks~\cite{optimization2012,Brockett1991DBF,deift1983ordinary,Chu_iterations}.
R.~Brockett originally introduced this flow in the context of minimizing a least-squares cost between two matrices using steepest descent techniques.

Consider the set of matrices $$\mathcal M(A) = \{U AU^\dagger\text{ s.t. } U^{-1}=U^\dagger\}$$ generated by unitary evolution of a Hermitian operator $A$ and the least-squares cost $ C_{B}(X)=-\frac{1}{2}\|X-B\|^2_\text{HS}$
with $X\in \mathcal{M}(A)$ and a Hermitian $B$.
Here, $\|\cdot\|_\text{HS}$ denotes the Hilbert-Schmidt norm.
Then, the  Riemannian gradient (i.e., the steepest gradient) of the cost $C_{B}$ at $X$ satisfies two properties: the tangency condition and the compatibility condition.
In short, these conditions state that the Riemannian gradient lies within the tangent space (tangency), and that its inner product with any tangent vector $\Omega$ equals the directional derivative of $C_{B}$ along $\Omega$ (compatibility); for further details, see App.~\ref{app:ITE_and_Riemannian_grad_flow}.
In this case, the Riemannian gradient is given by~\cite{optimization2012,riemannianflowPhysRevA.107.062421}
\begin{align} \label{eq: grad_riemannian_em}
\text{grad}_X C_B(X) = [[X,B],X].
\end{align}
Thus, the steepest descent flow defined as the trajectory $A(t)\in \mathcal{M}(A)$ satisfying Eq.~\eqref{eq: grad_riemannian_em} leads to the differential equation 
\begin{align}\label{eq: BrockettODE_em}
   \frac{\partial A(t)}{\partial t} = [[A(t), B],A(t)] \,
\end{align}
known as the DBF.
By properly choosing the matrix $B$, the flow has been utilized for the tasks mentioned above; see Refs.~\cite{bloch1985completely,bloch1990steepest,bloch1992completely,Brockett1991DBF,BLOCH1985103,moore1994numerical,BROCKETT1989761,deift1983ordinary,Chu_iterations,wegner1994flow,wegner2006flow,hastings2022lieb,GlazekWilson,GlazekWilson2,kehrein_flow,smith1993geometric,optimization2012,brockett2005smooth} for other applications of DBF.

A form of the DBF can naturally emerges from ITE.
By differentiating ITE, 
\begin{equation}
    \ket{\Psi(\tau)} = \frac{e^{-\tau \hat{H}}\ket{\Psi(0)}}{\|e^{-\tau \hat{H}}\ket{\Psi(0)}\|}, \, 
\end{equation}
with respect to $\tau$, we obtain 
\begin{equation}
    \frac{\partial \Psi(\tau)}{\partial \tau} = [[\Psi(\tau), \hat{H}],\Psi(\tau)], \, \Psi(\tau)=|\Psi(\tau)\rangle\langle\Psi(\tau)|,
\end{equation}
which matches the DBF in Eq.~\eqref{eq: BrockettODE_em}.
Since DBF is the steepest-descent trajectory of the least-squares cost, ITE also follows the same Riemannian gradient flow.
More specifically, for a set of pure states $\mathcal{M}(\Psi)=\{U\Psi U^{\dagger}\text{ s.t. } U^{-1}=U^\dagger\}$ with a positive semidefinite $\Psi$ satisfying $\Psi^{2}=\Psi,\, \mathrm{Tr}[\Psi]=1$, the cost function can be written as
\begin{equation}
     C_{\hat{H}}(\Psi(\tau)) = -\frac{1}{2}\|\Psi(\tau)-\hat{H}\|_{\text{HS}}^2 = \mathrm{Tr}[\Psi(\tau)\hat{H}] + \text{const.},
\end{equation}
implying that ITE follows the Riemannian gradient flow associated with energy minimization.
Since Grover's algorithm can be reformulated via ITE, the Grover's iterations implements the approximation of the associated Riemannian gradient flow.
Further discussion is provided in App.~\ref{app:ITE_and_Riemannian_grad_flow}.

\medskip
\paragraph*{ITE formulation for fixed-point quantum search via QSP --}

We show how our ITE formulation leads to a new implementation of the fixed-point quantum search algorithm~\cite{yoder2014fixed}.
While the original approach determines the required angles using recursive quasi-Chebyshev polynomials~\cite{yoder2014fixed,li2024revisiting}, our approach instead builds on the observation that the Grover iteration performs quantum signal processing (QSP)~\cite{low2017quantum,low2017optimal,equiangular,motlagh2024generalized}.

QSP is a well-established technique for realizing polynomial transformations of a scalar input through interleaved sequences of parameterized unitaries on a two-dimensional subspace.
Specifically, given a signal processing operator $S_{Z}(\phi)=diag(e^{i\phi},e^{-i\phi})$ and a signal operator
\begin{equation}
     R(x)=\begin{pmatrix}x & \sqrt{1-x^2}  \\  \sqrt{1-x^2} & -x\end{pmatrix},
\end{equation}
a QSP sequence applied to $\ket{0}=(1,0)^{T}$, with an appropriately chosen set of angles $\{\phi_{k}\}_{k=0}^{K}$, outputs a polynomial 
\begin{equation} \label{eq:qsp_conv_main}
    p_{QSP}(x)=\bra{0}S_{Z}(\phi_{K}) \prod_{k=0}^{K-1} R(x)S_{Z}(\phi_{k})\ket{0}
\end{equation}
satisfying the conditions in App.~\ref{app:qsp_overview}.
We refer to this construction as ($R(x)$, $S_{Z}$, $\ket{0}$)-QSP. 
Since the existence of phases for a wide class of achievable functions in various QSP settings has been rigorously established, QSP underpins many modern quantum algorithms~\cite{martyn2021grand,low2017quantum,low2017optimal,equiangular,motlagh2024generalized}; see App.~\ref{app:qsp_overview} for details.

We show that Grover iterations implement the ($R(x)$, $S_{Z}$, $\ket{0}$)-QSP in the two-dimensional subspace spanned by the initial and the solution states.
Most previous studies including Refs.~\cite{nielsen2010quantum,yoder2014fixed,li2024revisiting} consider a subspace spanned by $\{\ket{\psi^{*}}, \ket{\psi^{*}_{\perp}}\}$ with $\ket{\psi^{*}_{\perp}}=(\hat{H}_{f}-I)\ket{\psi_{0}}/\sqrt{1-E_{0}}$.
We instead adopt the basis $\{\ket{\psi_0}, \ket{\psi_0^\perp}\}$ defined by the initial state $\ket{\psi_{0}}$ and its orthogonal counterpart $ \ket{\psi_0^\perp}$ given in Eq.~\eqref{eq:perp_init_state}.
This basis is used to analyze the geodesic property of Grover's algorithm in App.~\ref{app:optimal_complexity}.
In this basis, the diffusion and oracle operations take the form
\begin{align}
    &D(\alpha)=  e^{i\alpha\psi_{0}}  = e^{i\alpha/2}S_{Z}(\alpha/2) \\
    &U_{f}(\beta)=e^{i\beta \hat{H}_{f}} = e^{i\beta/2} R(\sqrt{E_{0}}) S_{Z}(\beta/2) R(\sqrt{E_{0}}),
\end{align}
respectively. 
This leads to Theorem~\ref{thm:qsp_formula_ITE} proven in Appendix~\ref{app:thm_of_qsp_ite}; i.e., 
\begin{equation} \label{eq:dbr_qsp_em}
\begin{split}
    \prod_{k=1}^{\mathcal{N}}G(\alpha_k,\beta_k)\ket{\psi_{0}} = S_{Z}(\phi_{2\mathcal{N}}) \prod_{k=0}^{2\mathcal{N}-1} R(\sqrt{E_{0}}) S_{Z}(\phi_{k}) \ket{0},
\end{split}
\end{equation}
where $\phi_{0}=\mathcal{N}\pi+\sum_{l=1}^{N}(\alpha_l+\beta_{l})/2$, $\phi_{2l-1}=\beta_{l}/2$ and $\phi_{2l}=\alpha_{l}/2$ for $l=1,\ldots,\mathcal{N}$.

In the $\{\ket{\psi_{0}}, \ket{\psi_{0}^{\perp}}\}$ basis, the ITE state of Eq.~\eqref{eq:dbr} 
takes the simple form $ \ket{\psi_{s}} = (\cos(s\sqrt{V_{0}}),  \sin(s\sqrt{V_{0}}))^{T}$ with $V_{0}=E_{0}(1-E_{0})$; see the proof in App.~\ref{app:geodesic_complexity}.
It then follows that, if the function $\cos(sx\sqrt{1-x^2})$ can be realized by ($R(x)$, $S_{Z}$, $\ket{0}$)-QSP, the ITE state can be realized \textit{without prior knowledge of $E_{0}$}.
This is because the task reduces to tailoring QSP angles that approximate the desired polynomial.
We show in App.~\ref{app:thm_of_qsp_ite} that the target function meets the achievability conditions by the QSP convention. 
Thus the Jacobi-Anger expansion, which has been exploited for the efficient construction of trigonometric functions in the QSP community~\cite{gilyen2019quantum,martyn2021grand}, can be utilized.
Therefore, for sufficiently large $\mathcal{N}$, QSP may approximate the ITE state effectively.

We further use Theorem~\ref{thm:qsp_formula_ITE} to derive a new set of angles for a fixed-point algorithm.
Concretely, we realize the solution state without information on $E_0$, even though the optimal time duration $s^{*}$ in Eq.~\eqref{eq:opt_s} depends on it.
We note that applying $R(\sqrt{E_{0}})$ to the initial state directly prepares the solution state $\ket{\psi^{*}}=(\sqrt{E_{0}},\sqrt{1-E_{0}})^{T}$.
However, we cannot apply a single signal operator alone with the Grover iterations, as the oracle consists of $K=2\mathcal{N}$ signal operators.
Yet, if we find QSP sequence $W$ of length $K=2\mathcal{N}-1$ such that $W\ket{0}=\ket{0}$, the Grover iteration with $\mathcal{N}$ (i.e., $K=2\mathcal{N}$ in the QSP convention)
\begin{equation}
    \prod_{k=1}^{\mathcal{N}}G(\alpha_k,\beta_k)\ket{\psi_{0}}=S_{Z}(\phi_{2\mathcal{N}})R(\sqrt{E_{0}})W\ket{0}
\end{equation}
can prepare $\ket{\psi^{*}}$ by setting $\phi_{2\mathcal{N}}=0$.

The remaining step is to identify a function that QSP can implement to realize $W\ket{0}=\ket{0}$. 
The key requirement is $p_{QSP}(\sqrt{E_0}) = \langle 0| W | 0 \rangle = 1$, while the values of $ p_{QSP}(x) $ for negative $x$ are irrelevant since $\sqrt{E_0}$ is always positive. 
Also, as $W$ involves an odd number ($2\mathcal{N}-1$) of applications of  the signal operator, the function $p_{QSP}(x)$ must be odd.
A natural choice is the sign function and hence we consider $p_{QSP}(x) = \text{sgn}(x)$.

The sign function can be efficiently approximated via a polynomial function of degree $d\in\mathcal{O}(1/\sqrt{E_{0}}\log(1/\epsilon))$ with error $\epsilon$ and thus this approach achieves the optimal scaling in $N$ (see App.~\ref{app:thm_of_qsp_ite}).
However, the sign function does not satisfy the conditions for achievable functions by ($R(x)$, $S_{Z}$, $\ket{0}$)-QSP. 
Nevertheless, Ref.~\cite{martyn2021grand} shows that introducing a small imaginary part into the QSP phases allows the sign function to be approximated. 
Although this might sacrifice the optimal scaling with $N$ as shown in App.~\ref{app:numerics}, we obtain a scaling similar to the standard fixed-point Grover algorithm.

%%%%%%%%%%%%%%%%%%%%%%%%%%%%%%%%

\clearpage
\newpage
\onecolumngrid
\part{Appendix}
\parttoc 

\appendix

\newtheorem{theoremA}{Theorem}[section]
\newtheorem{propositionA}[theoremA]{Proposition}
\newtheorem{lemmaA}[theoremA]{Lemma}
\newtheorem{definitionA}[theoremA]{Definition}
\newtheorem{corollaryA}[theoremA]{Corollary}
\newtheorem{remarkA}[theoremA]{Remark}

\renewcommand{\thetheoremA}{\Alph{section}.\arabic{theoremA}}

\newpage

\section{A quick overview of Grover's algorithm} \label{app:related_work}

We begin by briefly reviewing the setup of Grover’s algorithm.
Given a search space $\mathcal{X}\in\{0,1,\ldots,N-1\}$, the goal of unstructured search is to identify $M$ target items for which a binary function $f(x)\in\{0,1\}$ satisfies $f(x)=1$.
The algorithm starts by preparing a uniform superposition over all $N = 2^n$ computational basis states by applying the Hadamard gate $H$ to each of the $n$ qubits;
\begin{equation} \label{app_eq:init}
    \ket{\psi_{0}} = H^{\otimes n}\ket{0^{\otimes n}} = \frac{1}{\sqrt{N}} \sum_{x=0}^{N-1} \ket{x}.
\end{equation}
The algorithm then proceeds by repeatedly applying two key operations: the diffusion operator $D(\alpha)$ and the oracle operator $U_f(\beta)$ with $\alpha,\beta\in\mathbb{R}$ defined as
\begin{eqnarray}
    U_{f}(\beta) &= e^{i\beta \hat{H}_{f}}=I-(1-e^{i\beta})\hat{H}_{f},\\
    D(\alpha) &= e^{i\alpha \psi_{0}}=I-(1-e^{i\alpha})\psi_{0},\label{eq:diffusion_op}
\end{eqnarray}
respectively.
Here, $\psi_{0}=|\psi_{0}\rangle\langle\psi_{0}|$ and 
\begin{equation} \label{app_eq:projector}
    \hat{H}_f=\sum_{x\in\{x|f(x)=1\}} |x\rangle\langle x|
\end{equation}
is the projector onto the subspace of $M$ marked states.
Through $\mathcal{N}$ repeated applications of $G(\alpha_k,\beta_k) = -D(\alpha_k) U_f(\beta_k)$ to the initial state, we obtain a final state $\prod_{k=1}^{\mathcal{N}}G(\alpha_k,\beta_k)\ket{\psi_{0}},$ which is intended to approximate the solution state
\begin{equation} \label{app_eq:solution}
\ket{\psi^*} = \frac{1}{\sqrt{M}} \sum_{x\in\{x|f(x) = 1\}} \ket{x}.
\end{equation}
The original work in Ref.~\cite{grover1996fast} uses $\alpha_k = \beta_k = \pi$ and shows that the query complexity defined as the number of calls $\mathcal{N}$ to the oracle operator $U_{f}$ is optimal~\cite{nielsen2010quantum}.

Following the introduction of Grover’s algorithm~\cite{grover1996fast}, numerous works have attempted to characterize its computational capabilities.
A key line of research has rigorously established its asymptotic optimality in terms of query complexity~\cite{bennett1997strengths,beals1998tight,zalka1999grover,nielsen2010quantum}.
In parallel, other work has addressed the so-called souffl\'{e} problem~\cite{brassard1997searching}, in which repeated applications of Grover’s iteration can overshoot the optimal solution; this makes it difficult to determine the correct number of iterations without prior knowledge of the number of marked items.
The $\pi/3$-algorithm~\cite{grover2005fixed}, where the angles are set to
\begin{equation}
    \alpha_k=\beta_k=\frac{\pi}{3},
\end{equation}
mitigates overshooting but sacrifices the optimal quadratic speed-up. 
To overcome this trade-off, the fixed-point search algorithm was introduced to alleviate overshooting while retaining optimal query complexity~\cite{yoder2014fixed}.
The key point of the fixed-point algorithm is to utilize recursive quasi-Chebyshev polynomials~\cite{yoder2014fixed,li2024revisiting} for generating phase angles that guarantee monotonic convergence; the angles are given by
\begin{equation}
\begin{split}
    \alpha_{k}&=\beta_{\mathcal{N}-k+1}=-\cot^{-1}\left(\tan\left(\frac{2\pi k}{\mathcal{N}}\right)\sqrt{1-\frac{1}{\gamma^2}}\right),
\end{split}
\end{equation}
where $\gamma=T_{1/\mathcal{N}}(1/\delta)$ with the Chebyshev polynomials of the first kind $T_{d}(\cdot)$ of the degree $d$ for a desired precision of the final fidelity, that is, $F\ge1-\delta^2$.
This approach is especially practical, as it ensures reliable performance without requiring prior knowledge of the initial overlap, provided the overlap exceeds a certain threshold.
The aforementioned algorithms generally do not guarantee unit success probability, Refs.~\cite{long2001Grover,Roy2022Grover} present deterministic approaches to the solution, i.e., algorithms with phase angles that achieve zero error in the final state.
Since unstructured search is a fundamental computational task, the discovery of Grover’s algorithm has broadly impacted not only quantum computing, but also various applications such as optimization~\cite{durr1996quantum,gilliam2021grover,baritompa2005grover}, machine learning~\cite{dong2008quantum,du2021grover,muser2024provable} and cryptography~\cite{brassard1998quantum,bernstein2009cost}.

There have also been efforts to understand the power of Grover’s algorithm in terms of its geometric structure~\cite{miyake2001geometric,cafaro2012grover}.
Ref.~\cite{miyake2001geometric} considers the complex projective space to investigate the entanglement through the Grover's algorithm, while Ref.~\cite{cafaro2012grover} introduces Wigner-Yanase quantum information metric to recast the unstructured search task in the framework of information geometry.

Previous studies have also explored quantum search algorithms from the perspective of Hamiltonian simulation~\cite{nielsen2010quantum,grover2001schrodinger,farhi1998analog,lin2019application,carlini2006time}. 
For instance, it has been shown that real-time evolution $e^{it\tilde{H}}$ under the Hamiltonian $\tilde{H}=|\psi^{*}\rangle\langle\psi^{*}| + |\psi_0\rangle\langle\psi_0|$ can drive the initial state $\ket{\psi_0}$ toward the target solution state $\ket{\psi^{*}}$.
We emphasize that, while these results represent important early attempts to understand unstructured search in terms of Hamiltonian dynamics, they do not provide a clear explanation for the origin of Grover’s circuit structure or its connection to ITE, as established in our results.
In addition, classical analogs have been explored to further investigate the power of quantum search algorithms~\cite{grover2002classical}.

Beyond this, the conceptual framework of Grover’s algorithm has also been generalized to amplitude amplification~\cite{brassard2002quantum}, a fundamental technique that underpins many advanced quantum algorithms. 
This broader framework is further unified by the quantum singular value transformation formalism~\cite{gilyen2019quantum,martyn2021grand}. 
As such, the principles underlying Grover’s algorithm form a cornerstone of modern quantum algorithm design.
This indicates that understanding its mathematical structure plays a core role in both advancing theoretical insights and guiding the practical development of quantum algorithms.

\section{Preliminaries}
\subsection{Interplay between double-bracket flow, Riemannian gradient flow and ITE} \label{app:ITE_and_Riemannian_grad_flow}

We give an overview showing that the dynamics of imaginary-time evolution (ITE) corresponds to the steepest descent direction of a least-squares cost function defined on a Riemannian manifold.
To establish this, we proceed in two steps.
First, we demonstrate that the double-bracket flow (DBF)~\cite{bloch1985completely,bloch1990steepest,bloch1992completely,Brockett1991DBF,BLOCH1985103,moore1994numerical,BROCKETT1989761,deift1983ordinary,Chu_iterations,wegner1994flow,wegner2006flow,hastings2022lieb,GlazekWilson,GlazekWilson2,kehrein_flow,smith1993geometric,optimization2012,brockett2005smooth} characterizes the Riemannian gradient flow of a least-squares cost function.
Second, we show that ITE naturally takes the form of a DBF.

We begin with a brief overview of DBF.
In general, DBFs are matrix-valued ordinary differential equations that have been employed to perform matrix diagonalization, QR decomposition, eigenvalue sorting, and related tasks~\cite{optimization2012,Brockett1991DBF,deift1983ordinary,Chu_iterations}.
Brockett originally introduced this flow in the context of minimizing a least-squares cost between two matrices using steepest descent techniques.
Consider the set of matrices 
\begin{equation} \label{app_eq_manifold_for_DBF}
    \mathcal M(A) = \{U AU^\dagger\text{ s.t. } U^{-1}=U^\dagger\}
\end{equation}
generated by unitary evolution of a Hermitian operator $A$ and the least-squares cost $ C_{A,B}(U)=-\frac{1}{2}\|U A U^\dagger-B\|^2_\text{HS}$
with a Hermitian $B$ and the Hilbert-Schmidt norm $\|\cdot\|_\text{HS}$.
Let $X=UAU^{\dagger}$ be a point on the manifold $\mathcal M(A) $ and define the cost as
\begin{equation} \label{app_eq:cost_least}
    C_{B}(X)=-\frac{1}{2}\|X-B\|^2_\text{HS}.
\end{equation}
Then, the Riemannian gradient of the cost $C_{B}$ at $X$ is given by~\cite{optimization2012,riemannianflowPhysRevA.107.062421}
\begin{align} \label{eq: grad_riemannian}
\text{grad}_X C_B(X) = [[X,B],X].
\end{align}
We note that the matrix $B$ should be chosen properly depending on the task at hand.
For example, if $B$ is a diagonal matrix with entries $diag(1,2,3,\ldots)$, then the DBF in Eq.~\eqref{eq: grad_riemannian} drives the matrix $X$ towards a diagonal matrix whose entries are the eigenvalues of $X$ sorted in ascending order~\cite{Brockett1991DBF}.
By varying the diagonal entries of $B$, the order in which the eigenvalues appear in the diagonalized form of $X$ can be controlled. 
See Refs.~\cite{bloch1985completely,bloch1990steepest,bloch1992completely,Brockett1991DBF,BLOCH1985103,moore1994numerical,BROCKETT1989761,deift1983ordinary,Chu_iterations,wegner1994flow,wegner2006flow,hastings2022lieb,GlazekWilson,GlazekWilson2,kehrein_flow,smith1993geometric,optimization2012,brockett2005smooth} for other applications of DBF.

%%%%%%%%%%%%%%%%%%%%%%%% Figure %%%%%%%%%%%%%%%%%%%%%%%%

\begin{figure}[t]
\centering
\begin{tikzpicture}
\definecolor{lightgray}{HTML}{F4F4F4}
\definecolor{littlelightgray}{HTML}{ecececff}
\definecolor{pale}{HTML}{7ca3d4ff}
\definecolor{lightred}{HTML}{d8a2a2}
\node[anchor=center] (russell) at (-12,-2.3)
{\centering\includegraphics[width=0.3\textwidth]{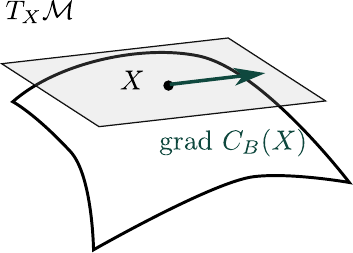}};
\node[text width=1cm] at (-15.0,0.5){(a)};
\node[anchor=center] (russell) at (-3.2,-2.3)
{\centering\includegraphics[width=0.60\textwidth]{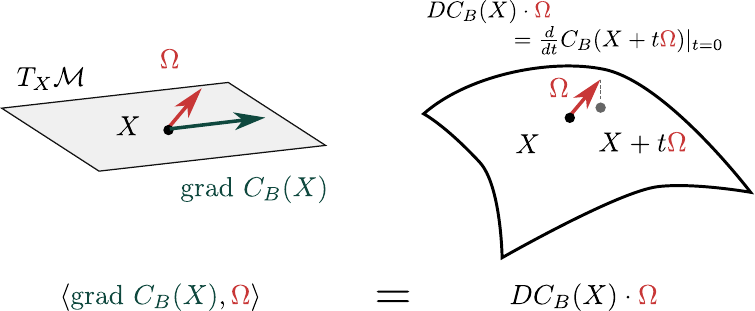}};
\node[text width=1cm] at (-8.5,0.5){(b)};\end{tikzpicture}
\caption{\textbf{Two conditions that characterize the Riemannian gradient.} The Riemannian gradient is uniquely characterized by two conditions: (a) the tangency condition and (b) the compatibility condition.
}
\label{fig:riemannian_conditions}
\end{figure}

%%%%%%%%%%%%%%%%%%%%%%%% Figure %%%%%%%%%%%%%%%%%%%%%%%%

Now, we elaborate on how Eq.~\eqref{eq: grad_riemannian} is derived.
A key point of the derivation is two properties that uniquely characterize the Riemannian gradient on manifolds: 
\begin{itemize}
    \item \textbf{Tangency condition}: the Riemannian gradient of a smooth function $C_{B}$ at point $X\in\mathcal{M}(A)$ lies in the tangent space $T_{X}\mathcal{M}(A)$ of the manifold $\mathcal{M}(A)$,
    $$\text{grad } C_{B}(X)\in T_{X}\mathcal{M}(A),\, \forall X \in \mathcal{M}(A).$$
    This property guarantees that the Riemannian gradient must be an element of the tangent space $T_{X}\mathcal{M}(A)$ of the manifold.
    
    \item \textbf{Compatibility condition}: given the differential $DC_{B}\in \cup _{X\in\mathcal{M}(A)}T_{X}^{*}\mathcal{M}(A)$ (i.e., the differential lives in a section of the cotangent bundle over $\mathcal{M}(A)$) defined as $DC_{B}(X): T_{X}\mathcal{M}(A) \to \mathbb{R}$ and the Riemannian metric $\langle\cdot,\cdot\rangle$, the inner product of the Riemannian gradient and any tangent vector $\Omega \in T_{X}\mathcal{M}(A)$ in the tangent space is consistent with the directional derivative of $C_{B}$ at $X$ denoted as $DC_{B}|_{X}(\Omega)=DC_{B}(X) \cdot \Omega$, 
    $$DC_{B}(X)\cdot \Omega=\langle\text{grad } C_{B}(X), \Omega\rangle, \, \forall\Omega\in T_{X}\mathcal{M}(A).$$
    Intuitively, this condition states that the rate of change of the cost function $C_{B}$ at the point $X$ in the direction of any tangent vector $\Omega\in T_{X}\mathcal{M}(A)$ (i.e., the directional derivative) is equal to how well the Riemannian gradient aligns with the given tangent vector $\Omega$ in terms of the defined Riemannian metric. 
\end{itemize}
See Fig.~\ref{fig:riemannian_conditions} for schematic views on the two conditions. 
In our setting, the tangent space of $\mathcal{M}(A)$ at $X\in\mathcal{M}(A)$ is given by
\begin{equation}
    T_{X}\mathcal{M}(A)=\{[X,\xi]|\xi^{\dagger}=-\xi\}.
\end{equation}
See Lemma 1.3 of Ref.~\cite{optimization2012} for the proof.
Then, since the differential is given by  $DC_{B}|_{X}([X,\Omega])=-\mathrm{Tr}[[X,B]^{\dagger}\Omega]$, the compatibility condition indicates that 
\begin{equation} \label{app_eq:comp_specific}
\begin{split}
    -\mathrm{Tr}[[X,B]^{\dagger}\Omega]&=\langle\text{grad } C(X), \Omega\rangle 
    = \langle [X,\xi],[X,\Omega] \rangle 
    = \mathrm{Tr}[\xi^{\dagger} \Omega],
\end{split}
\end{equation}
where we use the normal Riemannian metric defined as $\langle[X,\Omega_{1}],[X,\Omega_{2}] \rangle=\mathrm{Tr}[\Omega_{1}^{\dagger} \Omega_{2}]$ for $[X,\Omega_{1}],[X,\Omega_{2}]\in T_{X}\mathcal{M}(A) $.
As a result, Eq.~\eqref{app_eq:comp_specific} reveals that $ \xi = [B,X]$.
Thus, we arrive at Eq.~\eqref{eq: grad_riemannian}: see Refs.~\cite{optimization2012,riemannianflowPhysRevA.107.062421} for a detailed derivation.
This suggests that the right-hand side of Eq.~\eqref{eq: grad_riemannian} represents the steepest descent direction of the cost function on the Riemannian manifold $\mathcal M(A) $. 
The steepest descent flow is defined as the trajectory $A(t)\in \mathcal{M}(A)$ satisfying
\begin{align}
    \frac{\partial A(t)}{\partial t} = \text{grad}_{A(t)}C_B(A(t)),
\end{align}
which leads to the differential equation
\begin{align}\label{eq: BrockettODE}
   \frac{\partial A(t)}{\partial t} = [[A(t), B],A(t)]\ 
\end{align}
known as the DBF.

Next, we show that ITE takes the form of a DBF.
For clarity, we recall the definition of ITE.
Given a target Hamiltonian $\hat{H}$ and an initial state $\Psi(0)$, the ITE state can be written as
\begin{equation} \label{app_ep:ite_recap}
    \ket{\Psi(\tau)} = \frac{e^{-\tau \hat{H}}\ket{\Psi(0)}}{\|e^{-\tau \hat{H}}\ket{\Psi(0)}\|}
\end{equation}
for $\tau \in \mathbb{R}$.
Then, by taking the derivative of Eq.~\eqref{app_ep:ite_recap}, we have
\begin{equation} \label{eq:ITE_statevec}
\begin{split}
    \frac{\partial \ket{\Psi(\tau)}}{\partial \tau} 
    &= \frac{ -\hat{H} e^{-\tau \hat{H}} {| \Psi(0) \rangle} }{\|e^{-\tau \hat{H}}\ket{\Psi(0)}\|} + \frac{ e^{-\tau \hat{H}} {| \Psi(0) \rangle} \cdot \left(  - \frac{\partial}{\partial\tau}\|e^{-\tau \hat{H}}\ket{\Psi(0)}\|  \right ) }{\|e^{-\tau \hat{H}}\ket{\Psi(0)}\|^2} \\
    &= -\hat{H} |\Psi(\tau) \rangle + \frac{ \langle \Psi(0)| \hat{H} e^{-2\tau \hat{H}} |\Psi(0)\rangle }{\|e^{-\tau \hat{H}}\ket{\Psi(0)}\|^2 } |\Psi(\tau) \rangle  \\
    &= -\hat{H} |\Psi(\tau) \rangle + \langle \Psi(\tau) | \hat{H} | \Psi(\tau) \rangle |\Psi(\tau) \rangle\
    \\ &=   - (\hat{H}-E(\tau)I)\ket{\Psi(\tau)} \\
    &= - [\hat{H}, \Psi(\tau)]\ket{\Psi(\tau)}
\end{split}
\end{equation}
with $E(\tau)=\braket{\Psi(\tau)|\hat{H}|\Psi(\tau)}$ and $\|e^{-\tau \hat{H}}\ket{\Psi(0)}\|=\sqrt{\braket{\Psi(0)|e^{-2\tau \hat{H}}|\Psi(0)}}$.
Finally, the density matrix representation of Eq.~\eqref{eq:ITE_statevec} using $\Psi(\tau)=|\Psi(\tau)\rangle\langle\Psi(\tau)|$ reveals that
\begin{equation}
    \frac{\partial \Psi(\tau)}{\partial \tau} = [[\Psi(\tau), \hat{H}],\Psi(\tau)],
\end{equation}
which is the exact form of DBF, as shown in Eq.~\eqref{eq: BrockettODE}.
Namely, ITE can be interpreted as the DBF with $A=\Psi(\tau)$ and $B=\hat{H}$ after setting $\Psi(0)=\ket{\psi_0}\bra{\psi_0}$.
Since DBF represents the steepest descent direction of the cost in Eq.~\eqref{app_eq:cost_least}, ITE corresponds to the Riemannian gradient flow of the same least-squares cost.
We can verify this by rewriting the cost function as
\begin{equation} \label{app_eq:cost_ite_least}
   \begin{split}
        C_{\hat{H}}(\Psi(\tau)) &= -\frac{1}{2}\|\Psi(\tau)-\hat{H}\|_{\text{HS}}^2 \\
        &= \underbrace{-\frac{1}{2}\|\Psi(\tau)\|_{\text{HS}}^2 -\frac{1}{2}\|\hat{H}\|_{\text{HS}}^2}_{\text{const.}} + \underbrace{\mathrm{Tr}[\Psi(\tau)\hat{H}]}_{=E(\tau)}.
   \end{split}
\end{equation}
Thus, Eq.~\eqref{app_eq:cost_ite_least} shows that minimizing the least-squares cost is equivalent to minimizing the energy expectation value $E(\tau)$, which aligns with the objective of ITE, that is, ground-state preparation.

Lastly, we elaborate on the case for the unstructured search task.
As shown in Lemma~\ref{lem:solvability}, the ITE state for unstructured search is given by
\begin{equation} \label{app_eq:ITE_state_Grover}
    \ket{\Phi({\tau})}=\frac{e^{\tau \hat{H}_{f}}}{\sqrt{\braket{\psi_{0}|e^{2\tau \hat{H}_{f}}|\psi_{0}}}} \ket{\psi_0},
\end{equation}
where $\hat{H}_{f}=\sum_{x\in\{x|f(x)=1\}} |x\rangle\langle x|$.
In this case, using $\Phi({\tau})=\ket{\Phi({\tau})}\bra{\Phi({\tau})}$ and $\Phi(0)=\ket{\psi_{0}}\bra{\psi_{0}}$, its continuous differential equation form is expressed as
\begin{equation} \label{app_eq:ITE_diff_equation}
    \frac{\partial \Phi({\tau})}{\partial \tau} =\left[\left[\hat{H}_{f}, \Phi({\tau})\right], \Phi({\tau}) \right],
\end{equation}
which corresponds to the minimization of the cost
\begin{equation} \label{eq:cost_for_grover}
    C_{\hat{H}_{f}}(\Psi) = \frac{1}{2} \| \hat{H}_{f}-\Psi \|_\text{HS}^2.
\end{equation}
This is consistent with the goal of unstructured search that aims to identify the items satisfying $f(x)=1$; that is, to find the eigenvectors corresponding to the largest eigenvalues.

We note a subtle sign difference between the cost functions in Eq.~\eqref{app_eq:cost_ite_least} and Eq.~\eqref{eq:cost_for_grover}.
This discrepancy reflects the fact that ITE typically targets the ground state (the eigenstate with the smallest eigenvalue), whereas unstructured search seeks the largest-eigenvalue states.
Nevertheless, both objectives can be expressed within the same formalism.
By rewriting the cost function as
\begin{equation}
    \tilde{C}_{\hat{H}_{f}}(\Psi) = -\frac{1}{2}  \| (-\hat{H}_{f})-\Psi \|_\text{HS}^2,
\end{equation}
we effectively transform the task of maximization into a minimization problem similar to the ITE formulation for the ground-state preparation with $\hat{H}=-\hat{H}_{f}$.

\subsection{Review of product formula for commutators} \label{app:product_formula}

We here give an overview of known product formulae for exponentials of commutators, following Refs.~\cite{dawson2006solovay,commutator_approximation_2022,product_formula2013}.

First, we provide the second-order formula known as group commutator formula:
\begin{equation} \label{app_eq:gci_general}
    e^{ sA}e^{ sB}e^{- sA}e^{- sB}=e^{s^2[A,B]} + \mathcal{O}(s^3).
\end{equation}
Using the identity $[A,B]=[iB,iA]$, we can also obtain the following approximation
\begin{equation} 
    e^{ isB}e^{ isA}e^{- isB}e^{- isA}=e^{s^2[A,B]} + \mathcal{O}(s^3),
\end{equation}
whereas $[A,B]=[-iA,iB]$ leads to
\begin{equation} 
    e^{ -isA}e^{ isB}e^{ isA}e^{ -isB}=e^{s^2[A,B]} + \mathcal{O}(s^3),
\end{equation}

Next, we recall the third-order formula for the commutator:
\begin{equation}
    e^{\phi sA}e^{\phi sB}e^{- sA}e^{-(\phi+1) sB} e^{(1-\phi) sA}e^{sB}=e^{s^2[A,B]} + \mathcal{O}(s^4)
\end{equation}
with $\phi=(\sqrt{5}-1)/2$.

Moreover, we can improve the order in recursive manners.
Given a $n$-th order formula $f_{n}(x)=\exp(s^2[A,B]+\mathcal{O}(s^{n+1}))$, the following recursive formulae are known:
\begin{itemize}
    \item 2 copies formula (for even $n=2k$)
    \begin{equation}
    \begin{split}
        f_{n+1}(s)&=\exp(s^2[A,B]+\mathcal{O}(s^{n+2}))\\
        &=f_{n}(s/\sqrt{2})f_{n}(-s/\sqrt{2})\\
    \end{split}
    \end{equation}
    \item The Jean-Koseleff formula
    \begin{equation}
    \begin{split}
        f_{n+1}(s)&=\exp(s^2[A,B]+\mathcal{O}(s^{n+2}))\\
        &= f_{n}(ts)f_{n}(ws)f_{n}(ts) \text{ (if $n$ is even)} \\
        &= f_{n}(us)f_{n}(vs)^{-1}f_{n}(us) \text{ (if $n$ is odd)} \\
    \end{split}
    \end{equation}
    with $t=(2+2^{2/(n+1)})^{-1/2}$, $w=-2^{1/(n+1)}t$, $u=(2-2^{2/(n+1)})^{-1/2}$ and $v=2^{1/(n+1)}u$.\\
    \item 5-copies formula
    \begin{equation}
    \begin{split}
        f_{n+1}(s)&=\exp(s^2[A,B]+\mathcal{O}(s^{n+2}))\\
        &= f_{n}(\nu s)^2f_{n}(\mu s)^{-1}f_{n}(\nu s)^2
    \end{split}
    \end{equation}
    with $\mu=(4\sigma)^{1/2}$, $\nu=(1/4+\sigma)^{1/2}$ and $\sigma=4^{2/(n+1)}/(4(4-4^{2/(n+1)}))$.\\
\end{itemize}
See Ref.~\cite{kuperberg2023breaking} for an example of how alternative product formulae, originally developed in the context of pure mathematics~\cite{elkasapy2015length,elkasapy2016new}, have been applied. 
As a related aside, Ref.~\cite{peetz2025hamiltonian} provides an example of randomized compilation techniques employed in a similar context.
Finally, see Ref.~\cite{lai2025grover} for the discussion of product formulas in the context of retractions for Riemannian manifolds~\cite{absil2008optimization}.

By combining the techniques mentioned above, we can construct higher-order approximations of exponentials of commutators. These approximations generally take the form
\begin{equation}
    e^{it_1 A}e^{it_2 B}e^{it_3 A}e^{it_4 B}\ldots = e^{s^2[A,B]}  + \mathcal{O}(s^{m}) 
\end{equation}
with appropriately chosen coefficients $\{t_{i}\}$ to achieve an approximation of order $m$.

Lastly, we give an example to demonstrate the link between the approximation of ITE of Eq.~\eqref{eq:dbr} and the existing algorithms, especially the original Grover's algorithm.
As a simple approach, we utilize the group commutator formula in Eq.~\eqref{app_eq:gci_general} together with the fragmentation for the approximation;
\begin{equation}
    e^{s[\hat{H}_{f},\ket{\psi_{0}}\bra{\psi_{0}}]} \approx \left(e^{i\sqrt{2s/\mathcal{N}}\ket{\psi_{0}}\bra{\psi_{0}}}e^{i\sqrt{2s/\mathcal{N}}\hat{H}_{f}}e^{-i\sqrt{2s/\mathcal{N}}\ket{\psi_{0}}\bra{\psi_{0}}}e^{-i\sqrt{2s/\mathcal{N}}\hat{H}_{f}}\right)^{\mathcal{N}/2}.
\end{equation}
In the original Grover's algorithm~\cite{grover1996fast}, the phase angles are $\alpha_{k}=\beta_{k}=\pi$ for all $k$.
Observing that  $$e^{i\pi \hat{H}_{f}}=I+(e^{i\pi}-1)\hat{H}_{f}=I+(e^{-i\pi}-1)\hat{H}_{f}=e^{-i\pi \hat{H}_{f}},$$ $$e^{i\pi \ket{\psi_{0}}\bra{\psi_{0}}}=I+(e^{i\pi}-1)\ket{\psi_{0}}\bra{\psi_{0}}=I+(e^{-i\pi}-1)\ket{\psi_{0}}\bra{\psi_{0}}=e^{-i\pi \ket{\psi_{0}}\bra{\psi_{0}}},$$ we can choose $\sqrt{2s/\mathcal{N}}=\pi$ to match the structure of the original algorithm.
This observation implies that Grover’s algorithm can be interpreted as an approximation to the unitary generated by the commutator, i.e., $e^{s[\hat{H}_{f},\ket{\psi_{0}}\bra{\psi_{0}}]}$ with $s=\pi^2 \mathcal{N}/2$.

Given these large angles, one might expect that the product formula approximation might not work.
Nevertheless, the approximation technique works for some cases of unstructured search.
Ref.~\cite{double_bracket2024} provides a general bound for the exponential of commutators;
\begin{equation}\label{eq:GCI_compilation_}
\begin{split}
    &\left\|  e^{i\sqrt{s}\Psi}e^{i\sqrt{s}\hat{H}}    
    e^{-i\sqrt{s}\Psi}
    e^{-i\sqrt{s}\hat{H}}- e^{s[\hat{H},\Psi]} \right\|_{\text{op}} \leq   s^{3/2} \bigg( \|[\hat{H}, [\hat{H}, \Psi]]\|_{\text{op}} + \|[\Psi, [\Psi, \hat{H}]]  \|_{\text{op}}\bigg)
\end{split}
\end{equation}
where $s\ge0$, $\Psi$ is the density matrix and $\hat{H}$ denotes an arbitrary Hermitian matrix.
Here, $\|\cdot\|_{\text{op}}$ represents the operator norm.
Applying this to our case, $s=\pi^2$, $\Psi=|\psi_{0}\rangle\langle\psi_{0}|$ and $\hat{H}=\hat{H}_{f}$ given $\mathcal{N}=2$, we obtain the bound
\begin{align}
  \|e^{ i\pi \psi_0}e^{ i\pi \hat{H}_f }e^{ -i\pi \psi_0}e^{- i\pi \hat{H}_f }-e^{\pi^2[\hat{H}_f,\psi_0]}\| &\le 2\pi^3\|[\hat{H}_f,\psi_0]  \|\\
   &\le 4\pi^3\sqrt{V_0},
\end{align}
where $V_{0}=\braket{\psi_{0}|(\hat{H}_{f}-E_{0})^2|\psi_{0}}=E_{0}(1-E_{0})$ and $E_{0}=M/N$.
For further details of the calculation, see App.~\ref{app:optimal_complexity}.
This demonstrates that the variance $V_{0}$ fully governs the error of the product formula.
Thus, when the number of target items $M$ is small compared to the total number $N$ (i.e., $V_0 = \Theta(1/N)$), which is the case of primary interest, the error remains small and the product formula can be highly accurate.
On the other hand, when $M$ is comparable to $N$, i.e., $V_{0}\in \Theta(1)$, the error grows accordingly and hence the approximation fails.
This behavior suggests a potential for overshooting: as the state approaches the solution, the error introduced by the approximation increases, potentially preventing convergence to the solution state.

\subsection{Overview on Geometry of Manifolds} \label{app:geometry_review}

In this section, we give an overview of the relationships among the manifolds discussed in the main text: the set of matrices $\mathcal{M}(A)$ in Eq.~\eqref{app_eq_manifold_for_DBF} generated by unitary evolution of a Hermitian operator $A$, the special unitary group $SU(N)$, the complex projective spaces $\mathbb{C}\mathcal{P}$ and $\mathbb{C}\mathcal{P}^{N-1}$.
Our discussion presented below follows Refs.~\cite{optimization2012,bengtsson2017geometry,kobayashi1996foundations}. 

\medskip
\medskip

\textbf{Link between the set $\mathcal{M}(A)$ and the special unitary group $SU(N)$ -- }
We first clarify the relationship between the set $\mathcal{M}(A)$ and the special unitary group $SU(N)$.
Here, we follow the discussion on Ref.~\cite{optimization2012}.
Let us recall their definitions:
$$U(N)=\{U\in\mathbb{C}^{N\times N}|UU^{\dagger}=I\},$$
$$\mathcal{M}(A) = \{U AU^\dagger| U\in U(N)\},$$
$$SU(N)=\{U\in U(N)| \det{(U)}=1\}.$$
Since $e^{i\theta}UAU^{\dagger}e^{-i\theta}=UAU^{\dagger}$, we have the equality $\{U AU^\dagger| U\in U(N)\}=\{U AU^\dagger| U\in SU(N)\}$.
Then, $\mathcal{M}(A)$ is an \textit{orbit} space of the group action of $SU(N)$ on $\mathbb{C}^{N\times N}$.
In differential geometry, the orbit space of $\sigma:G\times \mathcal{M} \to \mathcal{M}$ with a compact Lie group $G$ and a smooth manifold $\mathcal{M}$, is generally defined as the set of all equivalence classes of $\mathcal{M}$; namely $\mathcal{M}/G \equiv \{O(x)|x\in\mathcal{M}\}$ with $O(x)=\{g\cdot x|g\in G\}$ the orbit of $x\in \mathcal{M}$.
Suppose the smooth Lie group action $\sigma: SU(N) \times \mathbb{C}^{N\times N} \to \mathbb{C}^{N\times N}$ defined as $\sigma_{H}(U)=UHU^{\dagger}, \, H\in\mathcal{M}(A)$.
Then, we can easily see that $\mathcal{M}(A)$ is an \textit{orbit} of the group action $\sigma$.
Roughly speaking, $\mathcal{M}(A)$ describes the quotient of the space by the group action of $SU(N)$.
We note that the link is used to show the set $\mathcal{M}(A)$ is a smooth and compact manifold.
$\mathcal{M}(A)$ is a \textit{homogeneous} space, meaning there exists a transitive group action: for any $H_{1},H_{2}\in \mathcal{M}(A)$, there exists $U\in SU(N)$ such that $H_2=UH_1U^{\dagger}$.
It is well known that orbits of a compact Lie group acting on a manifold are themselves smooth, compact submanifolds. 
Since $SU(N)$ is a compact Lie group, $\mathcal{M}(A)$ is indeed a smooth and compact manifold.

We also recall that the stabilizer subgroup $\text{Stab}(N) \subset SU(N)$ of $A\in \mathbb{C}^{N\times N}$ is defined as
$$\text{Stab}(N)=\{U\in SU(N)|UAU^{\dagger}=A\}.$$
This subgroup consists of elements that leave $A$ invariant under the action.
With the subgroup, the orbit $\mathcal{M}(A)$ is diffeomorphic to the homogeneous space, i.e.,
$$\mathcal{M}(A)\cong SU(N)/\text{Stab}(A).$$
Namely, there exists a continuous differentiable map $f:\mathcal{M}(A)\to SU(N)/\text{Stab}(A)$ that is a bijection and its inverse is also differentiable.

Furthermore, we discuss the connection between these manifolds in terms of their tangent spaces.
Since the map $\sigma$ is a submersion (see e.g., App.~C.6 of Ref.~\cite{optimization2012} for its definition), its differential at the identity induces a surjective linear map on tangent spaces. 
Note that the tangent space of $SU(N)$ at the identity $I$ is the Lie algebra: $\mathfrak{su}(N)=T_{I}SU(N)=\{\Omega\in\mathbb{C}^{N\times N}|\Omega^{\dagger}=-\Omega,\ \mathrm{Tr}[\Omega]=0\}$.
Then, the derivative of $\sigma_{H}(U)=UHU^{\dagger}$ at $I$ is the surjective linear map $D\sigma_{H}|_{I}: T_{I}SU(N) \to T_{H}\mathcal{M}(A)$ for $H\in \mathcal{M}$ defined as $$D\sigma_{H}|_{I}(\Omega)= \Omega H - H\Omega  = [\Omega,H].$$
This suggests that elements in the tangent space of the set $\mathcal{M}(A)$ at $H\in \mathcal{M}(A)$ can be obtained from those in the tangent space of $SU(N)$ at the identity.
Using the expression together with the tangency and compatibility conditions, the Riemannian gradient flow in App.~\ref{app:ITE_and_Riemannian_grad_flow} is derived~\cite{optimization2012}.
Similarly, Ref.~\cite{riemannianflowPhysRevA.107.062421} derives the Riemannian gradient flow on $SU(d)$ in the context of the ground-state preparation.
Their result coincides with the formulation given in Ref.~\cite{gluza_DB_QITE_2024}, where the manifold $\mathcal{M}(A)$ is explicitly considered for the same ground-state preparation task.

\medskip
\medskip
\textbf{Relation of set $\mathcal{M}(A)$ and the special unitary group $SU(N)$ to the complex projective space $\mathbb{C}P^{N-1}$ -- }
We now argue how the manifolds discussed above relate to the complex projective space.
Let $\mathbb{C}^{N}$ be the complex vector space of dimension $N$.
The complex projective space $\mathbb{C}P^{N-1}$ is defined as the set of equivalence class of non-zero vectors in $\mathbb{C}^{N}$, where two vectors are considered equivalent if these vectors differ by a non-zero complex scalar factor.
Formally, $\mathbb{C}P^{N-1}=(\mathbb{C}^{N}\setminus\{0\})/\sim$, where $$y\sim x \Longleftrightarrow y = \lambda x$$ for some $\lambda \in \mathbb{C}\setminus\{0\} $.
The manifold naturally arises in the context of quantum mechanics, as pure quantum states are defined only up to a global phase. 
That is, the state vectors $\ket{\psi}$ and $e^{i\theta}\ket{\psi}$ are physically indistinguishable for $\theta\in\mathbb{R}$.
Consequently, complex projective space plays a central role in analyzing the geometric structures in quantum mechanics~\cite{bengtsson2017geometry}.

The complex projective space admits the diffeomorphisms~\cite{kobayashi1996foundations}, i.e., $\mathbb{C}P^{N-1} \cong U(N)/(U(1)\times U(N-1)) \cong SU(N)/S(U(1) \times U(N-1))$, with the subgroup $S(U(1)\times U(N-1))=\{(e^{i\theta},V)\in U(1)\times U(N-1)|e^{i\theta} \det(V)=1 \}$. 
This establishes a direct geometric link between complex projective space and quotient spaces of both the unitary and special unitary groups.
Moreover, when the Hermitian operator $A$ is the density matrix of a pure state (a rank-one Hermitian projector with unit trace) as is the case in ground-state preparation, the set $\mathcal{M}(A)$ is diffeomorphic to $\mathbb{C}P^{N-1}$ since $\text{Stab}(A)=S(U(1)\times U(N-1))$.

\medskip
\medskip

\textbf{Geodesics on $\mathbb{C}P^{N-1}$ and its link to $\mathbb{C}P^{1}$ -- }
Lastly, we mention the connection between $\mathbb{C}P^{N-1}$ and $\mathbb{C}P^{1}$ to ensure that ITE follows a geodesic in $\mathbb{C}P^{N-1}$, as proven in App.~\ref{app:optimal_complexity}.
Equipped with the Fubini-Study metric, the distance on the complex projective space for two states $\ket{\psi}$, $\ket{\phi}$ in complex projective space $\mathbb{C}P^{N-1}$ is given by
\begin{equation} \label{app_eq:fs_dist}
    d_{\mathrm{FS}}(\ket{\psi},\ket{\phi}) = \arccos(|\braket{\psi|\phi}|).
\end{equation}
In this case, it is known that the geodesic in $\mathbb{C}P^{N-1}$ between two points lies within a submanifold isomorphic to $\mathbb{C}P^{1}$ known as a complex projective line~\cite{bengtsson2017geometry}.
Intuitive understanding of this result is as follows.
From Eq.~\eqref{app_eq:fs_dist}, we observe that the Fubini–Study distance between two pure states depends only on angle of their overlap; that is, any states lying outside two-dimensional complex subspace spanned by them do not affect the overlap and only increases the path length.
Since a geodesic minimizes the distance locally, it must lie entirely within the span of the two states.
Fig.~\ref{fig:geodesic_ex} illustrates the geodesic in the case of a single-qubit system, i.e., the geodesic on the Bloch sphere (the complex projective space $\mathbb{C}P^{1}$) as an example.
This $\mathbb{C}P^{1}$ plays a role analogous to a ``great circle" on a sphere, capturing the minimal-length path in the curved geometry of projective space.
As shown later, the ITE dynamics for unctructured search represents the unitary evolution within a two-dimensional subspace of $\mathbb{C}^{N}$, which projects down to a $\mathbb{C}P^{1}$ in $\mathbb{C}P^{N-1}$.
Since the target state lies along this path, the evolution indeed follows a geodesic in the complex projective space.

%%%%%%%%%%%%%%%%%%%%%%% Figure %%%%%%%%%%%%%%%%%%%%%%%
\begin{figure}[t]
\centering
\begin{tikzpicture}
\node[anchor=center] (russell) at (0.,0.0){\centering\includegraphics[width=0.25\textwidth]{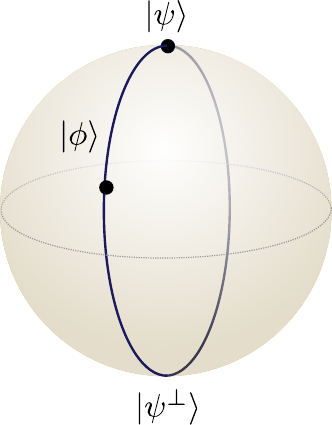}};
\end{tikzpicture}
\caption{\textbf{Schematic view of a geodesic on the Bloch sphere.}
Given two pure states \( \ket{\phi}, \ket{\psi} \in \mathbb{C}^2 \), the geodesic in the complex projective space equipped with the Fubini–Study metric traces a great circle on the Bloch sphere.
Here, \( \ket{\psi^\perp} \) denotes a state orthogonal to \( \ket{\psi} \), i.e., \( \braket{\psi | \psi^\perp} = 0 \).
More generally, in \( \mathbb{C}P^{N-1} \), the geodesic always lies within a two-dimensional subspace spanned by two quantum states.
}
\label{fig:geodesic_ex}
\end{figure}
%%%%%%%%%%%%%%%%%%%%%%% Figure %%%%%%%%%%%%%%%%%%%%%%%

\subsection{Overview of Quantum Signal Processing} \label{app:qsp_overview}
Here, we elaborate on quantum signal processing (QSP)~\cite{equiangular}. 
In the original work, a quantum circuit $U_{QSP}$ was introduced with a sequential structure comprising of two types of operators:
\textit{signal operators} ${W}$ and \textit{signal processing operators} ${S}(\phi)$, where the phase $\phi$ is drawn from a set ${\phi_k}$.
The desired polynomial transformation is then obtained by performing a measurement in the so-called \textit{signal basis}.
Consider a degree-$K$ polynomial of a scalar input $x\in[-1,1]$.
Then, it was demonstrated that there exists a sequence of QSP phases $\{\phi_{k}\}$ such that the following circuit
\begin{equation} \label{app eq:qsp_conv_uni}
U_{QSP} = S_{z}(\phi_{K}) \prod_{k=0}^{K-1} W(x)S_{z}(\phi_{k})
\end{equation}
with the operators \vspace{-0.15cm}
\begin{equation*}
    W(x) = e^{ixX/2} = \begin{pmatrix}
        x & i\sqrt{1-x^2} \\
        i\sqrt{1-x^2} & x \\
\end{pmatrix},
\end{equation*}
\begin{equation}
    S_{z}(\phi) = e^{i\phi Z} = \begin{pmatrix}
        e^{i\phi} & 0 \\
        0 & e^{-i\phi} \\
\end{pmatrix},
\end{equation}
followed by measurement in the computational basis $M=\{\ket{0},\ket{1}\}$ can realize any polynomial $p(x) \in\mathbb{C}[x]$ of degree-$K$ , provided that
\begin{enumerate}
\item Degree of $p(x)$ is equal to or less than $K$,
\item $p(x)$ has a parity $K$ mod 2,
\item $\forall x\in[-1,1]$, $|p(x)|\le1$,
\item $\forall x\in(-\infty,-1] \cup [1,\infty)$, $|p(x)|\ge1$,
\item if $K$ is even, then $\forall x\in\mathbb{R}$, $|p(ix)p^*(ix)|\ge1$.
\end{enumerate}
We refer to the setting as ($R(x)$, $S_{Z}$, $\ket{0}$)-QSP.
When the target polynomial is real, choosing the basis $M=\{\ket{+},\ket{-}\}$ allows the function to be implemented by considering the first three of the above five conditions only.
The ($R(x)$, $S_{Z}$, $\ket{+}$)-QSP convention is often preferred in practice, as it allows for a broader class of achievable functions compared to the ($R(x)$, $S_{Z}$, $\ket{0}$)-QSP setting.

Equivalently, we can use the notation
\begin{equation}
     R(x)=\begin{pmatrix}x & \sqrt{1-x^2}  \\  \sqrt{1-x^2} & -x\end{pmatrix}
\end{equation}
to have the equality
\begin{equation}
    S_{z}(\phi_{K}) \prod_{k=0}^{K-1} W(x)S_{z}(\phi_k) = S_{z}(\phi'_{K}) \prod_{k=0}^{K-1} R(x)S_{z}(\phi'_k)
\end{equation}
with the relationship $R(x)=-ie^{i\pi Z/4}W(x)e^{i\pi Z/4}$ and hence $\phi'_{K}=\phi_{K}+(2K-1)\pi/4$, $\phi'_{0}=\phi_{0}-\pi/4$ and $\phi'_{k}=\phi_{k}-\pi/2$ for $k=1,\ldots,K-1$.
See Ref.~\cite{martyn2021grand,low2017quantum,low2017optimal,equiangular,motlagh2024generalized} for more details.

\section{Proofs in the Main Texts and Relevant Findings}
\subsection{Proof of Lemma~\ref{lem:solvability}} \label{app:proof_of_solvability}

For completeness, we provide the statement from the main text again.
\begin{lemmaA}[ITE solves the unstructured search problem] \label{app_lem:solvability}
     Given the projector Hamiltonian $\hat{H}_f$ in Eq.~\eqref{app_eq:projector} and the initial state in Eq.~\eqref{app_eq:init}, the ITE state 
     \begin{equation} \label{app_eq:ITE_state_for_Grover}
         \ket{\psi(\tau)}=\frac{e^{\tau \hat{H}_{f}}\ket{\psi_0}}{\|e^{\tau \hat{H}_{f}}\ket{\psi_0}\|_2}
     \end{equation}
     with $\|e^{\tau \hat{H}_{f}}\ket{\psi_0}\|_2=\sqrt{\braket{\psi_{0}|e^{2\tau H_{f}}|\psi_{0}}}$ converges to the solution state in Eq.~\eqref{app_eq:solution} as $\tau\to\infty$, i.e., 
    \begin{equation} \label{app_eq:grover_solution}
         \lim_{\tau\to\infty} \frac{e^{\tau \hat{H}_{f}}\ket{\psi_0}}{\|e^{\tau \hat{H}_{f}}\ket{\psi_0}\|_2}=\ket{\psi^{*}}.
\end{equation}
\end{lemmaA}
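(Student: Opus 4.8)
The plan is to diagonalize the problem in the two-dimensional subspace that actually matters. Since $\hat H_f$ is a projector with $\hat H_f^2=\hat H_f$, its only eigenvalues are $0$ and $1$, so $e^{\tau\hat H_f}$ acts as the identity on the kernel and multiplies the range by $e^{\tau}$. Concretely, I would write $e^{\tau\hat H_f}=I+(e^\tau-1)\hat H_f$, which follows immediately from the spectral decomposition (or from expanding the exponential and using $\hat H_f^2=\hat H_f$).

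Next I would apply this to the initial state. Decompose $\ket{\psi_0}=\sqrt{E_0}\,\ket{\psi^*}+\sqrt{1-E_0}\,\ket{\psi^{*\perp}}$, where $E_0=M/N=\braket{\psi_0|\hat H_f|\psi_0}$, the state $\ket{\psi^*}$ from Eq.~\eqref{app_eq:solution} spans the marked subspace component (indeed $\hat H_f\ket{\psi_0}=\frac1{\sqrt N}\sum_{x:f(x)=1}\ket{x}=\sqrt{M/N}\,\ket{\psi^*}=\sqrt{E_0}\ket{\psi^*}$), and $\ket{\psi^{*\perp}}\propto(\hat H_f-I)\ket{\psi_0}$ lies in the kernel of $\hat H_f$. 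Then
\begin{equation}
e^{\tau\hat H_f}\ket{\psi_0}=e^{\tau}\sqrt{E_0}\,\ket{\psi^*}+\sqrt{1-E_0}\,\ket{\psi^{*\perp}},
\end{equation}
so the squared norm is $\braket{\psi_0|e^{2\tau\hat H_f}|\psi_0}=e^{2\tau}E_0+(1-E_0)$.

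Finally I would divide and take the limit. The normalized state is
\begin{equation}
\ket{\psi(\tau)}=\frac{e^{\tau}\sqrt{E_0}\,\ket{\psi^*}+\sqrt{1-E_0}\,\ket{\psi^{*\perp}}}{\sqrt{e^{2\tau}E_0+(1-E_0)}},
\end{equation}
and dividing numerator and denominator by $e^\tau$ gives coefficients $\sqrt{E_0}/\sqrt{E_0+e^{-2\tau}(1-E_0)}\to 1$ and $e^{-\tau}\sqrt{1-E_0}/\sqrt{E_0+e^{-2\tau}(1-E_0)}\to 0$ as $\tau\to\infty$, using that $E_0=M/N>0$ since there is at least one marked item. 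Hence $\ket{\psi(\tau)}\to\ket{\psi^*}$, which is Eq.~\eqref{app_eq:grover_solution}.

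There is essentially no hard step here; the only things to be careful about are the hypothesis $E_0>0$ (needed so the limit is well defined — this is the ``non-zero overlap with the ground state'' condition specialized to this Hamiltonian) and tracking the normalization factor consistently. If one wanted to be fully rigorous about convergence in projective space one could instead bound $d_{\mathrm{FS}}(\ket{\psi(\tau)},\ket{\psi^*})=\arccos(|\braket{\psi(\tau)|\psi^*}|)$ and show it tends to $0$, but the explicit coefficient computation above already makes the convergence manifest.
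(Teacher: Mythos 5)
Your proof is correct and follows essentially the same route as the paper's: both rest on the projector identity $e^{\tau\hat H_f}=I+(e^{\tau}-1)\hat H_f$, followed by dividing through by $e^{\tau}$ and taking the limit $\tau\to\infty$, with the caveat $E_0>0$ noted in both. The only cosmetic difference is that you carry out the computation in the explicit basis $\{\ket{\psi^*},\ket{\psi^{*\perp}}\}$, whereas the paper works directly at the operator level and reads off $\hat H_f\ket{\psi_0}/\sqrt{\braket{\psi_0|\hat H_f|\psi_0}}=\ket{\psi^*}$ at the end.
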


\begin{proof}
    By taking the limit $\tau \to \infty$ of Eq.~\eqref{app_eq:ITE_state_for_Grover}, we have
\begin{equation}
\begin{split}
    \lim_{\tau\to\infty} \frac{e^{\tau \hat{H}_{f}}}{\sqrt{\braket{\psi_{0}|e^{2\tau \hat{H}_{f}}|\psi_{0}}}} \ket{\psi_0} &= \lim_{\tau\to\infty} \frac{I+(e^{\tau}-1)\hat{H}_f}{\sqrt{\braket{\psi_{0}|I+(e^{2\tau}-1)\hat{H}_f|\psi_{0}}}} \ket{\psi_0} \\
    &= \lim_{\tau\to\infty} \frac{I/e^{\tau}+(1-1/e^{\tau})\hat{H}_f}{\sqrt{\braket{\psi_{0}|I/e^{2\tau}+(1-1/e^{2\tau})\hat{H}_f|\psi_{0}}}} \ket{\psi_0} \\
    &= \frac{\hat{H}_f\ket{\psi_{0}}}{\sqrt{\braket{\psi_{0}|\hat{H}_f|\psi_{0}}}} \\
    &= \frac{1}{\sqrt{M}}\sum_{x\in\{x|f(x)=1\}} \ket{x} = \ket{\psi^{*}},
\end{split} 
\end{equation}
where we use the equality
\begin{equation} \label{eq:poject_ite_equivalence}
    e^{\tau \hat{H}_f} = I+(e^{\tau}-1)\hat{H}_f.
\end{equation}
in the first equality.
This identity holds because $\hat{H}_f$ is a projector, i.e., $\hat{H}_f^2=\hat{H}_f$.
\end{proof}
Note that the convergence to the solution state $\ket{\psi^{*}}$ is guaranteed when the initial state includes all the answer states with a uniform probability; otherwise, some solutions excluded from the initial state can not be obtained at the final step; for example, if the initial state is $\ket{0}$ given $f(0)=1$, the resultant state is different from the solution state, i.e., $\lim_{\tau\to\infty}\ket{\psi(\tau)}=\ket{0}$.

\subsection{Proof of Lemma~\ref{lem:equiv_DBR_ITE}} \label{app:DBR_ITE_relation}

For clarity, we restate the result from the main text.
\begin{lemmaA}[[ITE can be realized by its first-order approximation]
\label{app_lem:equiv_DBR_ITE} 
Let $\hat{H}_{f}$ be the projector Hamiltonian in Eq.~\eqref{app_eq:projector}.
Then, for any ITE evolution time $\tau$, there exists a time duration $s_{\tau}$ such that 
\begin{equation} \label{app_eq:ite_dbr_equiv}
    \frac{e^{\tau \hat{H}_f}\ket{\psi_0}}{\|e^{\tau \hat{H}_f}\ket{\psi_0}\|} =e^{s_\tau[ \hat{H}_f,\psi_{0}]}\ket{\psi_0}
\end{equation}
for any $\tau$.
\end{lemmaA}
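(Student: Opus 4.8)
The plan is to work in the two-dimensional subspace spanned by $\ket{\psi_0}$ and the marked subspace. Since $\hat H_f$ is a projector, the vectors $\hat H_f\ket{\psi_0}$ and $(I-\hat H_f)\ket{\psi_0}$ are orthogonal, and $\ket{\psi_0} = \hat H_f\ket{\psi_0} + (I-\hat H_f)\ket{\psi_0}$. Normalizing these gives exactly $\ket{\psi^*}$ (with coefficient $\sqrt{E_0}$) and $\ket{\psi_0^\perp{}'} := (I-\hat H_f)\ket{\psi_0}/\sqrt{1-E_0}$ (with coefficient $\sqrt{1-E_0}$), where $E_0 = \braket{\psi_0|\hat H_f|\psi_0} = M/N$. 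So $\ket{\psi_0} = \sqrt{E_0}\,\ket{\psi^*} + \sqrt{1-E_0}\,\ket{\psi_0^\perp{}'}$, and the whole ITE trajectory lives in $\mathrm{span}\{\ket{\psi^*},\ket{\psi_0^\perp{}'}\}$.

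First I would compute the left-hand side of Eq.~\eqref{app_eq:ite_dbr_equiv} explicitly. Using $e^{\tau\hat H_f} = I + (e^\tau-1)\hat H_f$ (the projector identity from Eq.~\eqref{eq:poject_ite_equivalence}), we get $e^{\tau\hat H_f}\ket{\psi_0} = \sqrt{1-E_0}\,\ket{\psi_0^\perp{}'} + e^\tau\sqrt{E_0}\,\ket{\psi^*}$, with squared norm $1-E_0 + e^{2\tau}E_0$. Hence the normalized ITE state is
\begin{equation}
\ket{\psi(\tau)} = \frac{\sqrt{1-E_0}\,\ket{\psi_0^\perp{}'} + e^\tau\sqrt{E_0}\,\ket{\psi^*}}{\sqrt{1-E_0+e^{2\tau}E_0}} =: \cos\theta(\tau)\,\ket{\psi_0} + \sin\theta(\tau)\,\ket{\psi_0^\perp}
\end{equation}
for a suitable angle $\theta(\tau)$, where I would recognize the right-hand side as a rotation in the $\{\ket{\psi_0},\ket{\psi_0^\perp}\}$ plane (here $\ket{\psi_0^\perp}$ is the orthonormal partner of $\ket{\psi_0}$ inside the two-dimensional span, which coincides with the state in Eq.~\eqref{eq:perp_init_state}). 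Concretely, $\theta(\tau)$ interpolates monotonically from $0$ at $\tau=0$ to the angle $\arccos(\sqrt{E_0})$ as $\tau\to\infty$, so it sweeps exactly the interval $[0,\arccos\sqrt{E_0})$.

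Next I would compute the right-hand side. The operator $[\hat H_f,\psi_0]$ is anti-Hermitian and, restricted to the two-dimensional invariant subspace $\mathrm{span}\{\ket{\psi_0},\ket{\psi_0^\perp}\}$, acts (up to the scalar $\sqrt{V_0}$, with $V_0 = E_0(1-E_0)$) as the generator of rotations in that plane: indeed $[\hat H_f,\psi_0]\ket{\psi_0} = \sqrt{V_0}\,\ket{\psi_0^\perp}$ and $[\hat H_f,\psi_0]\ket{\psi_0^\perp} = -\sqrt{V_0}\,\ket{\psi_0}$, while it annihilates the orthogonal complement. Therefore $e^{s[\hat H_f,\psi_0]}\ket{\psi_0} = \cos(s\sqrt{V_0})\,\ket{\psi_0} + \sin(s\sqrt{V_0})\,\ket{\psi_0^\perp}$, which is exactly Eq.~\eqref{eq: psi s} of Theorem~\ref{thm:geodesic_DBR}. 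Matching the two expressions, it suffices to choose $s_\tau$ so that $s_\tau\sqrt{V_0} = \theta(\tau)$, i.e. $s_\tau = \theta(\tau)/\sqrt{V_0}$; since $\theta$ is continuous and increasing on $[0,\infty)$ with range $[0,\arccos\sqrt{E_0})$, such an $s_\tau$ exists (and is unique) for every $\tau$. I expect the only mild subtlety — the ``main obstacle'' — to be bookkeeping the phases/signs of the basis vectors so that the rotation angles on the two sides align with the same orientation, and (as in Lemma~\ref{app_lem:solvability}) noting the degenerate case $E_0\in\{0,1\}$ where $V_0=0$ and the statement is vacuous or trivial; everything else is a direct two-by-two computation.
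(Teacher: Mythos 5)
Your proof is correct, but it takes a genuinely different route from the paper's. The paper's proof is a one-line reduction: using the projector identity $e^{\tau\hat H_f}=I+(e^\tau-1)\hat H_f$, it writes the ITE state as $(I+c\hat H_f)\ket{\psi_0}/\|\cdot\|$ with $c=e^\tau-1>0$ and then invokes an external result (Lemma~B.3 of Ref.~\cite{suzuki2025double}) stating that any normalized linear polynomial $(xI+y\hat H)\ket{\Psi}$ equals $e^{s[\hat H,\Psi]}\ket{\Psi}$ for an explicitly given $s$; this yields the closed form $s_\tau=\arccos\bigl((1+(e^\tau-1)E_0)/\sqrt{1+(e^{2\tau}-1)E_0}\bigr)/\sqrt{V_0}$, which the paper then uses to show $ds_\tau/d\tau\ge 0$. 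You instead make the argument self-contained: you compute both sides explicitly in the two-dimensional invariant subspace, showing that the ITE state is $\cos\theta(\tau)\ket{\psi_0}+\sin\theta(\tau)\ket{\psi_0^\perp}$ with both coefficients non-negative (so $\theta(\tau)\in[0,\arccos\sqrt{E_0})$ is well defined), and that $[\hat H_f,\psi_0]$ acts as $\sqrt{V_0}$ times a planar rotation generator on $\mathrm{span}\{\ket{\psi_0},\ket{\psi_0^\perp}\}$ while annihilating its orthogonal complement, so $e^{s[\hat H_f,\psi_0]}\ket{\psi_0}=\cos(s\sqrt{V_0})\ket{\psi_0}+\sin(s\sqrt{V_0})\ket{\psi_0^\perp}$; matching angles gives $s_\tau=\theta(\tau)/\sqrt{V_0}$. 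In effect you re-derive Eq.~\eqref{eq: psi s} (which the paper only establishes later, in the proof of Theorem~\ref{thm:geodesic_DBR}, via the power series of $[\hat H_f,\psi_0]^k\ket{\psi_0}$) and fold it into this lemma. Your approach buys independence from the cited reference and makes the geometric (rotation) picture explicit at this earlier stage; the paper's approach buys the explicit $\arccos$ formula for $s_\tau$ used in its subsequent monotonicity remark. Your claim that $\theta(\tau)$ is monotone is true but not needed for existence, and your handling of the orientation check and the degenerate cases $E_0\in\{0,1\}$ is appropriate.
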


\begin{proof}
Our proof fundamentally relies on Lemma~B.3 in Ref.~\cite{suzuki2025double}, which we restate below for completeness:
\begin{lemmaA}[Lemma~B.3 in Ref.~\cite{suzuki2025double} ]
\label{prop any linear polynomial synthesis}

    Let $x,y\in \mathbb R$ and  $(x,y)\neq (0,0)$.
    Define the parameter
    \begin{align} \label{eq: app_duration_general}
        s = -\frac{\mathrm{sgn}(y)}{\sqrt{V_{\Psi}}} \mathrm{arccos} \left(\frac{x+yE_{\Psi} }{\|(xI+y H)\ket{\Psi}\|} \right),
    \end{align} 
    with $E_{\Psi}=\braket{\Psi|\hat{H}|\Psi}$ and $V_{\Psi}=\braket{\Psi|(\hat{H}-E_{\Psi})^2|\Psi}=\braket{\Psi|\hat{H}^2|\Psi}-E_{\Psi}^2$, given a state vector $\ket{\Psi}$ and an arbitrary Hermitian matrix $\hat{H}$.
    Then, 
    \begin{align}
       \frac{(xI+y \hat{H})\ket{\Psi} }{\|(xI+y \hat{H})\ket{\Psi}\|} = (a(s)I + b(s)H)\ket \Psi=e^{s[\ket{\Psi}\bra{\Psi}, \hat{H}]} \ket \Psi\ ,
    \end{align}
  where $a(s),b(s)$ are real-valued coefficients given by
\begin{align}
    a(s) &= \frac{ E_\Psi }{\sqrt{V_{\Psi}}}\sin\left(s \sqrt{V_{\Psi}}\right) + \cos\left(s \sqrt{V_{\Psi}}\right), \label{a(s) main}
\\
        b(s) &= - \frac{1}{\sqrt{V_{\Psi}}} \sin\left(s \sqrt{V_{\Psi}}\right)\ .\label{b(s) main}
    \end{align}

\end{lemmaA}
Considering a specific case of Lemma~\ref{prop any linear polynomial synthesis}, we have
\begin{equation} \label{eq:equivalence_DBR_linear_poly}
    \frac{(I+c\hat{H})\ket{\Psi}}{\|(I+c\hat{H})\ket{\Psi}\|} = e^{s[ \hat{H}, |\Psi\rangle\langle\Psi|]}\ket{\Psi}\ 
\end{equation}
when 
\begin{equation}
    s = \frac{1}{\sqrt{V_{\Psi}}}\mathrm{arccos} \left(\frac{1+cE_{\Psi} }{\|(I+c H)\ket{\Psi}\|} \right)
\end{equation}
for $c>0$.
In our case, we consider the projector $\hat{H}=\hat{H}_{f}$ and the initial state $\ket{\Psi}=\ket{\psi_0}$ defined in Eq.~\eqref{app_eq:init}.
Using the equality shown in Eq.~\eqref{eq:poject_ite_equivalence}, the ITE state in Eq.~\eqref{app_eq:ITE_state_for_Grover} can be simplified to
\begin{equation} 
    \frac{e^{\tau \hat{H}_{f}}\ket{\psi_0}}{\|e^{\tau \hat{H}_{f}}\ket{\psi_0}\|} =  \frac{(I+(e^{\tau}-1)\hat{H}_{f})\ket{\psi_0}}{\|(I+(e^{\tau}-1)\hat{H}_{f})\ket{\psi_0}\|} \equiv \frac{(I+c\hat{H})\ket{\Psi}}{\|(I+c\hat{H})\ket{\Psi}\|}
\end{equation}
with $c=e^{\tau}-1$, suggesting that the existence of the time duration $s_\tau$ such that the ITE state for unstructured search can be realized by the exponential of commutators in Eq.~\eqref{eq:equivalence_DBR_linear_poly}, since $e^{\tau}-1>0$ for any $\tau>0$.
More concretely, using the time duration
\begin{equation} \label{eq:duration_ite}
     s_{\tau} = \frac{1}{\sqrt{E_{0}(1-E_{0})}}\mathrm{arccos} \left(\frac{1+(e^{\tau}-1)E_{0} }{\sqrt{1+(e^{2\tau}-1)E_{0}}} \right)
\end{equation}
with $E_{0}=\braket{\psi_{0}|\hat{H}|\psi_0}$ and $V_{0}=\braket{\psi_{0}|\hat{H}^2|\psi_0}-E_{0}=E_{0}(1-E_{0})$, the equality in Eq.~\eqref{app_eq:ite_dbr_equiv} holds for any $\tau$.
\end{proof}

Additionally, we investigate how the imaginary time $\tau$ relates to the corresponding time duration $s_{\tau}$, to gain insights into the trajectories of the ITE and its first-order approximation.
By computing the derivative of Eq.~\eqref{eq:duration_ite}, we have
\begin{equation}
\begin{split}
    \frac{d s_{\tau}}{d \tau} &=  \underbrace{\frac{1}{\sqrt{E_{0}(1-E_{0})}}}_a \cdot \underbrace{\left(\frac{-1}{\sqrt{1-\left(\frac{1+(e^{\tau}-1)E_{0} }{\sqrt{1+(e^{2\tau}-1)E_{0}}} \right)^2}}\right)}_b \cdot \underbrace{\frac{d}{d \tau}\left(\frac{1+(e^{\tau}-1)E_{0} }{\sqrt{1+(e^{2\tau}-1)E_{0}}} \right)}_{c},\\
    &=abc,
\end{split}
\end{equation}
where we use the equality $d(\arccos(x))/dx=-1/\sqrt{1-x^2}$.
Since $E_{0}$ ranges from 0 to 1, $ab$ is always negative for $\tau>0$.
As for the term $c$, we can obtain
\begin{equation}
\begin{split}
    \frac{d}{d \tau}\left(\frac{1+(e^{\tau}-1)E_{0} }{\sqrt{1+(e^{2\tau}-1)E_{0}}} \right) &= \frac{e^{\tau}E_{0}\left(\sqrt{1+(e^{2\tau}-1)E_{0}}\right)-\left(1+(e^{\tau}-1)E_{0}\right)\frac{e^{2\tau}E_{0}}{\sqrt{1+(e^{2\tau}-1)E_{0}}}}{1+(e^{2\tau}-1)E_{0}}  \\
    &= \frac{E_{0}e^{\tau}\left(1+(e^{2\tau}-1)E_{0}\right)-E_{0}e^{2\tau}\left(1+(e^{\tau}-1)E_{0}\right)}{\left(1+(e^{2\tau}-1)E_{0}\right)^{3/2}} \\
    &= \frac{-E_{0}e^{\tau}(1-E_{0})(e^{\tau}-1)}{\left(1+(e^{2\tau}-1)E_{0}\right)^{3/2}} \le 0.
\end{split}
\end{equation}
Consequently, combining the above results, the derivative of the time duration is non-negative, i.e., $ds_{\tau}/d\tau\ge0$ for all $\tau \ge 0$.
Note that $s_{\tau}=0$ when $\tau=0$.
Thus, $s_{\tau}$ is increasing from $0$ and reaches a plateau for a large value of $\tau$ (as $\lim_{\tau\to\infty}ds_{\tau}/d\tau=0$).
This suggests that the time duration $s_{\tau}$ increases with $\tau$, but eventually saturates, reflecting a deviation between the ITE and its first-order approximation; the latter overshoots the ITE trajectory.

\subsection{Geometric properties of Grover's algorithm} \label{app:geodesic_complexity}

We here exploit the ITE formulation for unstructured search to examine its geometric structure by considering the complex projective space $\mathbb{C}P^{N-1}$ as a manifold.
This space is the set of equivalence classes of non-zero vectors in $\mathbb{C}^{N}$, where $\ket{\psi}\sim \lambda \ket{\psi}$ for $\lambda \in \mathbb{C}\setminus\{0\} $.
It naturally arises in quantum mechanics, since pure states are defined only up to a global phase~\cite{bengtsson2017geometry}.
Equipped with the Fubini–Study metric~\cite{anandan1990geometry,bengtsson2017geometry,mukunda1993quantum}, the distance between two normalized states $\ket{\psi}$ and $\ket{\phi}$ on the manifold is defined as
\begin{equation} \label{eq:geodesic_length_FS}
    d_{\mathrm{FS}}(\ket{\psi},\ket{\phi}) = \arccos(|\braket{\psi|\phi}|).
\end{equation}
The geodesic, which locally minimizes this distance, lies in the two-dimensional subspace spanned by the two states.
For instance, given orthonormal states $\ket{\phi_1}$ and $\ket{\phi_2}$, the geodesic is parameterized as $\ket{\phi(t)}=\cos(t)\ket{\phi_1} + \sin(t)\ket{\phi_2} $.

With this geometric foundation, we show that the ITE trajectory follows the geodesic.
\begin{theoremA}[ITE traces the geodesic] \label{app_thm:geodesic_DBR}
    Let $\hat{H}_{f}$ be the projector in Eq.~\eqref{app_eq:projector} and $\ket{\psi_{0}}$ be the initial state defined in Eq.~\eqref{app_eq:init}.
    Define the orthonormal state
    \begin{equation} \label{app_eq:perp_init_state}
        \ket{\psi_{0}^{\perp}}= \frac{\hat{H}_{f}-E_{0}I}{\sqrt{E_{0}(1-E_{0})}} \ket{\psi_{0}}
    \end{equation}
    such that $\braket{\psi_{0}|\psi_{0}^{\perp}}=0$, where $E_{0}=\braket{\psi_{0}|\hat{H}_{f}|\psi_{0}}=M/N$.
    Then, the ITE state in Eq.~\eqref{eq:dbr} for a time duration $s$ is given by
    \begin{equation}\label{app_eq: psi s}
        \ket{\psi_{s}} = \cos(s\sqrt{V_{0}}) \ket{\psi_{0}} + \sin(s\sqrt{V_{0}}) \ket{\psi_{0}^{\perp}},
    \end{equation}
    where $E_0 = \braket{\psi_0|\hat{H}_f|\psi_0} = M/N$ and $V_{0}=\braket{\psi_{0}|(\hat{H}_{f}-E_{0})^2|\psi_{0}}=E_{0}(1-E_{0})$. 
    Since Eq.~\eqref{app_eq: psi s} can realize the solution state $\ket{\psi^{*}}$ when 
    \begin{equation} \label{app_eq:opt_s}
        s^{*}=\arccos(\sqrt{E_{0}})/\sqrt{V_{0}},
    \end{equation}
    ITE follows the trajectory of the geodesic on $\mathbb{C}\mathcal{P}^{N-1}$.
\end{theoremA}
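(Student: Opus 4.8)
The plan is to evaluate the first-order ITE state $\ket{\psi_s} = e^{s[\hat{H}_f,\psi_0]}\ket{\psi_0}$ explicitly by exploiting the two-dimensional invariant subspace $\mathrm{span}\{\ket{\psi_0},\ket{\psi_0^\perp}\}$, then recognize the resulting trajectory as a geodesic. First I would establish that this subspace is invariant under the generator $[\hat{H}_f,\psi_0]$. A direct computation gives $[\hat{H}_f,\psi_0]\ket{\psi_0} = (\hat{H}_f - E_0 I)\ket{\psi_0} = \sqrt{V_0}\,\ket{\psi_0^\perp}$, using the definition~\eqref{app_eq:perp_init_state} and $V_0 = E_0(1-E_0)$. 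For the action on $\ket{\psi_0^\perp}$, I would write $[\hat{H}_f,\psi_0] = (\hat{H}_f-E_0 I)\psi_0 - \psi_0(\hat{H}_f - E_0 I)$; the first term annihilates $\ket{\psi_0^\perp}$ since $\braket{\psi_0|\psi_0^\perp}=0$, and the second term, using $\psi_0(\hat{H}_f-E_0 I)\ket{\psi_0^\perp} = \ket{\psi_0}\braket{\psi_0|(\hat{H}_f-E_0 I)|\psi_0^\perp}$, requires evaluating $\braket{\psi_0|(\hat{H}_f - E_0 I)|\psi_0^\perp}$. By the definition of $\ket{\psi_0^\perp}$ this equals $\sqrt{V_0}\,\braket{\psi_0^\perp|\psi_0^\perp} \cdot$ (normalization check) $= \sqrt{V_0}$, since $\hat{H}_f$ is Hermitian, so $[\hat{H}_f,\psi_0]\ket{\psi_0^\perp} = -\sqrt{V_0}\,\ket{\psi_0}$. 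Hence in the ordered basis $\{\ket{\psi_0},\ket{\psi_0^\perp}\}$ the generator acts as $\sqrt{V_0}\begin{pmatrix} 0 & -1 \\ 1 & 0\end{pmatrix}$.

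Next I would exponentiate: $e^{s[\hat{H}_f,\psi_0]}$ restricted to this subspace is the rotation $\begin{pmatrix} \cos(s\sqrt{V_0}) & -\sin(s\sqrt{V_0}) \\ \sin(s\sqrt{V_0}) & \cos(s\sqrt{V_0})\end{pmatrix}$, so acting on $\ket{\psi_0} = (1,0)^T$ yields $\ket{\psi_s} = \cos(s\sqrt{V_0})\ket{\psi_0} + \sin(s\sqrt{V_0})\ket{\psi_0^\perp}$, which is~\eqref{app_eq: psi s}. This is already manifestly the parametrized geodesic $\ket{\phi(t)} = \cos(t)\ket{\phi_1} + \sin(t)\ket{\phi_2}$ between orthonormal endpoints recalled before Theorem~\ref{thm:geodesic_DBR}, with $t = s\sqrt{V_0}$. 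Then I would verify the solution state lies on this path: express $\ket{\psi^*} = \hat{H}_f\ket{\psi_0}/\sqrt{E_0}$ (from Lemma~\ref{lem:solvability}'s proof) in the basis, obtaining $\ket{\psi^*} = \sqrt{E_0}\,\ket{\psi_0} + \sqrt{1-E_0}\,\ket{\psi_0^\perp}$ after substituting $\hat{H}_f\ket{\psi_0} = E_0\ket{\psi_0} + \sqrt{V_0}\ket{\psi_0^\perp}$ and $\sqrt{V_0}/\sqrt{E_0} = \sqrt{1-E_0}$. Matching coefficients, $\ket{\psi_s} = \ket{\psi^*}$ requires $\cos(s\sqrt{V_0}) = \sqrt{E_0}$, i.e. $s^* = \arccos(\sqrt{E_0})/\sqrt{V_0}$, which is~\eqref{app_eq:opt_s}. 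Finally I would invoke the geometric fact reviewed in App.~\ref{app:geometry_review} that a geodesic in $\mathbb{C}P^{N-1}$ lies in the $\mathbb{C}P^1$ spanned by its endpoints; since $\ket{\psi_s}$ sweeps exactly such a great circle passing through both $\ket{\psi_0}$ (at $s=0$) and $\ket{\psi^*}$ (at $s=s^*$), and the Fubini–Study distance $\arccos|\braket{\psi_0|\psi_s}| = s\sqrt{V_0}$ grows linearly in $s$ up to $s^*$, the segment is length-minimizing.

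The main obstacle is the sign and normalization bookkeeping in the action of $[\hat{H}_f,\psi_0]$ on $\ket{\psi_0^\perp}$ — getting the $2\times 2$ generator to be exactly the antisymmetric rotation generator (rather than, say, picking up a spurious factor or wrong sign that would make the "rotation" a boost) is the one place an error would derail the geodesic identification. Everything downstream is routine trigonometry. One could alternatively bypass the commutator-flow computation entirely and instead start from Lemma~\ref{lem:equiv_DBR_ITE}, writing the normalized ITE state $(I + c\hat{H}_f)\ket{\psi_0}/\|\cdot\|$ with $c = e^\tau - 1$ directly in the $\{\ket{\psi_0},\ket{\psi_0^\perp}\}$ basis and simplifying; this gives the same $\cos/\sin$ form but requires the explicit $s_\tau$ from~\eqref{eq:duration_ite}, so the commutator-exponential route is cleaner.
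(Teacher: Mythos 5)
Your proposal is correct and follows essentially the same route as the paper: both arguments reduce to showing that $[\hat{H}_f,\psi_0]$ acts as $\sqrt{V_0}$ times a rotation generator on the plane spanned by $\{\ket{\psi_0},\ket{\psi_0^\perp}\}$ (the paper resums the power series using $\hat{W}^2\ket{\psi_0}=-V_0\ket{\psi_0}$, you exponentiate the restricted $2\times 2$ matrix, which is the same computation), and then both fix $s^*$ by matching the $\cos/\sin$ coefficients against $\ket{\psi^*}=\hat{H}_f\ket{\psi_0}/\sqrt{E_0}$. Your sign and normalization bookkeeping for $[\hat{H}_f,\psi_0]\ket{\psi_0^\perp}=-\sqrt{V_0}\ket{\psi_0}$ checks out, and your concluding geodesic identification matches the paper's appeal to the $\mathbb{C}P^1$ great-circle structure.
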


\begin{proof}
    We first show that the ITE dynamics can be expressed as Eq.~\eqref{app_eq: psi s}.
    We here utilize the proof shown in Lemma~1 of Ref.~\cite{suzuki2025double}.
    The exponential of the commutator $\hat{W}_{\hat{H}}\equiv[\hat{H}, |\Psi\rangle\langle\Psi|]$ for an arbitrary Hermitian matrix $\hat{H}$ and a state vector $\ket{\Psi}$ can be expressed as
    \begin{equation}
        e^{s\hat{W}_{\hat{H}}} = \sum_{k=0}^\infty\frac{s^k}{k!} \hat{W}_{\hat{H}}^k.
    \end{equation}
    Since $$\hat{W}_{\hat{H}} \ket{\Psi} =\hat{H}\ket{\Psi} - E_{\Psi}\ket{\Psi}$$
and $$\hat{W}_{\hat{H}}^2 \ket{\Psi} = \hat{W}_{\hat{H}}\hat{H}\ket{\Psi} - E_\Psi\hat{W}_{\hat{H}} \ket{\Psi}= E_{\Psi}\hat{H}\ket{\Psi} - \bra\Psi \hat{H}^2\ket\Psi\ket\Psi - E_{\Psi}\hat{H}\ket{\Psi} + E_{\Psi}^2 \ket{\Psi} 
    = - V_{\Psi}\ket{\Psi} $$ with $E_{\Psi}=\braket{\Psi|\hat{H}|\Psi}$ and $V_{\Psi}=\braket{\Psi|(\hat{H}-E_{\Psi})^2|\Psi}=\braket{\Psi|\hat{H}^2|\Psi}-E_{\Psi}^2$, any even power of the  commutator $\hat{W}_{\hat{H}}$ acting on the state $\ket\Psi$ gives
\begin{align}
\hat{W}_{\hat{H}}^{2k}\ket{\Psi} = (-V_{\Psi})^k\ket{\Psi}\ .
\end{align}
Similarly, we have $\hat{W}_{\hat{H}}^{2k+1}\ket{\Psi} = (-V_{\Psi})^k \hat{W}_{\hat{H}}\ket{\Psi}$ for any odd power.
Therefore, separating the odd and even terms leads to a weighted sum of  $\ket{\Psi}$ and $\hat{W}_{\hat{H}}\ket{\Psi}$ with coefficients expressed by sine and cosine functions as
\begin{equation} \label{eq:general_dbr_expression}
    e^{s\hat{W}_{\hat{H}}}\ket{\Psi} 
     = \cos\left(s\sqrt{V_{\Psi}}\right)\ket{\Psi} + \sin\left(s\sqrt{V_{\Psi}}\right)\frac{\hat{W}_{\hat{H}}}{\sqrt{V_{\Psi}}}\ket{\Psi} \ .
\end{equation}
In our case, we consider $\hat{H}=\hat{H}_{f}$ and $\ket{\Psi}=\ket{\psi_{0}}$ with $E_{0}=\braket{\psi_{0}|\hat{H}_{f}|\psi_{0}}$ and $V_{0}=\braket{\psi_{0}|(\hat{H}_{f}-E_{0})^2|\psi_{0}}=E_{0}(1-E_{0})$.
Thus Eq.~\eqref{eq:general_dbr_expression} is re-expressed as
\begin{equation}
\begin{split}\label{eq: app_eq: psi s derivation}
    e^{s[\hat{H}_{f},\psi_{0}]}\ket{\psi_0} &= \cos\left(s\sqrt{V_{0}}\right)\ket{\psi_{0}} + \sin\left(s\sqrt{V_{0}}\right)\frac{[\hat{H}_{f}, |\psi_{0}\rangle\langle\psi_{0}|]}{\sqrt{V_{0}}}\ket{\psi_{0}} \\
    &= \cos\left(s\sqrt{V_{0}}\right)\ket{\psi_{0}} + \sin\left(s\sqrt{V_{0}}\right)\ket{\psi_{0}^{\perp}}.
\end{split}
\end{equation}

Next, we verify that there exists a time duration $s$ such that the ITE state $\ket{\psi_{s}}$ results in the solution state of Eq.~\eqref{app_eq:solution}.
By expressing Eq.~\eqref{eq: app_eq: psi s derivation} solely in terms of $\ket{\psi_{0}}$ and substituting $V_{0}=E_{0}(1-E_{0})$, we have
\begin{equation} \label{app_eq:full_desc_1st_app_ITE}
    \ket{\psi_{s}} = \left(\left(-\frac{E_{0}}{\sqrt{E_{0}(1-E_{0})}}\sin\left(s\sqrt{V_{0}}\right)+\cos\left(s\sqrt{V_{0}}\right)\right)I+ \sin\left(s\sqrt{V_{0}}\right)\frac{\hat{H}_{f}}{\sqrt{E_{0}(1-E_{0})}}  \right) \ket{\psi_{0}}.
\end{equation}
Note that the solution state can be written as $\ket{\psi^{*}}=\hat{H}_f\ket{\psi_{0}}/\sqrt{E_{0}}$.
Thus, by solving
\begin{align}
     -\frac{E_{0}}{\sqrt{E_{0}(1-E_{0})}}\sin\left(s\sqrt{V_{0}}\right)+\cos\left(s\sqrt{V_{0}}\right) &= 0, \label{app_eq:eq_coef_id} \\
     \frac{\sin\left(s\sqrt{V_{0}}\right)}{\sqrt{E_{0}(1-E_{0})}} &= \frac{1}{\sqrt{E_{0}}}, \label{app_eq:eq_coef_H}
\end{align}
i.e., computing \eqref{app_eq:eq_coef_id} $+$ \eqref{app_eq:eq_coef_H} $\times$ $E_{0}$, we obtain $s^{*}=\arccos(\sqrt{E_{0}})/\sqrt{V_{0}}$.
These results suggest that ITE dynamics in Eq.~\eqref{eq:dbr} describes a great circle connecting the initial and the solution states.
\end{proof}

Note that ITE for a general Hamiltonian follows the steepest descent direction with respect to a least-squares cost function~\cite{gluza_DB_QITE_2024,mcmahon2025equating}, whereas geodesics are defined independently of any cost functions.
Nevertheless, Theorem~\ref{app_thm:geodesic_DBR} shows that, when the Hamiltonian is a projector, the ITE trajectory coincides with the geodesic on $\mathbb{C}P^{N-1}$.

This observation sharpens the two-dimensional picture used in earlier analyses of Grover’s algorithm. 
Studies such as Refs.~\cite{nielsen2010quantum,yoder2014fixed,li2024revisiting} consider the span of $\{\ket{\psi^{*}}, \ket{\psi^{*}_{\perp}}\}$ with $\ket{\psi^{*}_{\perp}}=(\hat{H}_{f}-I)\ket{\psi_{0}}/\sqrt{1-E_{0}}$, and the great circle arising from that span matches ours.
However, those analyses do not explain why the algorithm should remain confined to the subspace, since their argument only applies when $\alpha_{k}=\beta_{k}=\pi$; this interpretation cannot be applied to the $\pi/3$ algorithm and fixed-point quantum search.
Our result shows that this plane naturally arises as the trajectory of ITE and provides a dynamical viewpoint that applies beyond the specific setting.

We also note that the time duration $s^{*}$ in Eq.~\eqref{eq:opt_s} attains the quantum speed limit~\cite{margolus1998maximum,mandelstam1991uncertainty}, i.e., the minimum time required for one state to reach the target state under a given dynamics.
This highlights that ITE for unstructured search saturates a fundamental limit on how fast a physical process can transform quantum states.
From a broader perspective, this result might provide a direct connection between the efficiency of quantum algorithms and fundamental physical limits on information processing~\cite{lloyd2000ultimate}.

\subsection{Optimal Query Complexity via the ITE Formulation} \label{app:optimal_complexity}

Next, we connect the geometric ITE perspective to the query complexity of Grover's algorithm~\cite{bennett1997strengths,beals1998tight,zalka1999grover,nielsen2010quantum}. Ref.~\cite{nielsen2006quantum} shows that analyzing geodesics provides a potential framework for understanding the difficulty of implementing quantum algorithms.
Specifically, the geodesic length on a Riemannian manifold determines the number of elementary gates required to realize a target unitary operation.

Inspired by Ref.~\cite{nielsen2006quantum}, we confirm that the query complexity of Grover's algorithm is determined by the geodesic length of ITE. 
This link also underlies the ability to achieve the optimal query complexity.

\begin{theoremA}[Geodesic length of ITE determines query complexity of Grover's algorithm] \label{app_thm:geodesic_complexity}
    Given a projector Hamiltonian $\hat{H}_f$, consider the ITE evolution generated by the operator $e^{s^* [\hat{H}_f, \psi_0]}$, where the optimal time $s^*$ of Eq.~\eqref{app_eq:opt_s} ensures that ITE reaches the solution state $\ket{\psi_{s^*}} = \ket{\psi^*}$.
    Then, there exists a Grover iteration $\prod_{k=1}^{\mathcal{N}}G(\alpha_k,\beta_k)$ satisfying 
\begin{equation} \label{app_eq:norm_grover}
    \left\|e^{s^{*}[\hat{H}_{f},\psi_{0}]}-(-1)^{\mathcal{N}}\prod_{k=1}^{\mathcal{N}}G(\alpha_k,\beta_k)\right\|_{\text{op}} \le \epsilon,
\end{equation}
    for the operator norm $\|\cdot\|_{\text{op}}$ and any $\epsilon \in (0,2)$, using the number of queries 
\begin{equation} \label{app_eq:bound_circuit_depth}
    \mathcal{N} \in \mathcal{O} \left(\frac{1}{\epsilon^2|\pi/2-d_{\mathrm{FS}}|}\right)
\end{equation}
where $d_{\mathrm{FS}}\equiv d_{\mathrm{FS}}(\ket{\psi_{0}},\ket{\psi^{*}})$ is the geodesic length between the initial and solution states.
\end{theoremA}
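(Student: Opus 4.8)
The plan is to combine the exact geodesic structure from Theorem~\ref{thm:geodesic_DBR} with a product-formula (group-commutator) approximation of the single exponential $e^{s^*[\hat{H}_f,\psi_0]}$, then carefully track how the approximation error and the ``budget'' of each Grover step depend on the geodesic length $d_{\mathrm{FS}}$. First I would recall that, by Theorem~\ref{thm:geodesic_DBR}, the relevant commutator flow is $e^{s^*\hat W}$ with $\hat W = [\hat{H}_f,\psi_0]$, and that on the two-dimensional invariant subspace spanned by $\{\ket{\psi_0},\ket{\psi_0^\perp}\}$ this is just a rotation by angle $s^*\sqrt{V_0} = \arccos(\sqrt{E_0}) = \pi/2 - \arccos(\sqrt{E_0}^{\,\perp})$; in fact $s^*\sqrt{V_0} = \arcsin(\sqrt{E_0})$ after using $\arccos(x)=\pi/2-\arcsin(x)$. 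The key quantitative fact is $\|\hat W\|_{\mathrm{op}} = \sqrt{V_0} = \sqrt{E_0(1-E_0)}$, so $s^*\|\hat W\|_{\mathrm{op}} = \arcsin(\sqrt{E_0})$, and $|\pi/2 - d_{\mathrm{FS}}| = \arcsin(\sqrt{E_0})$ since $d_{\mathrm{FS}} = \arccos(\sqrt{E_0})$.

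Next I would fragment the evolution: write $e^{s^*\hat W} = \bigl(e^{(s^*/r)\hat W}\bigr)^r$ for an integer $r$ to be chosen, and approximate each factor $e^{(s^*/r)\hat W}$ by a symmetric group commutator of the form $e^{i\sqrt{t}\,\psi_0}e^{i\sqrt{t}\hat{H}_f}e^{-i\sqrt{t}\,\psi_0}e^{-i\sqrt{t}\hat{H}_f}$ with $t = s^*/r$. By Eq.~\eqref{eq:GCI_compilation_} (the bound from Ref.~\cite{double_bracket2024}), each such block incurs operator-norm error $\mathcal{O}(t^{3/2}(\|[\hat{H}_f,[\hat{H}_f,\psi_0]]\|_{\mathrm{op}}+\|[\psi_0,[\psi_0,\hat{H}_f]]\|_{\mathrm{op}}))$. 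A short computation (using $\hat{H}_f^2=\hat{H}_f$, $\psi_0^2=\psi_0$, and the structure on the 2D subspace) bounds the double commutators by an absolute constant times $\sqrt{V_0}$, so each block has error $\mathcal{O}((s^*/r)^{3/2}\sqrt{V_0})$. Using $s^*\sqrt{V_0} = \arcsin(\sqrt{E_0}) \le 1$ (so $s^* \le 1/\sqrt{V_0}$), this is $\mathcal{O}\bigl((s^*)^{1/2}\sqrt{V_0}\, s^*/r^{3/2}\bigr) = \mathcal{O}\bigl(\arcsin(\sqrt{E_0})^{1/2}\cdot \arcsin(\sqrt{E_0})/r^{3/2}\bigr)$; more cleanly, bounding $(s^*)^{3/2}\sqrt{V_0} = (s^*\sqrt{V_0})\cdot(s^*)^{1/2} \le \arcsin(\sqrt{E_0}) \cdot \arcsin(\sqrt{E_0})^{1/2}/V_0^{1/4}$ — I would keep the dominant scaling and simply record that the per-block error is $\mathcal{O}(|\pi/2-d_{\mathrm{FS}}|^{3/2}/(r^{3/2}))$ after absorbing $V_0$-factors, since $|\pi/2-d_{\mathrm{FS}}| = \arcsin(\sqrt{E_0}) = s^*\sqrt{V_0}$. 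Summing $r$ blocks via the telescoping/triangle inequality (unitaries are contractions in operator norm) gives total error $\mathcal{O}(r \cdot (\text{per-block error})) = \mathcal{O}(|\pi/2 - d_{\mathrm{FS}}|^{3/2}/r^{1/2})$. Setting this $\le \epsilon$ forces $r = \Omega(|\pi/2-d_{\mathrm{FS}}|^3/\epsilon^2)$; but each block uses $\mathcal{O}(1)$ oracle calls, and — here is the subtlety — the \emph{number of oracle queries} is $\mathcal{N} = \Theta(r)$, so one does not yet get the claimed $1/(\epsilon^2|\pi/2-d_{\mathrm{FS}}|)$.

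The main obstacle, and the step I would focus effort on, is getting the \emph{favorable} (decreasing-in-$d_{\mathrm{FS}}$-distance, i.e.\ increasing in $|\pi/2-d_{\mathrm{FS}}|$) dependence right: one must not fragment into $r$ copies of a tiny rotation but instead choose the fragmentation so that $\sqrt{t} = \sqrt{s^*/r}$ plays the role of the Grover angle, and exploit that the \emph{target} rotation angle $s^*\sqrt{V_0}=|\pi/2-d_{\mathrm{FS}}|$ is itself small when the overlap is small. I would recast the accounting on the 2D subspace directly: the product formula must synthesize a rotation by angle $\theta^* := |\pi/2 - d_{\mathrm{FS}}|$ to accuracy $\epsilon$, using $\mathcal{N}$ Grover steps each of which is an $SU(2)$ rotation; the error of $r$-fold fragmented group-commutator for generating a net rotation of size $\theta^*$ scales as $\mathcal{O}(\theta^* \cdot (\theta^*/r)^{1/2}) = \mathcal{O}((\theta^*)^{3/2} r^{-1/2})$ only if the double-commutator norms scale as $\sqrt{V_0}$; but in fact on the 2D subspace $\|[\hat H_f,[\hat H_f,\psi_0]]\|_{\mathrm{op}}$ and its partner are each $\Theta(\sqrt{V_0})$, and $s^*\sqrt{V_0} = \theta^*$, giving per-block error $\mathcal{O}((s^*/r)^{3/2}\sqrt{V_0}) = \mathcal{O}(\theta^{*3/2} V_0^{-1/2} r^{-3/2}\cdot\sqrt{V_0}) = \mathcal{O}(\theta^{*3/2} r^{-3/2})$ — wait, that still needs care — so total error $\mathcal{O}(\theta^{*3/2} r^{-1/2})$, and $r = \Theta(\theta^{*3}/\epsilon^2)$. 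To reach $\mathcal{N} \in \mathcal{O}(1/(\epsilon^2\theta^*))$ one instead argues that $\mathcal{N}$ counts oracle calls and each fragment of the group commutator can be merged/amortized so that $\mathcal{N} = \Theta(r/\theta^{*2})$ after rescaling, or — the cleaner route — one directly invokes that a net small rotation by $\theta^*$ needs only $\mathcal{N} = \mathcal{O}(1/(\epsilon^2\theta^*))$ steps because the relevant angle per Grover step in the synthesized geodesic is $\approx \theta^*/\mathcal{N}$ and the cumulative second-order product-formula error is $\mathcal{O}(\mathcal{N}\cdot(\theta^*/\mathcal{N})^{3/2}\cdot\theta^{*1/2}) = \mathcal{O}(\theta^{*2}/\mathcal{N}^{1/2})$, whence $\mathcal{N} = \mathcal{O}(\theta^{*4}/\epsilon^2)$... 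The honest statement is that the exact exponent bookkeeping between $\theta^*$-powers and $\epsilon$-powers is the delicate part, and I would resolve it by working entirely in the $SU(2)$ block, writing the group-commutator error explicitly via the BCH series truncation there, optimizing the fragmentation number $r$ against $\epsilon$, and then translating $r$ into oracle-query count $\mathcal{N}$; the $V_0 = E_0(1-E_0)$ factors conspire (as they do in App.~\ref{app:product_formula}) so that the final query count depends on the overlap only through $1/|\pi/2 - d_{\mathrm{FS}}| = 1/\arcsin(\sqrt{E_0})$, yielding Eq.~\eqref{app_eq:bound_circuit_depth}. Everything else — existence of the angles $\{(\alpha_k,\beta_k)\}$ (they are read off from the product-formula coefficients $t_k = c_k\sqrt{s^*/r}$), the global phase $(-1)^{\mathcal{N}}$, and optimality of the resulting $\sqrt{N}$ scaling — then follows from Eq.~\eqref{eq:ite_grover_approx}, Eq.~\eqref{eq:opt_s}, and $E_0 = M/N$ as already spelled out in the main text.
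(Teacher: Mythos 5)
Your overall strategy is the same as the paper's (fragment $e^{s^*[\hat H_f,\psi_0]}$ into $r$ group-commutator blocks, telescope, bound the double commutators by $\mathcal{O}(\sqrt{V_0})$, and convert $s^*$ into the geodesic length), but the argument is derailed by a concrete algebraic error that you never recover from. From Eq.~\eqref{app_eq:opt_s}, $s^*\sqrt{V_0}=\arccos(\sqrt{E_0})=d_{\mathrm{FS}}$, \emph{not} $\arcsin(\sqrt{E_0})=|\pi/2-d_{\mathrm{FS}}|$ as you assert in the first paragraph: the ITE rotation angle equals the Fubini--Study distance itself, which tends to $\pi/2$ (not to $0$) as the overlap shrinks; it is the evolution \emph{time} $s^*=d_{\mathrm{FS}}/(\cos d_{\mathrm{FS}}\sin d_{\mathrm{FS}})$ that diverges like $1/|\pi/2-d_{\mathrm{FS}}|\sim\sqrt{N}$. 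This misidentification is what produces your intermediate conclusion $r=\Omega(|\pi/2-d_{\mathrm{FS}}|^{3}/\epsilon^{2})$, which --- since $|\pi/2-d_{\mathrm{FS}}|\le\pi/2$ --- would give an $N$-independent query count $\mathcal{O}(1/\epsilon^{2})$, violating the BBBV lower bound. That impossibility should have flagged the sign/angle error; instead you diagnose the problem as a bookkeeping mismatch and spend the second paragraph on ``amortization'' and ``merging'' schemes whose claimed scalings ($\mathcal N=\Theta(r/\theta^{*2})$, $\mathcal N=\mathcal O(\theta^{*4}/\epsilon^{2})$, \dots) are mutually inconsistent, before finally asserting without proof that the $V_0$ factors ``conspire'' to give Eq.~\eqref{app_eq:bound_circuit_depth}. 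As written, the key quantitative step is not established.

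The fix requires no new ideas: with the correct identification, your own first-paragraph computation closes. The total error is
$\mathcal{O}\bigl((s^*)^{3/2}\sqrt{V_0}\,r^{-1/2}\bigr)=\mathcal{O}\bigl((s^*\sqrt{V_0})\,(s^*/r)^{1/2}\bigr)\le \tfrac{\pi}{2}\,\mathcal{O}\bigl((s^*/r)^{1/2}\bigr)$,
since $s^*\sqrt{V_0}=d_{\mathrm{FS}}\le\pi/2$ is bounded by a constant; then $s^*=1/\operatorname{sinc}(2d_{\mathrm{FS}})\le 2/|\pi/2-d_{\mathrm{FS}}|$ gives $r=\mathcal{O}\bigl(1/(\epsilon^{2}|\pi/2-d_{\mathrm{FS}}|)\bigr)$ directly, with $\mathcal{N}=\Theta(r)$ oracle calls. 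This is exactly the paper's route (it uses $\mathcal{N}/2$ blocks with angles $\pm\sqrt{2s^*/\mathcal{N}}$ and the explicit bound $\|[\hat H_f,[\hat H_f,\psi_0]]\|_{\mathrm{op}}\le\sqrt{2V_0}$); no rescaling of the fragmentation or per-step amortization is needed.
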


\begin{proof}
    Consider the Grover iterations $\prod_{k=1}^{\mathcal{N}}G(\alpha_k,\beta_k)$ with angles $\alpha_{2k}=\beta_{2k}=-\alpha_{2k-1}=-\beta_{2k-1}=\sqrt{2s^{*}/\mathcal{N}}$.
    Note that this set of angles are given by the simple approximation of the exponential of commutators using the group commutator and fragmentation; see App.~\ref{app:product_formula} for the detail. 
    Without loss of generality, we also assume $\mathcal{N}$ is even; for odd $\mathcal{N}$, setting the last angles as $\alpha_{\mathcal{N}}=\beta_{\mathcal{N}}=0$ reduces to the situation of the even case with one additional (i.e., constant) query.
    Then, the error bound of Eq.~\eqref{app_eq:norm_grover} is rewritten as
    \begin{equation} \label{eq:norm_of_implementation}
    \left\|e^{s^{*}[\hat{H}_{f},\ket{\psi_{0}}\bra{\psi_{0}}]}-\left(e^{i\sqrt{2s^{*}/\mathcal{N}}\ket{\psi_{0}}\bra{\psi_{0}}}e^{i\sqrt{2s^{*}/\mathcal{N}}\hat{H}_{f}}e^{-i\sqrt{2s^{*}/\mathcal{N}}\ket{\psi_{0}}\bra{\psi_{0}}}e^{-i\sqrt{2s^{*}/\mathcal{N}}\hat{H}_{f}}\right)^{\mathcal{N}/2}\right\|_{\text{op}} \le \epsilon,
    \end{equation}
    where $\|\cdot\|_{\text{op}}$ represents the operator norm.
    The upper bound of Eq.~\eqref{eq:norm_of_implementation} is given by
    \begin{equation} \label{eq:query_deriv_mid_1}
    \begin{split}
        & \left\|e^{s^{*}[\hat{H}_{f},\ket{\psi_{0}}\bra{\psi_{0}}]}-\left(e^{i\sqrt{2s^{*}/\mathcal{N}}\ket{\psi_{0}}\bra{\psi_{0}}}e^{i\sqrt{2s^{*}/\mathcal{N}}\hat{H}_{f}}e^{-i\sqrt{2s^{*}/\mathcal{N}}\ket{\psi_{0}}\bra{\psi_{0}}}e^{-i\sqrt{2s^{*}/\mathcal{N}}\hat{H}_{f}}\right)^{\mathcal{N}/2}\right\|_{\text{op}}\\
        &\le \frac{\mathcal{N}}{2} \left\|e^{\frac{2s^{*}}{\mathcal{N}}[\hat{H}_{f},\ket{\psi_{0}}\bra{\psi_{0}}]}-e^{i\sqrt{2s^{*}/\mathcal{N}}\ket{\psi_{0}}\bra{\psi_{0}}}e^{i\sqrt{2s^{*}/\mathcal{N}}\hat{H}_{f}}e^{-i\sqrt{2s^{*}/\mathcal{N}}\ket{\psi_{0}}\bra{\psi_{0}}}e^{-i\sqrt{2s^{*}/\mathcal{N}}\hat{H}_{f}}\right\|_{\text{op}} \\
        & \le  \frac{\mathcal{N}}{2} \left(\frac{2s^{*}}{\mathcal{N}}\right)^{\frac{3}{2}} \bigg( \left\|[\hat{H}_{f}, [\hat{H}_{f}, \ket{\psi_{0}}\bra{\psi_{0}}]]\right\|_{\text{op}} + \left\|[\ket{\psi_{0}}\bra{\psi_{0}}, [\ket{\psi_{0}}\bra{\psi_{0}}, \hat{H}_{f}]]  \right\|_{\text{op}}\bigg).
    \end{split}
    \end{equation}
    In the first equality, we apply the telescoping followed by the triangle inequality $\mathcal{N}/2$ times utilizing the fact that $e^{\frac{2s^{*}}{\mathcal{N}}[\hat{H}_{f},\ket{\psi_{0}}\bra{\psi_{0}}]}$ and $e^{i\sqrt{2s^{*}/\mathcal{N}}\ket{\psi_{0}}\bra{\psi_{0}}}e^{i\sqrt{2s^{*}/\mathcal{N}}\hat{H}_{f}}e^{-i\sqrt{2s^{*}/\mathcal{N}}\ket{\psi_{0}}\bra{\psi_{0}}}e^{-i\sqrt{2s^{*}/\mathcal{N}}\hat{H}_{f}}$ are unitary.
    Lastly, we use the inequality proved in Ref.~\cite{double_bracket2024};
    \begin{equation}\label{eq:GCI_compilation}
\begin{split}
    &\left\|  e^{i\sqrt{s}\Psi}e^{i\sqrt{s}\hat{H}}    
    e^{-i\sqrt{s}\Psi}
    e^{-i\sqrt{s}\hat{H}}- e^{s[\hat{H},\Psi]} \right\|_{\text{op}}\leq   s^{3/2} \bigg( \|[\hat{H}, [\hat{H}, \Psi]]\|_{\text{op}} + \|[\Psi, [\Psi, \hat{H}]]  \|_{\text{op}}\bigg)
\end{split}
\end{equation}
    for the density matrix representation of a pure state $\Psi=\ket{\Psi}\bra{\Psi}$, an arbitrary Hermitian matrix $\hat{H}$ and $s\ge0$.
    Note that 
    \begin{equation}
    \begin{split}
        [\hat{H}_{f}, [\hat{H}_{f}, \ket{\psi_{0}}\bra{\psi_{0}}]] &= [\hat{H}_{f},\hat{H}_{f}\ket{\psi_{0}}\bra{\psi_{0}}-\ket{\psi_{0}}\bra{\psi_{0}}\hat{H}_{f}]\\& = \hat{H}_{f}\ket{\psi_{0}}\bra{\psi_{0}}-2 \hat{H}_{f}\ket{\psi_{0}}\bra{\psi_{0}}\hat{H}_{f} + \ket{\psi_{0}}\bra{\psi_{0}}\hat{H}_{f},
    \end{split}
    \end{equation}
    as $\hat{H}_{f}$ is a projector, i.e., $\hat{H}_{f}^{2}=\hat{H}_{f}$.
    Since the Hilbert-Schmidt norm $\|\cdot\|_{\text{HS}}$ always upper bounds the operator norm, we obtain
    \begin{equation} \label{app_eq:norm_bound_1}
    \begin{split}
        \left\|[\hat{H}_{f}, [\hat{H}_{f}, \ket{\psi_{0}}\bra{\psi_{0}}]]\right\|_{\text{op}} & \le \left\|[\hat{H}_{f}, [\hat{H}_{f}, \ket{\psi_{0}}\bra{\psi_{0}}]]\right\|_{\text{HS}} \\
        &\le \left\|\hat{H}_{f}\ket{\psi_{0}}\bra{\psi_{0}}-2 \hat{H}_{f}\ket{\psi_{0}}\bra{\psi_{0}}\hat{H}_{f} + \ket{\psi_{0}}\bra{\psi_{0}}\hat{H}_{f}\right\|_{\text{HS}} = \sqrt{2V_{0}}.
    \end{split}
    \end{equation}
    Similarly, as
    \begin{equation}
    \begin{split}
        [\ket{\psi_{0}}\bra{\psi_{0}}, [\ket{\psi_{0}}\bra{\psi_{0}}, \hat{H}_{f}]] & = [\ket{\psi_{0}}\bra{\psi_{0}}, \ket{\psi_{0}}\bra{\psi_{0}}\hat{H}_{f}-\hat{H}_{f}\ket{\psi_{0}}\bra{\psi_{0}}]\\&= \hat{H}_{f}\ket{\psi_{0}}\bra{\psi_{0}}-2 E_{0}\ket{\psi_{0}}\bra{\psi_{0}} + \ket{\psi_{0}}\bra{\psi_{0}}\hat{H}_{f},
    \end{split}
    \end{equation}
    we also have $\left\|[\ket{\psi_{0}}\bra{\psi_{0}}, [\ket{\psi_{0}}\bra{\psi_{0}}, \hat{H}_{f}]]  \right\|_{\text{op}}\le \sqrt{2V_{0}}$.
    Consequently, Eq.~\eqref{eq:query_deriv_mid_1} can be further bounded as follows;
    \begin{equation} \label{eq:query_deriv_mid_2}
    \begin{split}
        \frac{\mathcal{N}}{2} \left(\frac{2s^{*}}{\mathcal{N}}\right)^{\frac{3}{2}} \bigg( \left\|\left[\hat{H}_{f}, [\hat{H}_{f}, \ket{\psi_{0}}\bra{\psi_{0}}] \right]\right\|_{\text{op}} + \left\|\left[\ket{\psi_{0}}\bra{\psi_{0}}, [\ket{\psi_{0}}\bra{\psi_{0}}, \hat{H}_{f}] \right]  \right\|_{\text{op}}\bigg) & \le \frac{\mathcal{N}}{2} \left(\frac{2s^{*}}{\mathcal{N}}\right)^{\frac{3}{2}} \cdot 2\sqrt{2}\sqrt{V_{0}}\\
        & = 4\frac{\sqrt{V_{0}}{s^{*}}^{3/2}}{\mathcal{N}^{1/2}} \\
        &\le 2\pi \left(\frac{s^{*}}{\mathcal{N}}\right)^{1/2}.
    \end{split}
    \end{equation}
    In the last inequality, we use the relation $s^{*} = \arccos(\sqrt{E_{0}})/\sqrt{V_{0}}$, which implies $\sqrt{V_{0}}s^{*} = \arccos(\sqrt{E_{0}}) \le \pi/2 $ since $0\le\sqrt{E_{0}}\le 1$.

    Now, we relate the upper bound to the geodesic length on the complex projective manifold with respect to the Fubini-Study metric; 
    \begin{equation} \label{app_eq_geodesic_length}
    d_{\mathrm{FS}} = \arccos{\left(\left|\braket{\psi_{0}|\psi^{*}}\right|\right)} = \arccos{\left(\sqrt{E_{0}}\right)}.
    \end{equation}
    Indeed, using the geodesic length in Eq.~\eqref{app_eq_geodesic_length}, the optimal time duration can be expressed as 
    \begin{equation}
       s^{*}= \arccos(\sqrt{E_{0}})/\sqrt{V_{0}}= \frac{d_{\mathrm{FS}}}{\cos{(d_{\mathrm{FS}})}\sin{(d_{\mathrm{FS}})}} = \frac{1}{\text{sinc }(2d_{\mathrm{FS}})},
    \end{equation}
    with $\text{sinc }(x)=\sin(x)/x$.
    Here, from the definition of $d_{FS}$ in Eq.~\eqref{app_eq_geodesic_length}, we utilize the identities $\cos(d_{FS})=\sqrt{E_{0}}$ and $\sin(d_{FS})=\sqrt{1-E_{0}}$ to rewrite $V_{0}=E_{0}(1-E_{0})$. 
    Then we have
    \begin{equation}
    \frac{1}{\text{sinc }(2x)} \le \frac{2}{|\pi/2-x|}
\end{equation}
for $0\le x\le \pi/2$.
Thus, the upper bound of Eq.~\eqref{eq:query_deriv_mid_2} is further given by
\begin{equation}
    2\pi \left(\frac{s^{*}}{\mathcal{N}}\right)^{1/2} \le 2\sqrt{2}\pi \left(\frac{1}{\mathcal{N}|\pi/2-d_{\mathrm{FS}}|}\right)^{1/2}.
\end{equation}
As a result, to achieve the error $\epsilon$, it suffices to have
\begin{equation}
    \mathcal{N} = \left\lceil \frac{(2\sqrt{2}\pi)^2}{\epsilon^2}\frac{1}{|\pi/2-d_{\mathrm{FS}}|} \right\rceil.
\end{equation}
\end{proof}

Theorem~\ref{app_thm:geodesic_complexity} shows that the number of queries is determined by the geodesic length.
Here, the angles $\{(\alpha_{k},\beta_{k})\}_{k=1}^{\mathcal{N}}$ for the Grover iteration are obtained by a simple product formula approximation of ITE.
Note that $d_{\mathrm{FS}}$ ranges over $[0,\pi/2]$ and increases with decreasing overlap between the states.
This implies that larger distances in the complex projective manifold lead to higher query complexities.

Additionally, $d_{\mathrm{FS}}$ aligns with the geodesic length on the special unitary group equipped with a bi-invariant metric up to a multiplicative factor, and hence a similar result could hold in this setting.
The geodesic between two unitary operators $U,V$ on the special unitary group~\cite{lewis2025geodesic} is given by $e^{i\Gamma t}$ for $t\in[0,1]$ with 
\begin{equation}
    \Gamma = -i\log(U^{\dagger}V).
\end{equation}
Accordingly, the geodesic length is expressed as $\| \log(U^{\dagger}V) \|_{\text{HS}}$.
In the context of unstructured search, we are interested in two unitary operators $I$ and $e^{s^{*}[\hat{H}_{f},|\psi_{0}\rangle\langle\psi_{0}|]}$ with $s^{*}$ of Eq.~\eqref{app_eq:opt_s}, since actions of these operators on $\ket{\psi_{0}}$ yield the initial and the solution states, respectively i.e., $\ket{\psi_{0}}=I\ket{\psi_{0}}$ and $\ket{\psi^{*}}=e^{s^{*}[\hat{H}_{f},|\psi_{0}\rangle\langle\psi_{0}|]}\ket{\psi_{0}}$. 
Thus, the geodesic length on the manifold is given by
\begin{equation}
    \| \log(e^{s^{*}[\hat{H}_{f},|\psi_{0}\rangle\langle\psi_{0}|]}) \|_{\text{HS}} = s^{*}\|[\hat{H}_{f},|\psi_{0}\rangle\langle\psi_{0}|]\|_{\text{HS}} = \arccos{(\sqrt{E_{0}})}/\sqrt{V_{0}} \cdot \sqrt{2V_{0}} = \sqrt{2}\arccos{(\sqrt{E_{0}})} = \sqrt{2}d_{\mathrm{FS}}.
\end{equation}

\medskip

We further confirm this scaling is optimal in $N$, even though the Grover iteration is a simple product formula approximation of ITE.
From Eq.~\eqref{eq:query_deriv_mid_2}, we can further bound the left-hand side of Eq.~\eqref{eq:norm_of_implementation} by providing an explicit upper bound on $s$.
Since we have 
\begin{equation}
    s^{*} =\frac{1}{\text{sinc }(2d_{FS})} \le \frac{2}{|\pi/2-d_{\mathrm{FS}}|}=\frac{2}{\arcsin(\sqrt{E_{0}})} \le \frac{2}{\sqrt{E_{0}}},
\end{equation}
we obtain
\begin{equation} \label{app_eq:optimality}
    \frac{(2\sqrt{2}\pi)^2}{\epsilon^2}\frac{1}{\sqrt{E_{0}}} \le \mathcal{N}.
\end{equation}
We remind that $E_{0}=M/N$. Hence, Eq.~\eqref{app_eq:optimality} indicates that the approach also achieves the quadratic speed-up.

We remark that our query complexity might not be the best in practice; for instance, Refs.~\cite{long2001Grover,Roy2022Grover} present specific angle sets that could achieve zero error $\epsilon=0$ with improved multiplicative factor for $\sqrt{N}$. 
However, our main point is to show that the observation from Ref.~\cite{nielsen2006quantum} can also be applied to unstructured search --  i.e., the geodesic length determines the efficiency of the quantum algorithm.

\subsection{Proof of Theorem~\ref{thm:justification_orig_Grover}} \label{app:rationale_original_Grover}

For clarity, we restate Theorem~\ref{thm:justification_orig_Grover} in the main text.

\begin{theoremA}[Original Grover's algorithm] \label{app_thm:justification_orig_Grover}
The original Grover's algorithm generates the state $\ket{\psi_{s_{\tau}}}$ in Eq.~\eqref{eq:dbr}; that is, there exists a parameter $s(\mathcal{N})$ such that 
\begin{align}
    (-1)^{\mathcal{N}}G(\pi,\pi)^{\mathcal{N}}\ket{\psi_{0}} = e^{s(\mathcal{N})[ \hat{H}_f,\psi_{0}]}\ket{\psi_0}.
\end{align}
Moreover, the original Grover algorithm maximizes the fidelity of the first iteration. That is, within the first order approximations in Eq.~\eqref{eq:group_commutator}, corresponding to $\mathcal{N}=2$, the fidelity $F_{2}=|\braket{\psi^{*}|\prod_{k=1}^{2}G(\alpha_k,\beta_k)|\psi_{0}}|^2$ is maximized when $\alpha_k=\beta_k=\pi$, provided $E_{0}\le 1/8$, e.g.,  when $M\ll N$.
\end{theoremA}

\begin{proof}

We first show that the original Grover algorithm with $\mathcal{N}$ iteration, where the angles are fixed to $\alpha_k=\beta_{k}=\pi$, realizes the ITE dynamics for a suitable step size $s(\mathcal{N})$ up to a global phase.
Namely, we prove that there exists $s(\mathcal{N})$ such that 
\begin{equation} \label{app_eq:grover_ite_relation}
    (-1)^{\mathcal{N}}G(\pi,\pi)^{\mathcal{N}}\ket{\psi_{0}} = e^{s(\mathcal{N})[\hat{H}_f, \psi_{0}]} \ket{\psi_{0}}.
\end{equation}
The statement is provided by induction on $\mathcal{N}$.
\begin{itemize}
    \item \textbf{Base case: $\mathcal{N}=1$.}
    For one Grover iteration, we have
    \begin{equation}
    \begin{split}
        G(\pi,\pi)\ket{\psi_{0}} &= (I-2\psi_{0})(I-2\hat{H}_f)\ket{\psi_{0}} \\
        &= (I - 2\psi_{0} -2\hat{H}_f + 4\psi_{0} \hat{H}_f )\ket{\psi_{0}} \\
        &= (4E_{0}-1)\ket{\psi_{0}}  - 2\hat{H}_f\ket{\psi_{0}} ,
    \end{split}
    \end{equation}
    where we use $\psi_{0} \hat{H}_f\ket{\psi_{0}} = \braket{\psi_{0}|\hat{H}_f|\psi_{0}}\ket{\psi_{0}}=E_{0}\ket{\psi_{0}}$.
    On the other hand, the ITE trajectory starting from $\ket{\psi_0}$ takes the form
    \begin{equation} \label{eq:full_form_ite_approx}
        \ket{\psi_{s}} = \left(\left(-\frac{E_{0}}{\sqrt{V_0}}\sin\left(s\sqrt{V_{0}}\right)+\cos\left(s\sqrt{V_{0}}\right)\right)I+ \sin\left(s\sqrt{V_{0}}\right)\frac{\hat{H}_{f}}{\sqrt{V_{0}}}  \right) \ket{\psi_{0}},
    \end{equation}
    where $V_{0}=E_{0}(1-E_{0})$, as shown in Eq.~\eqref{app_eq:full_desc_1st_app_ITE}.
    Thus, matching coefficients of $I$ and $\hat{H}_f$ gives
    \begin{align}
     -\frac{E_{0}}{\sqrt{V_0}}\sin\left(s(1)\sqrt{V_{0}}\right)+\cos\left(s(1)\sqrt{V_{0}}\right) &= 4E_{0}-1, \label{app_eq:N=1_coef_I} \\
     \frac{\sin\left(s(1)\sqrt{V_{0}}\right)}{\sqrt{V_0}} &= - 2. \label{app_eq:N=1_coef_H}
    \end{align}
    By computing \eqref{app_eq:N=1_coef_I} $+$ \eqref{app_eq:N=1_coef_H} $\times$ $E_{0}$, we obtain $\cos\left(s_1\sqrt{V_{0}}\right) = 2E_{0}-1$.
    Since $\sin\left(s(1)\sqrt{V_{0}}\right)$ is negative from Eq.~\eqref{app_eq:N=1_coef_H}, a valid solution is
    \begin{equation}
        s(1)= \frac{1}{\sqrt{V_{0}}}\left(2\pi- \arccos(2E_{0}-1)\right).
    \end{equation}
    Therefore, Eq.~\eqref{app_eq:grover_ite_relation} holds for $\mathcal{N}=1$.
    
    \item \textbf{Inductive step.}
    Assume that for some $k\ge2$, there exists $s(k-1)$ such that 
    \begin{equation}
        G(\pi,\pi)^{k-1}\ket{\psi_{0}} = e^{s(k-1)[\hat{H}_f, \psi_{0}]} \ket{\psi_{0}}.
    \end{equation}
    That is, Eq.~\eqref{eq:full_form_ite_approx} indicates that the state can be written as $\ket{\psi_{k-1}}:= G(\pi,\pi)^{k-1}\ket{\psi_{0}} = (aI+b\hat{H}_f)\ket{\psi_{0}} $ with real coefficients $a,b\in \mathbb{R}$.
    Also, the normalization condition, i.e., $|\braket{\psi_{k-1}|\psi_{k-1}}|=1$, gives $a^2+2ab E_{0} + b^2=1$.

    Applying one additional Grover iteration yields 
    \begin{equation}
    \begin{split}
        G(\pi,\pi)\ket{\psi_{k}} &= (I-2\psi_{0})(I-2\hat{H}_f)\ket{\psi_{k}} \\
        &= (I - 2\psi_{0} -2\hat{H}_f + 4\psi_{0} \hat{H}_f ) (aI+b\hat{H}_{f})\ket{\psi_{0}} \\
        &= (-a+2(2a+b)E_{0})\ket{\psi_{0}}  - (2a+b)\hat{H}_f\ket{\psi_{0}}.
    \end{split}
    \end{equation}
    Then, comparing again with Eq.~\eqref{eq:full_form_ite_approx} leads to
    \begin{align}
     -\frac{E_{0}}{\sqrt{V_0}}\sin\left(s(k)\sqrt{V_{0}}\right)+\cos\left(s(k)\sqrt{V_{0}}\right) &= -a+2(2a+b)E_{0}, \label{app_eq:N=k-1_coef_I} \\
     \frac{\sin\left(s(k)\sqrt{V_{0}}\right)}{\sqrt{V_0}} &= - (2a+b) \label{app_eq:N=k-1_coef_H}.
    \end{align}
    By computing \eqref{app_eq:N=k-1_coef_I} $+$ \eqref{app_eq:N=k-1_coef_H} $\times$ $E_{0}$, we obtain
    \begin{align}
    \cos\left(s(k)\sqrt{V_{0}}\right) &= -a+(2a+b)E_{0}, \label{app_eq:N=k-1_coef_I_2} \\
     \sin\left(s(k)\sqrt{V_{0}}\right) &= - (2a+b)\sqrt{E_{0}(1-E_{0})}. \label{app_eq:N=k-1_coef_H_2}
    \end{align}

    It remains to show that the right-hand sides define a valid sine-cosine pair. 
    Indeed,
    \begin{equation} \label{app_eq:normalization_N=k-1}
    \begin{split}
        &\left(-a+(2a+b)E_{0}\right)^2 + \left( - (2a+b)\sqrt{E_{0}(1-E_{0})}\right)^2 \\
        &= a^2 -2a(2a+b)E_{0} + (2a+b)^2E_{0}^2 + (2a+b)^2 E_{0}(1-E_{0}) \\
        &= a^2 + (2a+b)bE_{0} \\
        &=1,
    \end{split}
    \end{equation}
    where we used the normalization condition in the last line.
    Since $a,b$ are real coefficients, the right-hand sides of Eqs.~\eqref{app_eq:N=k-1_coef_I_2} and~\eqref{app_eq:N=k-1_coef_H_2} are also real.
    Together with Eq.~\eqref{app_eq:normalization_N=k-1}, this implies that the right-hand sides of Eqs.~\eqref{app_eq:N=k-1_coef_I_2} and~\eqref{app_eq:N=k-1_coef_H_2} lie within the range $[-1,1]$.
    Consequently, this indicates that there exist $s(k)$ satisfying  
     \begin{equation}
        G(\pi,\pi)^{k}\ket{\psi_{0}} = e^{s(k)[\hat{H}_f, \psi_{0}]} \ket{\psi_{0}}.
    \end{equation}
\end{itemize}
By induction, the statement holds for all $\mathcal{N}$.

\medskip
\medskip

Next, we show that, under the first-order group commutator approximation of the exponential of a commutator, choosing $\alpha_k=\beta_k=\pi$ is optimal when the initial energy $E_0$ is small.

We recall the group commutator approximation of the exponential of commutator:  
\begin{equation} \label{app_eq:group_commutator}
    e^{s[\hat{H}_f,\psi_{0}]}= \underbrace{e^{i\sqrt{s}\psi_{0}}e^{i\sqrt{s}\hat{H}_f}e^{-i\sqrt{s}\psi_{0}}e^{-i\sqrt{s}\hat{H}_f}}_{:=\mathcal{P}(\sqrt{s})} + \mathcal O(s^{3/2}).
\end{equation}
To determine the maximal effective step size, we consider the energy
$E(\sqrt{s}):=\braket{\psi_{0}|\mathcal{P}^{\dagger}(\sqrt{s}) \hat{H}_{f} \mathcal{P}(\sqrt{s})|\psi_{0}}$, which quantifies the improvement of the state under the approximate ITE.

For any projector $\Psi$, we obtain $e^{i\theta\Psi}=I+c_{\theta}\Psi$ with $c_{\theta}=e^{i\theta}-1$.
Using this identity, the state after applying the operator $\mathcal{P}(\theta)$ reads
\begin{equation}
\begin{split}
    \mathcal{P}(\theta)\ket{\psi_{0}} &= (I+c_{\theta}\psi_{0})(I+c_{\theta}\hat{H}_{f})(I+c^{*}_{\theta}\psi_{0})(I+c^{*}_{\theta}\hat{H}_{f}) \ket{\psi_{0}} \\
    & = \Bigl( I + (c_{\theta}+c^{*}_{\theta})(\hat{H_{f}} + \psi_{0}) + c_{\theta}c^{*}_{\theta} (\hat{H}_{f}^{2} + \psi_{0}^{2}) + (c^{2}_{\theta}+{c^{*}_{\theta}}^2+c_{\theta}c^{*}_{\theta} ) \psi_{0}\hat{H}_{f} + c_{\theta} c^{*}_{\theta} \hat{H}_{f}\psi_{0} \\
    & \qquad \quad + c^{2}_{\theta} c^{*}_{\theta} (\psi_{0}\hat{H}_{f}^2 + \psi_{0}\hat{H}_{f}\psi_{0}) +  c_{\theta} {c^{*}_{\theta}}^{2} (\psi_{0}^2\hat{H}_{f} + \hat{H}_{f}\psi_{0}\hat{H}_{f}) + c^{2}_{\theta} {c^{*}_{\theta}}^2  \psi_{0}\hat{H}_{f}\psi_{0}\hat{H}_{f})\Bigr)  \ket{\psi_{0}} \\
    &= \Bigl( 1+(c_{\theta}+c^{*}_{\theta}+c_{\theta}c^{*}_{\theta})+(c_{\theta}^2+{c^{*}}^2_{\theta}+c_{\theta}c^{*}_{\theta})E_{0} + c_{\theta}c^{*}_{\theta}E_{0}(2c_{\theta} + c^{*}_{\theta} + E_{0}c_{\theta}c^{*}_{\theta}) \Bigr) \ket{\psi_{0}} \\
    & \quad +  \Bigl( (c_{\theta}+c^{*}_{\theta}+c_{\theta}c^{*}_{\theta}) + c_{\theta}c^{*}_{\theta}(1+E_{0}c^{*}_{\theta})\Bigr) \hat{H}_{f} \ket{\psi_{0}} \\
    &= (1+(c_{\theta}^2+{c^{*}}^2_{\theta}+c_{\theta}c^{*}_{\theta})E_{0}+c_{\theta}c^{*}_{\theta}E_{0}(2c_{\theta} + c^{*}_{\theta} + E_{0}c_{\theta}c^{*}_{\theta})) \ket{\psi_{0}} + (c_{\theta}c^{*}_{\theta}(1+E_{0}c^{*}_{\theta}) ) \hat{H}_{f} \ket{\psi_{0}},
\end{split} 
\end{equation}
where we introduce $\hat{H}_{f}^2=\hat{H}_{f}$, $\psi_{0}^2=\psi_{0}$ and $E_{0}=\braket{\psi_{0}|\hat{H}_{f}|\psi_{0}}$ in the last equality.
Also, we utilize $c_{\theta}+c^{*}_{\theta}+c_{\theta}c^{*}_{\theta}=0$ by showing
\begin{align}
    c_{\theta}c^{*}_{\theta} &= (e^{i\theta}-1)(e^{-i\theta}-1) = 2-2\cos(\theta), \\
    c_{\theta}+c^{*}_{\theta} &= (e^{i\theta}-1)+(e^{-i\theta}-1) =-2+ 2\cos(\theta).
\end{align}

With the above form, the energy can be rewritten as 
\begin{equation}
\begin{split}
    \braket{\psi_{0}|\mathcal{P}^{\dagger}(\theta) \hat{H}_{f} \mathcal{P}(\theta)|\psi_{0}} &= \left(C^{I}_\theta\ket{\psi_{0}} + C^{\hat{H}_{f}}_\theta\hat{H}_{f}\ket{\psi_{0}}\right)^{\dagger} \hat{H}_{f} \left(C^{I}_\theta\ket{\psi_{0}} + C^{\hat{H}_{f}}_\theta\hat{H}_{f}\ket{\psi_{0}}\right) \\
    &= E_{0} \left|C^{I}_\theta+C^{\hat{H}_{f}}_\theta\right|^2 \\
    &:= E_{0} g(\theta)^2,
\end{split} 
\end{equation}
where we define $C^{I}_\theta =1+(c_{\theta}^2+{c^{*}}^2_{\theta}+c_{\theta}c^{*}_{\theta})E_{0}+c_{\theta} c^{*}_{\theta}E_{0}(2c_{\theta} + c^{*}_{\theta} + E_{0}c_{\theta}c^{*}_{\theta}) $, $C^{\hat{H}_{f}}_{\theta}=c_{\theta}c^{*}_{\theta}(1+E_{0}c^{*}_{\theta})$ and $g(\theta)= |C^{I}_{\theta}+C^{\hat{H}_{f}}_{\theta}|$.
Since $E_{0}\in(0,1]$, the value of $\theta$ that maximizes $g(\theta)$ corresponds to the optimal step size.

The function $g(\theta)$ can be simplified further:
\begin{equation}
\begin{split}
    g(\theta)&=\left|C^{I}_{\theta}+C^{\hat{H}_{f}}_{\theta}\right| \\
    &= | 1+(c_{\theta}^2+{c^{*}}^2_{\theta}+c_{\theta}c^{*}_{\theta})E_{0}+c_{\theta}c^{*}_{\theta}E_{0}(2c_{\theta} + c^{*}_{\theta} + E_{0}c_{\theta}c^{*}_{\theta}) + c_{\theta}c^{*}_{\theta}(1+E_{0}c^{*}_{\theta}) | \\
    &= |1+  c_{\theta}c^{*}_{\theta} + c_{\theta}c^{*}_{\theta}E_{0}(2(c_{\theta} + c^{*}_{\theta}) + E_{0}c_{\theta}c^{*}_{\theta}) + (c_{\theta}^2+{c^{*}}^2_{\theta}+c_{\theta}c^{*}_{\theta})E_{0} |\\
    &= |1+2(1-\cos(\theta)) + 2(1-\cos(\theta))E_{0}(-4(1-\cos(\theta)) + 2E_{0}(1-\cos(\theta))) + 2(1-2\cos(\theta))(1-\cos(\theta))E_{0} |, \\
\end{split}
\end{equation}
where we use
\begin{equation}
\begin{split}
     c_{\theta}^2+{c^{*}}^2_{\theta}+c_{\theta}c^{*}_{\theta} &= (e^{i\theta}-1)^2 +(e^{-i\theta}-1)^2 + (e^{i\theta}-1)(e^{-i\theta}-1) \\
     &= 2\cos(2\theta) + 4-\cos(\theta) \\
     &= 4\cos^2(\theta) -6\cos(\theta) + 2 \\
     &= 2(1-2\cos(\theta))(1-\cos(\theta)).
\end{split}
\end{equation}

Now, define $X=1-\cos(\theta)\in [0,2]$.
Then, we have
\begin{equation}
\begin{split}
    g(X) &= |1+2X + 2E_{0}X(-4X+2E_{0}X) + 2E_{0}X(2X-1)  | \\
    &= |1+2(1-E_{0})X - 4E_{0}(1-E_{0})X^2 | \\
    &=  \left|- 4E_{0}(1-E_{0})\left(X-\frac{1}{4E_{0}}  \right)^2 + \left(1 + \frac{1-E_{0}}{4E_{0}} \right)  \right|.
\end{split}
\end{equation}

Since $g(X)$ is a quadratic function in $X$, the maximum of $g(X)$ occurs either at the edges $X=0,2$ or at the stationary point $X=1/4E_{0}$.
Evaluating these gives 
\begin{equation}
    g(0)=1, \,\,  g(2)=|16E_{0}^2 -20E_{0} + 5|, \,\, g(1/4E_{0})= 1 + \frac{1-E_{0}}{4E_{0}}. 
\end{equation}
Since $X\in[0,2]$, we analyze the maximum case by case.
As a result, we obtain
\begin{itemize}
    \item If $1/4E_{0}\le2$, i.e., $E_{0}\ge 1/8$, the optimal choice is  $X=1/4E_{0}(2-E_{0})$, corresponding to $$\theta = \arccos(1-1/4E_{0}).$$
    \item If $0<E_{0}\le1/8$, the stationary point exceeds the interval, and the best choice is the boundary $X=2$, i.e., $$\theta=\pi.$$
\end{itemize}
This concludes the proof.

\end{proof}

We here also provide a justification for why this strategy works even in the worst-case scenario, where the marked fraction is exponentially small.
Note that this situation is equivalent to the case where the solution state is far from the initial state.
From out result, the unstructured search problem can be recast as achieving a small $\epsilon$ for
\begin{equation}
    \left\|e^{s^{*}[\hat{H}_f,\psi_{0}]}\ket{\psi_{0}} - \mathcal{P}(\sqrt{s})\ket{\psi_{0}}\right\| \le \epsilon,
\end{equation}
with $s^{*}=\arccos(\sqrt{E_{0}})/\sqrt{V_{0}}$.
Then, we have
\begin{equation}
\begin{split}
     \left\|e^{s^{*}[\hat{H}_f,\psi_{0}]}\ket{\psi_{0}} - \mathcal{P}(s)\ket{\psi_{0}}\right\| & \le \left\|e^{s^{*}[\hat{H}_f,\psi_{0}]}\ket{\psi_{0}} - e^{s[\hat{H}_f,\psi_{0}]}\ket{\psi_{0}}\right\| + \left\|e^{s^{*}[\hat{H}_f,\psi_{0}]}\ket{\psi_{0}} - \mathcal{P}(s)\ket{\psi_{0}}\right\| \\
     & \le |s^{*}-s|\|[\hat{H}_f,\psi_{0}]\|_{\text{HS}} + s^{3/2}   \bigg( \|[\hat{H}, [\hat{H}, \Psi]]\|_{\text{op}} + \|[\Psi, [\Psi, \hat{H}]]  \|_{\text{op}}\bigg) \\
     & \le \sqrt{2}\left(\left|\frac{\pi}{2}-s\sqrt{V_{0}}\right|\right) + 2\sqrt{2V_{0}} s^{3/2},
\end{split}
\end{equation}
where we use the triangle inequality in the first line and apply $|e^{iG}-e^{iG'}|_{\text{op}}\le \|G-G'\|_{\text{op}}$ and Eq.~\eqref{eq:GCI_compilation} in the second line.
In the last line, we use $\|[\hat{H}_f,\psi_{0}]\|_{\text{HS}} \le \sqrt{2V_{0}}$ together with 
Eq.~\eqref{app_eq:norm_bound_1}, followed by $\arccos(\sqrt{E_{0}}) \le \pi/2$.

This inequality shows that, when  $M$ is small (i.e., $V_0 = \Theta(1/N)$), the second term is negligible.
Therefore minimizing the first term, which corresponds to maximizing $s$, contributes to the error reduction.
This justifies the choice of $\sqrt{s}=\pi$ in in the original Grover’s algorithm in the worst-case scenario.
On the other hand, when $M$ is comparable to the number of total items $N$, the second term becomes non-negligible, and the first term can also grow large. This corresponds to the overshooting phenomenon. Consequently, the same choice of $s$ is no longer optimal in this regime.

\subsection{Proof of Proposition~\ref{prop:pi/3_alg}}
\label{app:pi/3_algorithm}

We restate the result for $\pi/3$ algorithm for clarity.

\begin{propositionA}[[ITE formulation for the $\pi/3$-algorithm and the maximal time duration with monotonic convergence] \label{app_prop:pi/3_alg}
For the Grover iterations $\prod_{k=1}^{\mathcal{N}}G_k(t_{2k},t_{2k-1})\ket{\psi_{0}}$, the DB-QITE construction in Ref.~\cite{gluza_DB_QITE_2024} reproduces the recursive structure of $\pi/3$ algorithm, i.e., 
\begin{equation} \label{app_eq:pi/3-alg}
    U_{k+1} = U_{k} D(\pi/3) U_{k}^{\dagger} U_{f}(\pi/3) U_{k},
\end{equation}
Moreover, the choice of $\pi/2$ instead of $\pi/3$ realizes the largest effective ITE dynamics within the same approximation for which the fidelity with the solution state contracts monotonically, independent of the number of marked items $M$.
\end{propositionA}

\begin{proof}

    We first show that the recursive strategy used in the double-bracket quantum imaginary-time evolution (DB-QITE) framework~\cite{gluza_DB_QITE_2024} reproduces the $\pi/3$ algorithm.
    Recall that $\pi/3$ algorithm is given by in Eq.~\eqref{app_eq:pi/3-alg}, i.e. 
    \begin{equation} 
        \underbrace{e^{ -i\sqrt{s}\hat{H}_{f}}e^{ i\sqrt{s}\psi}e^{ i\sqrt{s}\hat{H}_{f}}e^{ i\sqrt{s}\psi}}_{:=\mathcal{P}(s)}=e^{s[\hat{H}_{f},\psi]} + \mathcal{O}(s^{3/2}),
    \end{equation}
    where $\psi=\ket{\psi}\bra{\psi}$.
    Here, we use the identity $[A,B]=[-iA,iB]$ to fix both the order and the sign in the expansion; see App.~\ref{app:product_formula}.
    When the unitary acts on the state $\ket{\psi}$, the expression simplifies to 
    \begin{equation}
        e^{ i\sqrt{s}}\mathcal{P}(s) \ket{\psi} = e^{ -i\sqrt{s}\hat{H}_{f}}e^{ i\sqrt{s}\psi}e^{ i\sqrt{s}\hat{H}_{f}} \ket{\psi},
    \end{equation}
    where only the global phase $e^{ i\sqrt{s}}$ remains after applying $e^{ i\sqrt{s}\psi}$ to the state.
    \vspace{0.1cm}
    We now apply the recursive construction. 
    Defining $\tilde{U}_{k}\ket{\psi_{0}}=\ket{\psi_{k}}$, the update rule reads
     \begin{equation}
    \begin{split}
        \ket{\psi_{k+1}} &= e^{ -i\sqrt{s}\hat{H}_{f}}e^{ i\sqrt{s}\psi_{k}}e^{ i\sqrt{s}\hat{H}_{f}} \ket{\psi_{k}} = e^{ -i\sqrt{s}\hat{H}_{f}}\tilde{U}_{k}e^{ i\sqrt{s}\psi_{0}} \tilde{U}_{k}^{\dagger}e^{ i\sqrt{s}\hat{H}_{f}} \tilde{U}_{k}\ket{\psi_{0}}.
    \end{split}
    \end{equation}
    In fact, the leftmost Hamiltonian evolution $e^{ -i\sqrt{s}\hat{H}_{f}}$ can be omitted when estimating the energy, since it commutes with $\hat{H}_{f}$ and cancels.
    The same reasoning applies at every recursive step, so this factor can be consistently absorbed and removed without affecting the recursive structure.
    Consequently, for the purpose of energy estimation and state preparation, the effective recursion reduces to
    \begin{equation} \label{eq:ite_pi/3}
        \tilde{U}_{k+1} = \tilde{U}_{k}e^{ i\sqrt{s}\psi_{0}} \tilde{U}_{k}^{\dagger}e^{ i\sqrt{s}\hat{H}_{f}} \tilde{U}_{k},
    \end{equation}
    which coincides exactly with the update rule of $\pi/3$-algorithm.
    In particular, setting $\sqrt{s}=\pi/3$ recovers the $\pi/3$-algorithm.

    Next, we identify a set of angles that induces the time-evolution described by Eq.~\eqref{eq:ite_pi/3}, and guarantee monotonic convergence of the expectation value of the energy.
    Let  $E_{k}=\braket{\psi_{k}|\hat{H_{f}}|\psi_{k}}$ denote the energy at iteration $k$ .
    A direct expansion gives
    \begin{equation}
    \begin{split}
         \ket{\psi_{k+1}} &= e^{i\sqrt{s}\psi_{k} } e^{i\sqrt{s}\hat{H_{f}}} \ket{\psi_{k}} =\left(e^{i\sqrt{s}}+(e^{i\sqrt{s}}-1)^2E_{k} \right)\ket{\psi_{k}} + (e^{i\sqrt{s}}-1) \hat{H}_{f}\ket{\psi_{k}}. 
    \end{split}
    \end{equation}
    Then the energy at the next step satisfies
    \begin{equation}\label{eq: growth rate of Ek}
    \begin{split}
        E_{k+1} &=\braket{\psi_{k+1}|\hat{H_{f}}|\psi_{k+1}}  = E_{k} |a(s,k)+b(s)|^2,
    \end{split} 
    \end{equation}
    where $a(s,k)=e^{i\sqrt{s}}+(e^{i\sqrt{s}}-1)^2E_{k}$ and $b(s)=e^{i\sqrt{s}}-1$.
    Therefore, monotonic convergence requires $|a(s,k)+b(s)|^2\ge1$ for all $E_{k}\in(0,1]$.
    By combining the coefficient, we get
    \begin{equation}
        a(s,k)+b(s) = e^{i\sqrt{s}}+(e^{i\sqrt{s}}-1)^2E_{k} + e^{i\sqrt{s}}-1 = E_{k} e^{i2\sqrt{s}} + 2(1-E_{k})e^{i\sqrt{s}}- (1-E_{k}).
    \end{equation}
    and this leads to
    \begin{equation}
    \begin{split}
        &|a(s,k)+b(s)|^2 \\
        &= E_{k}^2 + 4(1-E_{k})^2 + (1-E_{k})^2 + 4E_{k}(1-E_{k})\cos(\sqrt{s}) -4 (1-E_{k})^2\cos(\sqrt{s}) - 2 E_{k}(1-E_{k}) (2\cos^2(\sqrt{s})-1) \\
        &= -4E_{k}(1-E_{k}) \cos^2(\sqrt{s}) - 4(1-2E_{k})(1-E_{k})\cos(\sqrt{s}) + 4E_{k}^2-8E_{k}+ 5. \label{eq: |a+b|^2}
    \end{split}
    \end{equation}
    Thus, imposing the monotonicity condition reveals 
    \begin{equation}
    \begin{split}
        &-4E_{k}(1-E_{k}) \cos^2(\sqrt{s}) - 4(1-2E_{k})(1-E_{k})\cos(\sqrt{s}) + 4E_{k}^2-8E_{k}+ 5 \ge 1 \\
        \Leftrightarrow  & -4E_{k}(1-E_{k}) \cos^2(\sqrt{s}) - 4(1-2E_{k})(1-E_{k})\cos(\sqrt{s}) + 4(1-E_{k})^2 \ge 0 \\
        \Leftrightarrow  & -4(1-E_{k})\left(1-\cos(\sqrt{s})\right)\left(E_{k}\cos(\sqrt{s})+(1-E_{k})\right)  \ge 0 \\
        \Leftrightarrow  & \left(\cos(\sqrt{s})+ \frac{1-E_{k}}{E_k} \right)(\cos(\sqrt{s})-1) \le 0, \\
    \end{split}
    \end{equation}
    where we utilize $E_{k}\in[0,1]$ in the last inequality.
    Since $\cos(\sqrt{s})\le 1$, the inequality therefore reduces to $\cos(\sqrt{s})  \ge 0$, i.e., 
    \begin{equation}
        0\le \sqrt{s} \le \frac{\pi}{2}.
    \end{equation}
    The bound shows that $\sqrt{s}=\pi/2$ is the largest possible evolution time that guarantees monotonic convergence of the energy for all intermediate states: that is, any larger choice violates monotonicity for any $E_{0}$.

%%%%%%%%%%%%%%%%%%%%%%%% Figure %%%%%%%%%%%%%%%%%%%%%%%%

\begin{figure}[t]
\centering
\begin{tikzpicture}
\definecolor{lightgray}{HTML}{F4F4F4}
\definecolor{littlelightgray}{HTML}{ecececff}
\definecolor{pale}{HTML}{7ca3d4ff}
\definecolor{lightred}{HTML}{d8a2a2}
\node[anchor=center] (russell) at (-12,-2.3)
{\centering\includegraphics[width=0.43\textwidth]{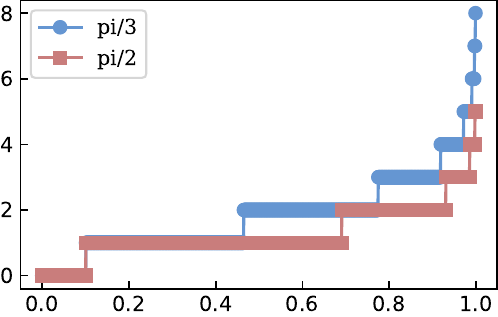}};
\node[text width=4cm] at (-11,-5.3){Initial Overlap $E_{0}$};
\node[rotate=90,text width=4cm] at (-16.2,-2.0) 
    {A Recursion Step $k$ Required};
\node[text width=1cm] at (-16.0,0.5){(a)};
\node[anchor=center] (russell) at (-3.2,-2.3)
{\centering\includegraphics[width=0.43\textwidth]{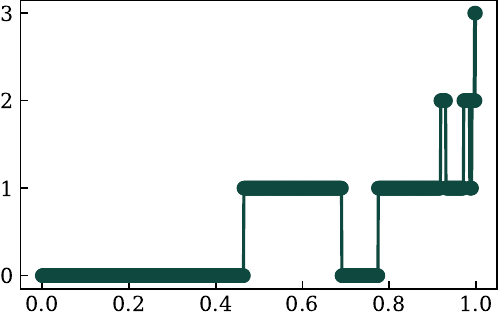}};
\node[text width=4cm] at (-1.9,-5.3){Initial Overlap $E_{0}$};
\node[rotate=90,text width=4cm] at (-7.4,-2.0) 
    {Difference in Recursion Steps};
\node[text width=1cm] at (-7.2,0.5){(b)};\end{tikzpicture}
\caption{\textbf{Comparison between our $\pi/2$-algorithm and $\pi/3$-algorithm.} 
(a) We show the number of recursion steps $k$ required by the $\pi/3$-algorithm and the $\pi/2$-algorithm to reach a final-state infidelity below 0.1.
The data are based on 1,000 points in the range of the initial overlap $E_{0} \in [0,0.999]$.
For each point, we record the recursion step at which the algorithm achieves a fidelity above 0.9.
(b) We plot the difference in recursion steps between the $\pi/2$- and $\pi/3$-algorithms, and confirm that the $\pi/2$-algorithm never requires more steps than the $\pi/3$-algorithm.
}
\label{fig:pi/2_vs_pi/3}
\end{figure}

%%%%%%%%%%%%%%%%%%%%%%%% Figure %%%%%%%%%%%%%%%%%%%%%%%%

    A natural question then arises. 
    The $\pi/3$-algorithm is widely regarded as effective, while the ``$\pi/2$-algorithm" has received little attention.
    By computing 
    \begin{equation}
    E_{k}|a( (\pi/2)^2,k)+b((\pi/2)^2)|^2 - E_{k} |a((\pi/3)^2,k)+b((\pi/3)^2)|^2 = E_{k}(1-E_{k})(2-3E_{k}),
\end{equation}
    we find that $E_{k}> 2/3$ is the condition that energy improvement for the $\pi/2$-algorithm becomes worse than the $\pi/3$-algorithm.
    In this regime, the $\pi/3$-algorithm provides a significantly larger guaranteed amplification, whereas the maximum possible difference in energy improvement is only $2(7\sqrt{7}-10)/243\approx 0.07$ attained at $E_{k}=(4-\sqrt{7})/9\approx0.15$.
    Concretely, by defining $E_{k}=1-\epsilon_{k}$, the $\pi/3$-algorithms satisfies $E_{k+1}=1-\epsilon_{k}^{3}$, whereas $\pi/2$-algorithm gives $E_{k+1}=1-\epsilon_{k} + 4\epsilon_{k}^2- 4\epsilon_{k}^{3}=E_{k} + \mathcal{O}(\epsilon_{k}^{2})$.
    Hence, in the regime $\epsilon_{k} \ll 1$, the $\pi/3$-algorithm still exhibits cubic error suppression, while the $\pi/2$-algorithm produces only a marginal improvement per iteration.
    This asymptotic difference explains the superior performance of the $\pi/3$-algorithm in this regime.
    
    On the other hand, the $\pi/2$-algorithm maintains monotonic convergence and achieves faster amplification when the energy is small, a situation that can occur in practice.
    In the regime $E_{k}\ll1$, the update satisfies $E_{k+1}=3E_{k}-3E_{k}^2+E_{k}^3=3E_{k}+\mathcal{O}(E_{k}^2)$ for the $\pi/3$-algorithm, and $E_{k+1}=5E_{k}-8E_{k}^2+4E_{k}^3=5E_{k}+\mathcal{O}(E_{k}^2)$ for the $\pi/2$-algorithm.
    Hence, in this region, both algorithms exhibit linear amplification, but with different growth rates.
    Since the number of queries after $k$ recursive steps scales as $\mathcal{N}_{k}=(3^{k}-1)/2$, the effective amplification for the $\pi/3$-algorithm yields linear scaling in $N$.
    Specifically, for the error after $k$ step, $\epsilon_k=(1-E_{0})^{3^{k}}$, to be less than $\delta$, we have
    \begin{equation}
    \begin{split}
        &  (1-E_{0})^{3^{k}} \le \delta \\
        & \Leftrightarrow (1-E_{0})^{2\mathcal{N}_{k}+1}  \le \delta\\
        &\Leftrightarrow \mathcal{N}_{k} \ge \frac{1}{2} \frac{\log(\delta)}{\log(1-E_{0})} - \frac{1}{2}. \\
    \end{split}
    \end{equation}
    Using $E_{0}=M/N$ and the bound $-1/\log(1-x)\le 1/x$, we obtain $\mathcal{N}_{k}\in \mathcal{O}\left( N\right)$, which coincides with the scaling of classical search.
    However,  the $\pi/2$-algorithm achieves a superlinear scaling $\mathcal{N}_{k}\in \mathcal{O}(N^{1/1.46}) $, because the error after $k$ step reads $(1-E_{0})^{5^{k}}$ and $5^{k}=(2\mathcal{N}_{k}+1)^{\log_{3}(5)}\approx (2\mathcal{N}_{k}+1)^{1.46}$.
    This represents the faster convergence of the $\pi/2$-algorithm without overshooting.
    Note that, for comparison, the original Grover's algorithm satisfies $E_{k+1}=9E_{k}-24E_{k}^2+16E_{k}^3=9E_{k}+\mathcal{O}(E_{k}^2)$, corresponding to a quadratic scaling $\mathcal{N}_{k} \in \mathcal{O}(N^{1/2})$.
    Therefore, while the $\pi/2$-algorithm converges faster than the $\pi/3$-algorithm, it remains asymptotically slower than the original Grover algorithm.
    Nevertheless, owing to its monotonicity without overshooting, the $\pi/2$-algorithm is particularly advantageous when a relatively large error tolerance is acceptable.
    For instance, Fig.~\ref{fig:pi/2_vs_pi/3} shows the number of recursion step $k$ needed for these algorithms to achieve $\epsilon_{k}\le 0.1$.
    This result shows that, as long as one is happy with a modest failure probability, the $\pi/2$-algorithm works better than the $\pi/3$-algorithm.
\end{proof}

\subsection{Proof of Theorem~\ref{thm:qsp_formula_ITE}} \label{app:thm_of_qsp_ite}

For clarity, we restate our result on fixed-point quantum search from the perspective of the ITE formulation.

\begin{theoremA}[ITE formulation provides a fixed-point search via a QSP framework]\label{app_thm:qsp_formula_ITE}
Consider the basis $\{\ket{\psi_{0}}, \ket{\psi_{0}^{\perp}}\}$ corresponding to $\{\ket{0},\ket{1}\}$; i.e., $\ket{\psi_{0}}=(1,0)^T$ and $\ket{\psi_{0}^{\perp}}=(0,1)^T$.
In the basis, the diffusion operator and the oracle operator can be written as
\begin{equation} \label{app_eq:matrix_form_diffusion}
    D(\alpha)=  e^{i\alpha\psi_{0}}  = e^{i\alpha/2}S_{Z}(\alpha/2),
\end{equation}
\begin{equation} \label{app_eq:qsp_basis_mat}
    U_{f}(\beta)=e^{i\beta \hat{H}_{f}} = e^{i\beta/2} R(\sqrt{E_{0}}) S_{Z}(\beta/2) R(\sqrt{E_{0}})
\end{equation}
with $E_{0}=\braket{\psi_{0}|\hat{H}_{f}|\psi_{0}}$, respectively.
Thus, the Grover iteration can execute the ($R(x)$, $S_{Z}$, $\ket{0}$)-QSP;
\begin{equation} \label{app_eq:dbr_qsp}
\begin{split}
    \prod_{k=1}^{\mathcal{N}}G_{k}(\alpha_k,\beta_k)\ket{\psi_{0}} = S_{Z}(\phi_{2\mathcal{N}}) \prod_{k=0}^{2\mathcal{N}-1} R(\sqrt{E_{0}}) S_{Z}(\phi_{k}) \ket{0},
\end{split}
\end{equation}
where $\phi_{0}=\mathcal{N}\pi+\sum_{l=1}^{N}(\alpha_l+\beta_{l})/2$, $\phi_{2l-1}=\beta_{l}/2$ and $\phi_{2l}=\alpha_{l}/2$ for $l=1,\ldots,\mathcal{N}$.

Moreover, by utilizing the QSP technique, angles $\{\phi_{k}\}_{k=0}^{2\mathcal{N}-1}$ that approximate the sign function as a filter can provide a new implementation of fixed-point quantum search.
\end{theoremA}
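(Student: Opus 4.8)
The plan is to reduce the whole computation to the two-dimensional subspace $\mathcal S := \mathrm{span}\{\ket{\psi_0},\ket{\psi_0^\perp}\}$, work out the $2\times 2$ matrix forms of $D(\alpha)$ and $U_f(\beta)$ there, and then assemble the $\mathcal N$-fold Grover product and read it off against the QSP template. The first step is to check that $\mathcal S$ is invariant under both building blocks, so that the iteration applied to $\ket{\psi_0}\in\mathcal S$ never leaves $\mathcal S$ and the restriction is legitimate. For $D(\alpha)=e^{i\alpha\psi_0}$ this is immediate, since $D(\alpha)$ acts as the identity on the orthogonal complement of $\ket{\psi_0}$. For $U_f(\beta)=I+(e^{i\beta}-1)\hat H_f$ it suffices to show $\hat H_f(\mathcal S)\subseteq\mathcal S$; from the definition of $\ket{\psi_0^\perp}$ in Eq.~\eqref{app_eq:perp_init_state}, together with $\hat H_f^2=\hat H_f$ and $V_0=E_0(1-E_0)$, one gets $\hat H_f\ket{\psi_0}=\sqrt{E_0}\,\ket{\psi^*}$ and $\hat H_f\ket{\psi_0^\perp}=\sqrt{1-E_0}\,\ket{\psi^*}$, where $\ket{\psi^*}=\sqrt{E_0}\ket{\psi_0}+\sqrt{1-E_0}\ket{\psi_0^\perp}\in\mathcal S$. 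Hence $\hat H_f|_{\mathcal S}$ is the rank-one projector $\ket{\psi^*}\bra{\psi^*}$, which in the basis $\ket{\psi_0}\leftrightarrow\ket{0}$, $\ket{\psi_0^\perp}\leftrightarrow\ket{1}$ is the matrix of the column vector $(\sqrt{E_0},\sqrt{1-E_0})^T$, while $\psi_0\leftrightarrow\ket{0}\bra{0}=(I+Z)/2$.

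Next I would derive the two matrix identities that make up the first half of the statement. From $\psi_0=(I+Z)/2$ one has $D(\alpha)=e^{i\alpha(I+Z)/2}=e^{i\alpha/2}e^{i\alpha Z/2}=e^{i\alpha/2}S_Z(\alpha/2)$, which is Eq.~\eqref{app_eq:matrix_form_diffusion}. For the oracle, the key observation is that $R(\sqrt{E_0})\ket{0}=(\sqrt{E_0},\sqrt{1-E_0})^T=\ket{\psi^*}$ and that $R(\sqrt{E_0})$ is a Hermitian involution, $R(\sqrt{E_0})^2=I$. Writing $S_Z(\beta/2)=e^{-i\beta/2}\bigl(I+(e^{i\beta}-1)\ket{0}\bra{0}\bigr)$ and conjugating by $R(\sqrt{E_0})$ gives $e^{i\beta/2}R(\sqrt{E_0})S_Z(\beta/2)R(\sqrt{E_0})=I+(e^{i\beta}-1)\,R(\sqrt{E_0})\ket{0}\bra{0}R(\sqrt{E_0})=I+(e^{i\beta}-1)\ket{\psi^*}\bra{\psi^*}$, which equals $U_f(\beta)|_{\mathcal S}$; this is Eq.~\eqref{app_eq:qsp_basis_mat}.

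Finally I would assemble the product. Substituting the two identities into $G_k=-D(\alpha_k)U_f(\beta_k)$ yields, within $\mathcal S$, $G_k=-e^{i(\alpha_k+\beta_k)/2}\,S_Z(\alpha_k/2)\,R(\sqrt{E_0})\,S_Z(\beta_k/2)\,R(\sqrt{E_0})$. Multiplying these for $k=1,\dots,\mathcal N$ factors out the global scalar $(-1)^{\mathcal N}e^{i\sum_l(\alpha_l+\beta_l)/2}=e^{i\phi_0}$ with $\phi_0=\mathcal N\pi+\sum_l(\alpha_l+\beta_l)/2$, and leaves an operator string that strictly alternates between $S_Z$ factors and copies of $R(\sqrt{E_0})$, with $2\mathcal N$ of each. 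Applied to $\ket{0}$, the innermost factor is $R(\sqrt{E_0})$, and the leftover scalar is absorbed by noting $e^{i\phi_0}\ket{0}=S_Z(\phi_0)\ket{0}$, which supplies the single extra phase rotation needed to match the template $S_Z(\phi_{2\mathcal N})\prod_{k=0}^{2\mathcal N-1}R(\sqrt{E_0})S_Z(\phi_k)\ket{0}$. Reading the string off from the inside out then forces $\phi_{2l-1}=\beta_l/2$ and $\phi_{2l}=\alpha_l/2$ for $l=1,\dots,\mathcal N$, which is exactly Eq.~\eqref{app_eq:dbr_qsp}.

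\textbf{Main obstacle.} The bulk of the proof is bookkeeping: keeping track of operator order across the $\mathcal N$-fold product, collecting the $(-1)$'s and the $e^{i\alpha_k/2},e^{i\beta_k/2}$ into a single phase, and reconciling $2\mathcal N$ reflections with $2\mathcal N+1$ phase rotations via the $S_Z(\phi_0)\ket{0}$ trick. The one conceptually substantive ingredient is recognizing that $\hat H_f|_{\mathcal S}=\ket{\psi^*}\bra{\psi^*}$ with $R(\sqrt{E_0})\ket{0}=\ket{\psi^*}$, which is what makes the sandwich $R(\sqrt{E_0})S_Z(\beta/2)R(\sqrt{E_0})$ reproduce the oracle; the invariance check in the first step is what licenses passing from $\mathbb C^N$ down to $\mathcal S$ in the first place.
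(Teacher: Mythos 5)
Your proposal is correct and follows essentially the same route as the paper: reduce to the two-dimensional subspace, write $D(\alpha)=e^{i\alpha/2}S_Z(\alpha/2)$, factor the restricted oracle as $R(\sqrt{E_0})D(\beta)R(\sqrt{E_0})$ using $R^2=I$, and absorb the accumulated scalar $(-1)^{\mathcal N}e^{i\sum_l(\alpha_l+\beta_l)/2}$ into the innermost $S_Z(\phi_0)$ acting on $\ket{0}$. The only (welcome) addition is your explicit check that $\mathrm{span}\{\ket{\psi_0},\ket{\psi_0^\perp}\}$ is invariant and that $\hat H_f$ restricts there to $\ket{\psi^*}\bra{\psi^*}$ with $\ket{\psi^*}=R(\sqrt{E_0})\ket{0}$, a point the paper uses implicitly when it writes down the $2\times 2$ matrix elements of $\hat H_f$.
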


\begin{proof}
    We first verify the expression in Eq.~\eqref{app_eq:dbr_qsp}.
    By definition, the diffusion operator is given by 
    \begin{equation}
        e^{i\alpha \psi_{0}}=I-(1-e^{i\alpha})\psi_{0}.
    \end{equation}
    Thus, by introducing 
    \begin{equation*}
        I=\begin{pmatrix}1 & 0  \\  0 & 1\end{pmatrix}
    \end{equation*}
    and
    \begin{equation}
        \psi_{0} = |\psi_{0}\rangle\langle\psi_{0}| = \begin{pmatrix}1 & 0  \\  0 & 0\end{pmatrix},
    \end{equation}
    we obtain the expression 
    \begin{equation} \label{app_eq:matrix_form_diffusion_}
        D(\alpha) = \begin{pmatrix}e^{i\alpha} & 0  \\  0 & 1\end{pmatrix},
    \end{equation}

    Similarly, we can write the oracle operator given by 
    \begin{equation}
    U_{f}(\beta) = e^{i\beta \hat{H}_{f}}=I-(1-e^{i\beta})\hat{H}_{f},
    \end{equation}
    in the basis.
    Since $\langle\psi_0|\hat{H}_{f}|\psi_0\rangle=E_{0}$, $\langle\psi_0^{\perp}|\hat{H}_{f}|\psi_0\rangle=\langle\psi_0|\hat{H}_{f}|\psi_0^{\perp}\rangle=\sqrt{E_{0}(1-E_{0})}$ and $\langle\psi_0^{\perp}|\hat{H}_{f}|\psi_0^{\perp}\rangle=1-E_{0}$, the matrix form of $\hat{H}_{f}$ is expressed as
    \begin{equation}
        \hat{H}_{f} = \begin{pmatrix}E_{0} & \sqrt{E_{0}(1-E_{0})}  \\  \sqrt{E_{0}(1-E_{0})} & 1-E_{0}\end{pmatrix}.
    \end{equation}
    Now, we decompose the matrix as 
    \begin{equation}
    \begin{split}
        \begin{pmatrix}E_{0} & \sqrt{E_{0}(1-E_{0})}  \\  \sqrt{E_{0}(1-E_{0})} & 1-E_{0}\end{pmatrix} &=  \begin{pmatrix}\sqrt{E_{0}} & \sqrt{1-E_{0}}  \\  \sqrt{1-E_{0}} & -\sqrt{E_{0}}\end{pmatrix}  \begin{pmatrix}1 & 0  \\  0 & 0\end{pmatrix} \begin{pmatrix}\sqrt{E_{0}} & \sqrt{1-E_{0}}  \\  \sqrt{1-E_{0}} & -\sqrt{E_{0}}\end{pmatrix} \\
        &= R(\sqrt{E_{0}}) \psi_{0} R(\sqrt{E_{0}}),
    \end{split}
    \end{equation}
     where we use the notation defined in Eq.~\eqref{app_eq:qsp_basis_mat}.
     Here, the matrix $R(\sqrt{E_{0}})$ has the property that 
     \begin{equation}
         R(\sqrt{E_{0}})^2 = \begin{pmatrix}\sqrt{E_{0}} & \sqrt{1-E_{0}}  \\  \sqrt{1-E_{0}} & -\sqrt{E_{0}}\end{pmatrix} \begin{pmatrix}\sqrt{E_{0}} & \sqrt{1-E_{0}}  \\  \sqrt{1-E_{0}} & -\sqrt{E_{0}}\end{pmatrix} =\begin{pmatrix}1 & 0  \\  0 & 1\end{pmatrix} = I.
     \end{equation}
    Therefore, the oracle operators can be rewritten as
    \begin{equation} \label{app_eq:uf_2dim}
        U_{f}(\beta) = I-(1-e^{i\beta})\hat{H}_{f} = R(\sqrt{E_{0}})(I-(1-e^{i\beta})\psi_{0})R(\sqrt{E_{0}}) = R(\sqrt{E_{0}})D(\beta)R(\sqrt{E_{0}}).
    \end{equation}
    Combining Eqs.~\eqref{app_eq:matrix_form_diffusion_},~\eqref{app_eq:uf_2dim} together, we obtain
    \begin{equation} 
    \begin{split}
        &\prod_{k=1}^{\mathcal{N}}G_{k}(\alpha_k,\beta_k)\ket{\psi_{0}}  =(-1)^{\mathcal{N}}\prod_{k=1}^{\mathcal{N}} \underbrace{D(\alpha_k)}_{= e^{i\alpha_{k}\psi_{0}}}\underbrace{R(\sqrt{E_{0}})D(\beta_k)R(\sqrt{E_{0}})}_{= e^{i\beta_{k}H_{f}}} \begin{pmatrix} 1 \\ 0 \\ \end{pmatrix}.
    \end{split}
    \end{equation}
    Now, since
    \begin{equation}
        D(\alpha) = \begin{pmatrix}e^{i\alpha} & 0  \\  0 & 1\end{pmatrix} =  e^{i\alpha/2} \begin{pmatrix}e^{i\alpha/2} & 0  \\  0 & e^{-i\alpha/2}\end{pmatrix} = e^{i\alpha/2} S_{Z}(\alpha/2),
    \end{equation}
    we finally arrive at Eq.~\eqref{app_eq:dbr_qsp}.

This finding further allows us to realize the ITE state via the QSP framework.
In the basis $\{\ket{\psi_{0}}, \ket{\psi_{0}^{\perp}}\}$, the ITE state can be expressed as
\begin{equation} \label{app_eq:dbr_vector}
    \ket{\psi_{s}} = \begin{pmatrix}\cos(s\sqrt{V_{0}})   \\  \sin(s\sqrt{V_{0}}) \end{pmatrix}
\end{equation}
with $V_{0}=E_{0}(1-E_{0})$.
Namely, if QSP can realize the target function $\cos(s x \sqrt{1-x^2})$, i.e., an element in ITE state of Eq.~\eqref{app_eq:dbr_vector} with $x=\sqrt{E_{0}}$, the corresponding set of angles can be used for realizing the ITE state via the Grover iterations.

To determine whether this can be realized, we first examine the achievability of the function via QSP.
As discussed in App.~\ref{app:qsp_overview}, the function must satisfy five conditions for successful realization via QSP in this setting.
The first three conditions are straightforwardly satisfied; the cosine function $\cos(t)$ is an even function, bounded by 1 for $t\in[-1,1]$, and can be approximated by degree-$K$ polynomials using e.g., Taylor expansion.     
In general, the fourth and fifth conditions significantly restrict the class of realizable functions~\cite{martyn2021grand}. 
Despite this, the target function meets both conditions favorably.
For the fourth condition, observe that $\sqrt{1-x^{2}} \in \mathbb{C}$ for $x\in(-\infty,-1] \cup [1,\infty)$, and since $\cos(it)=\cosh(t)$, the function $\cos(sx\sqrt{1-x^2})$ exceeds 1 in magnitude, as required.
For the fifth condition, note that the function is even. 
To verify this condition, we evaluate the function under the substitution $x \to ix$ and then we have $\cos(isx\sqrt{1+x^2})=\cosh(sx\sqrt{1+x^2})$, which is real and greater than 1 for $x\in \mathbb{R}$; that is the requirement is satisfied.
As a result, all five conditions are met, and thus there exists a set of QSP angles that realizes the target function for sufficiently large polynomial degrees.

We further consider the efficiency of the QSP implementation of ITE for unstructured search. 
Theorem~\ref{thm:qsp_formula_ITE} establishes the existence of a set of angles that enables  QSP implementation of ITE for a large value of $\mathcal{N}$.
However, it does not provide insights into the efficiency and feasibility of the implementation.
Indeed, it is known that trigonometric functions admit a low-degree polynomial approximation:

\begin{lemmaA}[Polynomial approximation of trigonometric functions by  Jacobi-Anger expansion~\cite{gilyen2019quantum,martyn2021grand}]\label{lem:jacobi-anger}
    Consider $s\in\mathbb{R}$ and $\epsilon\in(0,\frac{1}{e})$.
    Let $K=\lfloor\frac{1}{2}r(s,\epsilon)\left(\frac{e|s|}{2},\frac{5}{4}\epsilon\right)\rfloor$ be a degree. Then, trigonometric functions can be approximated as follows;
    \begin{align} \label{eq:ja_expansion_cos}
        \|\cos(sx)-J_{0}(s)+2\sum_{l=1}^{K}(-1)^{l}J_{2l}(s)T_{2l}(x)\|_{[-1,1]}\le\epsilon,\\
        \|\sin(sx)-2\sum_{l=1}^{K}(-1)^{l}J_{2l+1}(s)T_{2l+1}(x)\|_{[-1,1]}\le\epsilon,
    \end{align}
    where $T_{m}(x)$ are the Chebyshev polynomials of the first kind, $J_{m}(s)$ are the Bessel functions of the first kind and $r(s,\epsilon)$ is a function that asymptotically scales as
    \begin{equation}
        r(s,\epsilon)=\Theta\left(|s|+\frac{\log(1/\epsilon)}{\log(e+\frac{\log(1/\epsilon)}{|s|})}\right).
    \end{equation}
\end{lemmaA}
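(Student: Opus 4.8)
The plan is to obtain the lemma from the classical Jacobi--Anger expansion combined with a quantitative decay estimate for Bessel functions, following the arguments of Refs.~\cite{gilyen2019quantum,martyn2021grand}. The starting point is the Bessel generating identity $e^{(z/2)(t-1/t)}=\sum_{n=-\infty}^{\infty}J_n(z)\,t^n$; substituting $t=ie^{i\theta}$ gives $e^{iz\cos\theta}=\sum_{n=-\infty}^{\infty}i^{n}J_n(z)e^{in\theta}$. Taking real and imaginary parts and collapsing the sum over negative indices via the parity relation $J_{-n}(z)=(-1)^{n}J_n(z)$ leaves only even, respectively odd, orders, yielding $\cos(z\cos\theta)=J_0(z)+2\sum_{l=1}^{\infty}(-1)^{l}J_{2l}(z)\cos(2l\theta)$ and $\sin(z\cos\theta)=2\sum_{l=0}^{\infty}(-1)^{l}J_{2l+1}(z)\cos((2l+1)\theta)$.

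Next I would substitute $x=\cos\theta\in[-1,1]$ and use the defining property $T_m(\cos\theta)=\cos(m\theta)$ of the Chebyshev polynomials of the first kind. With $z=s$ this turns the two expansions into infinite Chebyshev series in $x$, which are precisely the series truncated at degree $2K$ (resp.\ $2K+1$) appearing in the lemma. Since $\|T_m\|_{[-1,1]}=1$, the truncation error for the cosine case is bounded by $2\sum_{l>K}|J_{2l}(s)|$ and for the sine case analogously, so it remains to show this tail is at most $\epsilon$ for the stated value of $K$.

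The final ingredient is the standard Bessel tail bound $|J_m(s)|\le\frac{1}{m!}\left(\frac{|s|}{2}\right)^{m}$, which by Stirling gives $|J_m(s)|\lesssim\left(\frac{e|s|}{2m}\right)^{m}$; once $m$ exceeds $\frac{e|s|}{2}$ this decays super-geometrically, so the tail $\sum_{l>K}|J_{2l}(s)|$ is dominated by its first term up to a constant. Requiring $\bigl(\tfrac{e|s|}{4K}\bigr)^{2K}\lesssim\epsilon$ and solving for $K$ produces a threshold of the form $K\gtrsim\frac{e|s|}{2}+\frac{\log(1/\epsilon)}{\log\!\bigl(e+\log(1/\epsilon)/|s|\bigr)}$, which is exactly $K=\bigl\lfloor\tfrac12 r\bigl(\tfrac{e|s|}{2},\tfrac54\epsilon\bigr)\bigr\rfloor$ with $r(s,\epsilon)=\Theta\!\bigl(|s|+\log(1/\epsilon)/\log(e+\log(1/\epsilon)/|s|)\bigr)$; the extra constants $e/2$ and $5/4$ simply absorb the loss incurred in the Stirling step and in summing the tail.

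The expansion and its rewriting in Chebyshev polynomials are routine, so the \emph{main obstacle} is this last step: extracting the precise asymptotic form of $r(s,\epsilon)$ requires a careful two-regime analysis of the Bessel sum --- an ``oscillatory'' regime $m\lesssim|s|$ where $|J_m(s)|$ decays only polynomially and contributes negligibly to the tail, and a ``Gaussian-decay'' regime $m\gtrsim|s|$ governed by the factorial --- together with inverting the transcendental inequality $\bigl(\tfrac{e|s|}{4K}\bigr)^{2K}\le\epsilon$, which is where the nested logarithm $\log(e+\log(1/\epsilon)/|s|)$ arises. Since this quantitative statement is proved in Refs.~\cite{gilyen2019quantum,martyn2021grand}, I would present the reduction above and cite those works for the tail estimate rather than reproduce the full computation.
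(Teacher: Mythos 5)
The paper gives no proof of this lemma---it is imported verbatim by citation from Refs.~\cite{gilyen2019quantum,martyn2021grand} (Lemmas~57--59 of Gily\'en et al.)---and your sketch is a correct reconstruction of the standard argument in those references: the generating-function derivation of the Jacobi--Anger series, the passage to Chebyshev polynomials via $x=\cos\theta$, the $\ell^{1}$ tail bound using $|J_m(s)|\le (|s|/2)^m/m!$ together with $\|T_m\|_{[-1,1]}=1$, and deferral of the transcendental inversion that defines $r(s,\epsilon)$ to the cited works, which is exactly what the paper itself does. One cosmetic remark: since your derivation (correctly) gives $\cos(sx)=J_0(s)+2\sum_{l\ge 1}(-1)^{l}J_{2l}(s)T_{2l}(x)$, the first displayed bound should read $\cos(sx)-J_0(s)-2\sum_{l=1}^{K}(-1)^{l}J_{2l}(s)T_{2l}(x)$; the ``$+$'' in the lemma as stated is a sign typo inherited into the statement, not a flaw in your argument.
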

This suggests that trigonometric functions can be approximated using Chebyshev polynomials of degree $K=\lfloor\frac{1}{2}r(s,\epsilon)\left(\frac{e|s|}{2},\frac{5}{4}\epsilon\right)\rfloor$ with the error $\epsilon$.
However, there is a caveat; the target function in our case is $\cos(sg(x))$ with $g(x)=x\sqrt{1-x^2}$, which introduces additional nonlinearity.
As a result, a higher polynomial degree may be required to achieve the same approximation accuracy.
Nevertheless, it does not incur significant additional cost.
Suppose $\cos(x)$ can be approximated by a  polynomial of degree $2K$, i.e., $\cos(x)\approx \sum_{l=1}^{K}c_l x^{2l}$ with a specific coefficient set $\{c_l\}$.
Now, by replacing $x$ with $g(x)=x\sqrt{1-x^{2}}$, the approximation of $\cos(g(x))$ is given by
\begin{equation}
    \cos(g(x)) \approx \sum_{l=1}^{K}c_l (x\sqrt{1-x^{2}})^{2l}=\sum_{l=1}^{K}c_l (x^2(1-x^2))^{l} = \sum_{l=1}^{K}c_l' x^{4l}
\end{equation}
with the set of rearranged coefficients $\{c_{l}'\}$.
Therefore, the degree required to realize $\cos(sg(x))$ is still $4K$, indicating the complexity is the same as $\cos(x)$ up to the multiplicative factor 2.

\medskip
\medskip

Finally, we show that the equivalence between Grover iterations and QSP can be utilized for fixed-point quantum search; namely,  the solution state $\ket{\psi^{*}}$ can be approximately realized using this QSP formulation without knowing $E_{0}$.
Here, given the Grover iteration $\prod_{k=1}^{\mathcal{N}}G_{k}(\alpha_k,\beta_k)$, we aim to realize $W\ket{\psi_{0}}=\ket{\psi_{0}}$, where the matrix $W$ is defined as
\begin{equation} \label{app_eq:w_matrix}
    W\equiv D(\beta_{\mathcal{N}})R(\sqrt{E_{0}})\prod_{k=1}^{\mathcal{N}-1} G_{k}(\alpha_k,\beta_k),
\end{equation}
by properly choosing $\{(\alpha_k,\beta_k)\}_{k=1}^{\mathcal{N}}/\{\alpha_{\mathcal{N}}\}$ via QSP.
Namely, if we can realize $W\ket{\psi_{0}}=\ket{\psi_{0}}$, we obtain
\begin{equation}
\begin{split}
    (-1)\prod_{k=1}^{\mathcal{N}}G_{k}(\alpha_k,\beta_k)\ket{\psi_{0}}&=D(0)R(\sqrt{E_{0}})W \ket{\psi_{0}} \\
    &= \begin{pmatrix} \sqrt{E_{0}} & \sqrt{E_{0}(1-E_{0})} \\  \sqrt{E_{0}(1-E_{0})} & -\sqrt{E_{0}}\end{pmatrix} \begin{pmatrix}1 & 0  \\  0 & \pm1\end{pmatrix} \begin{pmatrix}1   \\  0 \end{pmatrix}\\& =  \begin{pmatrix} \sqrt{E_{0}}  \\  \sqrt{E_{0}(1-E_{0})} \end{pmatrix} = \ket{\psi^{*}},
\end{split}
\end{equation}
by setting $\alpha_{\mathcal{N}}=0$.
That is, since a simple application of $R(\sqrt{E_{0}})$ to the initial state prepares the solution state $\ket{\psi^{*}}=(\sqrt{E_{0}},\sqrt{1-E_{0}})^{T}$, we aim to realize $\braket{\psi_{0}|W|\psi_{0}}=1$

An example of the target function to realize Eq.~\eqref{app_eq:w_matrix} via QSP would be the sign function.
Note that we require $p_{QSP}(\sqrt{E_0}) = \langle 0| W | 0 \rangle = 1$, but we do not care about the value of $ p_{QSP}(x) $ for negative $x$ values as $\sqrt{E_0}$ is always positive.
From this perspective, a constant function can also satisfy the condition.
However, $W$ is composed of an odd ($2\mathcal{N}-1$) number of applications of the signal operator $R(\sqrt{E_0})$, meaning the function is necessarily an odd function.
Thus, implementing the sign function by obtaining the set of angles $\{(\alpha_k,\beta_k)\}_{k=1}^{\mathcal{N}}/\{\alpha_{\mathcal{N}}\}$ can realize the fixed-point algorithm, since the QSP formulation does not require information on $\sqrt{E_{0}}$.
Note that, the solution state can be obtained by applying $R(\sqrt{E_{0}})$ to the initial state $\ket{\psi_{0}}$, i.e., $\ket{\psi^{*}}=R(\sqrt{E_{0}})\ket{\psi_{0}}$ by definition.
However, the Grover iteration inherently applies an even number of $R(\sqrt{E_{0}})$ and thus cannot implement a single $R(\sqrt{E_{0}})$.
Therefore, we construct the matrix Eq.~\eqref{app_eq:w_matrix} to realize the solution state.

We begin by analyzing how the approximation error in the target function, introduced through the QSP construction, propagates to the final fidelity, defined as $F=|\braket{\psi^{*}|\tilde{\psi}}|^2$, where $\ket{\tilde{\psi}}$ is the final state obtained from the QSP formulation of ITE.
Assume that the QSP approximation incurs an error of $\delta^2/2$ and hence the the matrix $W$ in Eq.~\eqref{app_eq:w_matrix} can be changed such that
\begin{equation}
    \tilde{W} = \begin{pmatrix}1-\delta^2/2 & *  \\  * & *\end{pmatrix}.
\end{equation}
Then, the fidelity is given by
\begin{equation} \label{eq:error_qsp_sign}
\begin{split}
    |\braket{\psi^{*}|(-1)D(0)R(\sqrt{E_{0}})\tilde{W}|\psi_{0}}|^{2} &=\left| \begin{pmatrix} \sqrt{E_{0}} , \sqrt{1-E_{0}} \\ \end{pmatrix} \begin{pmatrix}\sqrt{E_{0}} & \sqrt{1-E_{0}}  \\  \sqrt{1-E_{0}} & -\sqrt{E_{0}}\end{pmatrix}\begin{pmatrix}1-\delta^2/2 & *  \\  * & *\end{pmatrix}\begin{pmatrix} 1 \\ 0 \\ \end{pmatrix}\right|^2 \\
     &=\left| \begin{pmatrix} 1 , 0 \\ \end{pmatrix} \begin{pmatrix}1-\delta^2/2 & *  \\  * & *\end{pmatrix}\begin{pmatrix} 1 \\ 0 \\ \end{pmatrix}\right|^2 \\
    &= (1-\delta^2/2)^2 \ge 1-\delta^2,
\end{split}
\end{equation}
with $\alpha_{\mathcal{N}}=0$.
Thus, if the approximation to the sign function is sufficiently accurate, the final state remains close to the desired solution state.

The remaining question concerns the polynomial degree required to attain the desired precision.
Actually, there exists an effective approximation technique for the function:

\begin{lemmaA}[Approximation of the sign function $\mathrm{sgn}(x)$~\cite{lin2020near,martyn2021grand}]
Consider $\eta>0$, $x\in\mathbb{R}$ and $\Delta\in(0,1/2)$.
Given a degree $K=\mathcal{O}(\log(1/\Delta)/\eta)$, there exists an odd polynomial $p(x)$, such that
\begin{itemize}
    \item for all $x\in[-2,2]$: $|p(x)|\le1$ and
    \item for all $x\in[-2,2]\backslash(-\eta,\eta)$: $|p(x)-\mathrm{sgn}(x)|\le\Delta$
\end{itemize}
where 
\begin{equation}
    \mathrm{sgn}(x)=
  \begin{cases}
    1 & \text{if $x>0$,} \\
    -1                 & \text{if $x<0$,} \\
    0       & \text{if $x=0$.}
  \end{cases}
\end{equation}
\end{lemmaA}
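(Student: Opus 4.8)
The plan is to reduce the sign function to the error function $\mathrm{erf}(t)=\tfrac{2}{\sqrt{\pi}}\int_{0}^{t}e^{-u^{2}}\,du$ — which is smooth, odd, and bounded in modulus by $1$ — and then invoke a sharp polynomial approximation of $\mathrm{erf}$. I would work throughout on $[-2,2]$ (rescaling $x\mapsto x/2$ only changes constants, and if $\eta\ge 2$ the second bullet is vacuous). Set $g_{k}(x)=\mathrm{erf}(kx)$ for a steepness parameter $k>0$ to be fixed. Using the complementary-error-function tail bound $\mathrm{erfc}(t)=1-\mathrm{erf}(t)\le e^{-t^{2}}$ for $t>0$ together with the oddness of $\mathrm{erf}$, one gets $|g_{k}(x)-\mathrm{sgn}(x)|\le e^{-k^{2}\eta^{2}}$ for every $x$ with $|x|\ge\eta$. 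Choosing $k=\eta^{-1}\sqrt{\log(4/\Delta)}=\mathcal{O}\!\big(\eta^{-1}\sqrt{\log(1/\Delta)}\big)$ makes the right-hand side at most $\Delta/4$, while $|g_{k}(x)|<1$ holds for all real $x$.

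\textbf{Step 2: polynomial approximation of $\mathrm{erf}$.} Writing $\mathrm{erf}(kx)=\tfrac{2k}{\sqrt{\pi}}\int_{0}^{x}e^{-k^{2}t^{2}}\,dt$ and expanding the Gaussian integrand in Chebyshev polynomials via a Jacobi--Anger/Bessel expansion (cf. Lemma~\ref{lem:jacobi-anger}), the expansion coefficients are proportional to $e^{-k^{2}/2}I_{n}(k^{2}/2)$ with $I_{n}$ the modified Bessel function of the first kind; these decay like $e^{-n^{2}/k^{2}}$, i.e., Gaussianly in the degree. Truncating at degree $N=\mathcal{O}\!\big(k\sqrt{\log(1/\Delta)}\big)$ and integrating term by term then yields a polynomial $q$ with $\|q-g_{k}\|_{[-2,2]}\le\Delta/4$ of degree $N+1=\mathcal{O}\!\big(k\sqrt{\log(1/\Delta)}\big)=\mathcal{O}\!\big(\log(1/\Delta)/\eta\big)$, which is the claimed scaling. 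Replacing $q$ by its odd part $\tfrac{1}{2}\big(q(x)-q(-x)\big)$ can only decrease the error (since $g_{k}$ is odd) and preserves the degree, so I may take $q$ odd. This step is exactly the $\mathrm{erf}$-approximation lemma used in Refs.~\cite{gilyen2019quantum,lin2020near,martyn2021grand}, which one may simply cite.

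\textbf{Step 3: renormalize and assemble.} From Step 2, $\|q\|_{[-2,2]}\le 1+\Delta/4$; set $p=q/(1+\Delta/4)$, which is odd, of the same degree, and obeys $|p(x)|\le 1$ on $[-2,2]$, giving the first bullet. For $x\in[-2,2]\setminus(-\eta,\eta)$, the triangle inequality $|p-\mathrm{sgn}|\le|p-q|+|q-g_{k}|+|g_{k}-\mathrm{sgn}|$ bounds the first term by $\tfrac{\Delta/4}{1+\Delta/4}\|q\|_{[-2,2]}\le\Delta/4$ and the remaining two by $\Delta/4$ each (Steps 2 and 1), so $|p(x)-\mathrm{sgn}(x)|\le 3\Delta/4\le\Delta$, establishing the second bullet; the constants can be tightened at will.

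\textbf{Main obstacle.} The one nontrivial point is the sharp degree count in Step 2: a crude Taylor truncation of $\mathrm{erf}(kx)$ only gives degree $\mathcal{O}(k^{2}+\log(1/\Delta))$, which would degrade the final bound to $\mathcal{O}(\log(1/\Delta)/\eta^{2})$. The improvement to $\mathcal{O}(k\sqrt{\log(1/\Delta)})$ hinges on the Gaussian decay $e^{-z}I_{n}(z)\sim(\pi z)^{-1/2}e^{-n^{2}/(2z)}$ of the Bessel/Chebyshev coefficients of the Gaussian, i.e., on genuine cancellation in the Chebyshev series rather than absolute summability of the Taylor series. Since this precise estimate is standard in the QSP/QSVT literature and is exactly what the cited references invoke, I would use it as a black box, leaving only the elementary Steps 1 and 3 to be carried out in detail.
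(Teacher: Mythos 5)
Your argument is correct, but note that the paper does not prove this lemma at all --- it is imported verbatim from Refs.~\cite{lin2020near,martyn2021grand} as a black-box result. Your reconstruction (smooth $\mathrm{sgn}$ by $\mathrm{erf}(kx)$ with $k=\mathcal{O}(\eta^{-1}\sqrt{\log(1/\Delta)})$ via the $\mathrm{erfc}(t)\le e^{-t^{2}}$ tail bound, approximate the Gaussian integrand by its Chebyshev/modified-Bessel expansion to get an odd polynomial of degree $\mathcal{O}(k\sqrt{\log(1/\Delta)})=\mathcal{O}(\log(1/\Delta)/\eta)$, then renormalize by $1+\Delta/4$ to enforce $|p|\le 1$ on $[-2,2]$) is exactly the standard proof given in those references, and your bookkeeping in Steps 1 and 3 is sound; you have also correctly isolated the one genuinely nontrivial ingredient, namely the Gaussian decay of the Chebyshev coefficients that upgrades the naive $\mathcal{O}(\log(1/\Delta)/\eta^{2})$ degree to the claimed $\mathcal{O}(\log(1/\Delta)/\eta)$.
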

Consequently, combining the error analysis in Eq.~\eqref{eq:error_qsp_sign} with the parameters $\eta=\sqrt{E_{0}}$ and $\Delta=\delta^2/2$ reveals that a degree of polynomial $K\in\mathcal{O}(\log(2/\delta^2)/\sqrt{E_{0}})$ ensures that the final fidelity exceeds $1-\delta^2$.
This scaling matches the complexity of the fixed-point algorithm.

We note that the sign function does not satisfy the conditions for the achievable functions via the ($R(x)$, $S_{Z}$, $\ket{0}$)-QSP setting, as shown in App.~\ref{app:qsp_overview}.
However, Ref.~\cite{martyn2021grand} demonstrates that, allowing for a small imaginary components enables the effective approximation of the sign function in practical settings.
Therefore, while additional errors may occur because of the imaginary components, the implementation can be feasible in practice.
\end{proof}

\subsection{Proof of Theorem~\ref{prop:ite_oaa}} \label{app:thm_oaa}

We restate the result from the main text on oblivious amplitude amplification (OAA)~\cite{berry2014exponential} in a self-contained form.

\begin{theoremA}[ITE formulation for OAA] \label{app_prop:ite_oaa}
    Consider the Hamiltonian $\hat{H}_{f}=\ket{0}\bra{0} \otimes V \ket{\phi} \bra{\phi} V^{\dagger}$ and the initial state $\ket{\psi}=U\ket{0}\ket{\phi}$.
    Then, our ITE formulation realizes the OAA procedure, i.e., 
    \begin{equation}
        e^{s[\hat{H}_{f}, \psi_{0}]} \ket{\psi_{0}} \approx \prod_{k=1}^{\mathcal{N}}  U \tilde{D}(\alpha_{k}) U^{\dagger} \tilde{D}(\beta_{k}) \ket{\psi_0}
    \end{equation}
    for proper choice of $\{\alpha_k, \beta_{k}\}$ corresponding to $s$. Here, $ \tilde{D}(\theta)=e^{i\theta \ket{0}\bra{0} }\otimes I$.
\end{theoremA}

\begin{proof}
The goal of OAA~\cite{berry2014exponential} is to prepare the state $V\ket{\phi}$, where $V$ is a target unitary and $\ket{\phi}$ is an arbitrary initial state. 
In this setting, $V$ is not directly implementable, but access to $V$ is provided through a unitary $U$ acting as
\begin{equation}
    U\ket{0}\ket{\phi} = \sqrt{p}\ket{0}V\ket{\phi} + \sqrt{1-p} \ket{1}\ket{\phi'}
    \label{app_eq OAA def}
\end{equation}
with an unknown parameter $p\in (0,1)$.
The OAA procedure applies $\mathcal{N}$ repeated iterations of the operator $W(\alpha,\beta)= -U\tilde{D}(\alpha)U^{\dagger}\tilde{D}(\beta)$, where 
\begin{equation}
    \tilde{D}(\theta)=   e^{i\theta \ket{0}\bra{0}}  \otimes I\ .
\end{equation}
Then, by properly choosing the phases, the state $\prod_{k=1}^{\mathcal{N}} W_{k}(\alpha_{k},\beta_{k}) U \ket{0}\ket{\psi}$ can yield $\ket{0}V\ket{\psi}$, removing the need for post-selection.
The resulting query complexity follows the same scaling as Grover’s algorithm~\cite{grover1996fast}.

We show that the quantum circuit used in OAA can be reproduced through our ITE formulation.
As stated in Prop.~\ref{app_prop:ite_oaa}, we define the projector Hamiltonian and the initial state as
\begin{equation}
    \hat{H}_{f} = \ket{0}\bra{0} \otimes V \ket{\phi} \bra{\phi} V^{\dagger}, \quad \ket{\psi_{0}}  =  U\ket{0}\ket{\phi}.
\end{equation}
The DBQA framework in this setting results in the Grover iteration
\begin{equation}
\begin{split} \label{app_eq:AA}
    G(\alpha,\beta) &=  e^{i\alpha \psi_{0}} e^{i\beta \hat{H}_{f}} \\
    &= U e^{i\alpha (\ket{0}\bra{0}\otimes \phi_{0})} U^{\dagger} (I\otimes V) e^{i\beta(\ket{0}\bra{0}\otimes \phi_{0})} (I\otimes V^{\dagger}) 
\end{split}  
\end{equation}
%%%%
with $\phi_{0}=\ket{\phi}\bra{\phi}$.
This construction is equivalent to amplitude amplification (AA) and can be implemented assuming the target unitary $V$ is accessible as an \textit{oracle}.
However, it remains different from OAA, since Eq.~\eqref{app_eq:AA} requires the full reflection $e^{i\theta (\ket{0}\bra{0}\otimes \phi_{0})}$ and direct access to $V$.
To bridge this gap, we employ the following identity.

\begin{lemmaA} \label{lem: partial-full reflection equality}
    Consider a quantum state of the form $\ket{\Psi}=z_{0} \ket{0}\ket{\phi} + z_{1} \ket{1}\ket{\phi'} $ with $z_{0},z_{1} \in \mathbb{C}$ and arbitrary states $\ket{\phi},\ket{\phi'}$.
    Then, the reflection around the first component $\ket{0}\ket{\phi} $ in the state $\ket{\Psi}$ can be implemented by reflecting around the partial system;
\begin{equation}
    e^{i\theta (\ket{0}\bra{0}\otimes \phi)} \ket{\Phi} = (e^{i\theta \ket{0}\bra{0}}\otimes I) \ket{\Phi}.
\end{equation}
\end{lemmaA}

\begin{proof}
    The full reflection operator can be written as
    \begin{equation} \label{eq:full_refleciton_em}
    \begin{split}
        e^{i\theta (\ket{0}\bra{0}\otimes \phi)} = e^{i\theta (\ket{0}\bra{0}\otimes (I-\phi^{\perp})}= (e^{i\theta \ket{0}\bra{0} }\otimes I ) e^{i\theta (\ket{0}\bra{0}\otimes \phi^{\perp})},
    \end{split}
    \end{equation}
    where we define $\phi^{\perp} = I- \ket{\phi}\bra{\phi}$ satisfying $\braket{\phi|\phi^{\perp}|\phi}=0$.
    Applying Eq.~\eqref{eq:full_refleciton_em} to the state $\ket{\Psi}$ gives
    \begin{equation}
    \begin{split}
        e^{i\theta (\ket{0}\bra{0}\otimes \phi)} \ket{\Psi}&=(e^{i\theta \ket{0}\bra{0} }\otimes I ) e^{i\theta (\ket{0}\bra{0}\otimes \phi^{\perp})} \ket{\Psi} \\
        &= (e^{i\theta \ket{0}\bra{0} }\otimes I )\ket{\Psi}.
    \end{split}
    \end{equation}
    In the second equality, we use the expansion 
    \begin{equation}
        e^{i\theta (\ket{0}\bra{0}\otimes \phi^{\perp})} = I + (e^{i\theta}-1) (\ket{0}\bra{0}\otimes \phi^{\perp})
    \end{equation}
    and observe that $ (\ket{0}\bra{0}\otimes \phi^{\perp}) \ket{\Psi}=0$.
    This completes the proof.
\end{proof}

Now, using Lemma~\ref{lem: partial-full reflection equality}, the quantum circuit for AA in Eq.~\eqref{app_eq:AA} can be decomposed further.
Concretely, we have
\begin{equation}
\begin{split}
    &(I\otimes V) e^{i\beta (\ket{0}\bra{0}\otimes \phi_{0})} (I\otimes V^{\dagger}) \ket{\psi_{0}} \\
    &=(I\otimes V) e^{i\beta (\ket{0}\bra{0}\otimes \phi_{0})} \left(\sqrt{p}\ket{0}\ket{\phi} + \sqrt{1-p} \ket{1}V^{\dagger}\ket{\phi'}\right) \\
    &=  (e^{i\beta \ket{0}\bra{0}} \otimes I) \ket{\psi_{0}}.
\end{split}
\end{equation}
Eq.~\eqref{app_eq OAA def} is inserted and $I\otimes V^{\dagger}$ is applied to the state in the first equality, whereas, in the second equality, Lemma~\ref{lem: partial-full reflection equality} is used to simplify the action of the reflection and $I\otimes V$ is applied.
Moreover, Lemmas 3.6 and 3.7 of Ref.~\cite{berry2014exponential} give $U^{\dagger}(\ket{0}V\ket{\phi})=\sqrt{p}\ket{0}\ket{\phi}+\sqrt{1-p}\ket{1}\ket{\chi}$ and $U^{\dagger}(\ket{1}\ket{\phi'})=\sqrt{1-p}\ket{0}\ket{\phi} - \sqrt{p} \ket{1}\ket{\chi}$.
These relations allow another application of Lemma~\ref{lem: partial-full reflection equality}, which reveals
\begin{equation}
\begin{split}
    &U e^{i\alpha (\ket{0}\bra{0}\otimes \phi_{0})} U^{\dagger} (e^{i\beta \ket{0}\bra{0}} \otimes I) \ket{\psi_{0}}  \\
    &=  U  (e^{i\alpha \ket{0}\bra{0}} \otimes I)  U^{\dagger}  (e^{i\beta \ket{0}\bra{0}} \otimes I) \ket{\psi_{0}} \\
    &= U \tilde{D}(\alpha) U^{\dagger} \tilde{D}(\beta).
\end{split} 
\end{equation}
Repeating this procedure shows that the ITE formulation reproduces the OAA quantum circuit.
\end{proof}

\section{Numerical Simulations} \label{app:numerics}

In this section, we perform numerical simulations to verify the QSP formulation of ITE for unstructured search.
Specifically, we examine two aspects: (1) the accuracy of the approximation for realizing ITE and (2) the performance of the fixed-point algorithm using the approximated sign function, in comparison with existing methods.

First, we numerically demonstrate that the ITE state in Eq.~\eqref{app_eq:dbr_vector} can be realized by properly choosing $\bm{\alpha}=\{\alpha_{k}\}_{k=1}^{2\mathcal{N}}$ in
\begin{equation}
    U(\bm{\alpha},x)  = \prod_{k=1}^{2\mathcal{N}} D(\alpha_k) R(x)  = \prod_{k=1}^{2\mathcal{N}} \begin{pmatrix}e^{i\alpha_k} & 0  \\  0 & 1\end{pmatrix} \begin{pmatrix}x & \sqrt{1-x^2}  \\  \sqrt{1-x^2} & -x\end{pmatrix} ,
\end{equation}
via the QSP implementation of $p(x)=\cos(sx\sqrt{1-x^2})$.
In the numerical simulation, we heuristically optimize $\bm{\alpha}$ such that $\Re(\braket{0|U(\bm{\alpha},x)|0})\approx p(x),\, x\in[0,1] $.
Here, $\ket{0}=(1,0)^{T}$ corresponds to the initial state $\ket{\psi_{0}}$ and $\ket{1}=(0,1)^{T}$ corresponds to $\ket{\psi_{0}^{\perp}}$ for the ITE formulation.
We recall that the input domain is restricted to $[0,1]$, since $E_{0}$ only takes a non-negative value for unstructured search.
To perform the optimization, we introduce the following cost function;

\begin{equation}
    \mathcal{L}(\bm{\alpha}) = \frac{1}{n_d}\sum_{i=1}^{n_d} (p(x_i)-\Re{(\braket{0|U(\bm{\alpha},x_i)|0})})^2 + \lambda_{1} \frac{1}{n_d}\sum_{i=1}^{n_d} \Im{(\braket{0|U(\bm{\alpha},x_i)|0})}^2 + \lambda_{2} \frac{1}{n_d}\sum_{i=1}^{n_d} \arg\left(\frac{\braket{0|U(\bm{\alpha},x_i)|0}}{\braket{1|U(\bm{\alpha},x_i)|0}}\right)^2 ,
\end{equation}
where the first term represents the mean square errors between the approximated and the target functions, while the remaining two terms are introduced as the penalty with the coefficients $\lambda_1$ and $\lambda_2$.
Concretely, the second term penalizes the imaginary part of $\braket{0|U(\bm{\alpha})|0}$ to make it zero.
The third term suppresses the relative phase between $\braket{0|U(\bm{\alpha},x)|0}$ and $\braket{1|U(\bm{\alpha},x)|0}$, which is necessary because the resulting state after applying $U(\bm{\alpha})$ is
\begin{equation}
    U(\bm{\alpha},x)\ket{0}=\begin{pmatrix}\braket{0|U(\bm{\alpha},x)|0}   \\  \braket{1|U(\bm{\alpha},x)|0}\end{pmatrix},
\end{equation}
and we aim to approximate $(\cos(sx\sqrt{1-x^2}),\sin(sx\sqrt{1-x^2}))^{T}$ for an arbitrary $s\ge0 $ up to a global phase with this; that is, the relative phase is distinguishable and hence incurs the additional errors when implementing ITE with the phase obtained through the QSP construction.
In the following simulation, we employ the ``scipy.optimize.minimize" function from SciPy~\cite{2020SciPy-NMeth}, using the Sequential Least Squares Programming (SLSQP) method. 
The penalty coefficients are set $(\lambda_1,\lambda_2)=(0.01,0.1)$, and we use $n_d=50$ samples, uniformly drawn from the interval $[0,1]$.

%%%%%%%%%%%%%%%%%%%%%%%% Figure %%%%%%%%%%%%%%%%%%%%%%%%

\begin{figure}[t]
\centering
\begin{tikzpicture}
\definecolor{lightgray}{HTML}{F4F4F4}
\definecolor{littlelightgray}{HTML}{ecececff}
\definecolor{pale}{HTML}{7ca3d4ff}
\definecolor{lightred}{HTML}{d8a2a2}
\node[anchor=center] (russell) at (-7.6,-2.3)
{\centering\includegraphics[width=0.30\textwidth]{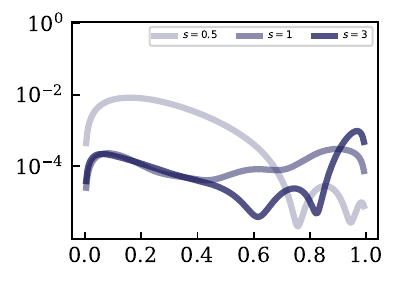}};
\node[text width=4cm] at (-6.4,-4.3){Initial Overlap $E_{0}$};
\node[rotate=90,text width=4cm] at (-10.3,-0.8) 
    {Infidelity $\mathcal{I}$};
\node[text width=1cm] at (-10,0){(a)};
\node[anchor=center] (russell) at (-1.9,-2.3)
{\centering\includegraphics[width=0.30\textwidth]{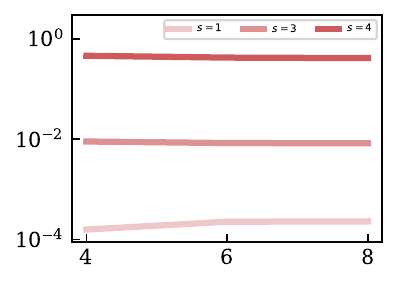}};
\node[text width=4cm] at (-0.8,-4.3){Number of Qubits};
\node[rotate=90,text width=4cm] at (-4.6,-0.8) 
    {Infidelity $\mathcal{I}$};
\node[text width=1cm] at (-4.3,0){(b)};
\node[anchor=center] (russell) at (3.9,-2.3)
{\centering\includegraphics[width=0.30\textwidth]{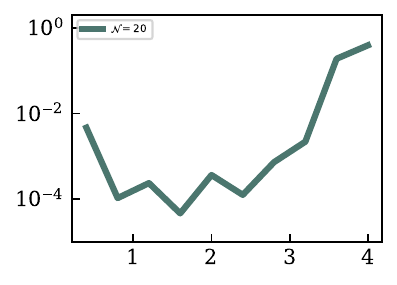}};
\node[text width=4cm] at (5.3,-4.3){Time Duration $s$};
\node[rotate=90,text width=4cm] at (1.2,-0.8) 
    {Infidelity $\mathcal{I}$};
\node[text width=1cm] at (1.5,0){(c)};
\end{tikzpicture}
\caption{\textbf{Numerical simulations of QSP formulation for ITE.}
We numerically verify the QSP formulation of ITE for unstructured search. (a) Using the set of angles derived from the QSP formulation, we compute the infidelity between the ITE state defined in Eq.~\eqref{eq:dbr} and the state obtained after applying the Grover iteration with $\mathcal{N} = 16$ to the initial state for $N=2^{8}$.
Across the entire range of initial overlaps $E_{0}$, the infidelity remains below $10^{-2}$ for $s = 0.5$, $1$, and $3$.
(b) We check the dependence on the system size $N=2^{n}$ with $n=4,6,8$ for $s=1,3,4$. As shown in Theorem~\ref{thm:qsp_formula_ITE}, the performance does not depend on the system size.
(c) We further check the dependence of performance on the time duration. Using $\mathcal{N}=20$ for the Hamiltonian $\hat{H}_{f}$ of size $N=2^6$, we confirm that the large value of $s$ incurs more errors, as indicated by Lemma~\ref{lem:jacobi-anger}.
}
\label{fig:qsp_formulation_ITE}
\end{figure}

%%%%%%%%%%%%%%%%%%%%%%%% Figure %%%%%%%%%%%%%%%%%%%%%%%%

Fig.~\ref{fig:qsp_formulation_ITE}~(a) shows the infidelity between the ITE state in Eq.~\eqref{eq:dbr} and the state $U(\bm{\alpha},x)\ket{\psi_{0}}$ with a set of angles obtained via the optimization for $\mathcal{N}=16$, $s=0.5,1,3$ and $N=2^{8}$.
The infidelity is defined as $\mathcal{I}=1-|\braket{\psi_{s}|U(\bm{\alpha},x)|\psi_{0}}|^2$, where $\ket{\psi_{s}}$ is the ITE state in Eq.~\eqref{eq:dbr}.
We observe that the Grover iteration using the optimized QSP angles can accurately reproduce the ITE state. 
The difference in accuracy with respect to the initial overlap $E_{0}$ can be attributed to numerical optimization error.
More specifically, while the target function is realized by $\braket{0|U(\bm{\alpha},x)|0}$ effectively, the numerical optimization does not fully eliminate the relative phase error, i.e., $\arg\left(\frac{\braket{0|U(\bm{\alpha},x_i)|0}}{\braket{1|U(\bm{\alpha},x_i)|0}}\right)\neq0$.
Consequently, we have an input-dependent difference caused by a nonzero, input-dependent relative phase.

We then check the system-size dependence by using the Hamiltonian $\hat{H_{f}}$ of the size $N=2^n$ with $n=4,6,8$ for $s=1,3,4$.
The number of Grover iterations is set as $\mathcal{N}=8$ and we compute the infidelity averaged over the choice of the number of targets $M=1,\ldots,2^{n}-1$.
As shown in Fig.~\ref{fig:qsp_formulation_ITE}~(b), we confirm that the approximation error remains independent of the system size. 
This behavior is consistent with Theorem~\ref{thm:qsp_formula_ITE}, which establishes the equivalence between the ITE state and its two-dimensional representation.
We also note that a large value of the time duration $s$ incurs a big error, which motivates the following numerical experiments.

We also examine the dependence of performance on the time duration $s$.
Here, we consider $\hat{H_{f}}$ of the size $N=2^6=64$ and the number of queries $\mathcal{N}=20$.
According to Lemma~\ref{lem:jacobi-anger}, the polynomial degree required to approximate the target trigonometric functions grows linearly with the time duration.
In agreement with this theoretical expectation, Fig.~\ref{fig:qsp_formulation_ITE}~(c) shows that the Grover iteration with a fixed number of polynomial degree fail to realize the ITE state for larger values of $s$.
Note that the case with $s < 1$ exhibits higher infidelity compared to the cases with $1 \leq s \leq 3$.
However, since the infidelity remains below $0.01$ in all cases, the observed deviation can be attributed to optimization error rather than fundamental limitations.
Overall, the numerical results verify that the ITE state in Eq.~\eqref{eq:dbr} can be effectively reconstructed using the QSP formulation.

\medskip
\medskip

%%%%%%%%%%%%%%%%%%%%%%%% Figure %%%%%%%%%%%%%%%%%%%%%%%%

\begin{figure}[t]
\centering
\begin{tikzpicture}
\definecolor{lightgray}{HTML}{F4F4F4}
\definecolor{littlelightgray}{HTML}{ecececff}
\definecolor{pale}{HTML}{7ca3d4ff}
\definecolor{lightred}{HTML}{d8a2a2}
\node[anchor=center] (russell) at (-12,-2.3)
{\centering\includegraphics[width=0.45\textwidth]{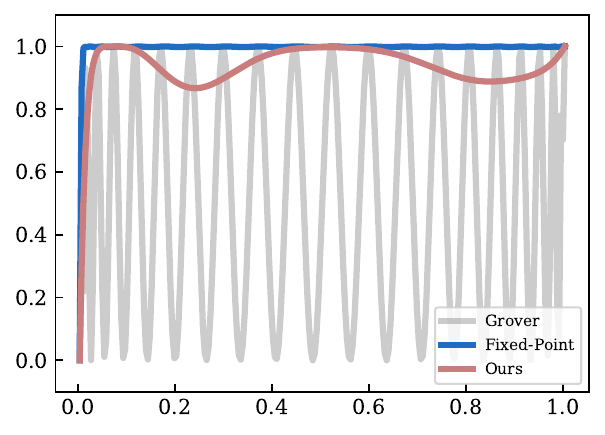}};
\node[text width=4cm] at (-10.9,-5.3){Initial Overlap $E_{0}$};
\node[rotate=90,text width=4cm] at (-16.2,-0.8) 
    {Final Overlap};
\node[text width=1cm] at (-16.0,0.5){(a)};
\node[anchor=center] (russell) at (-3.2,-2.3)
{\centering\includegraphics[width=0.45\textwidth]{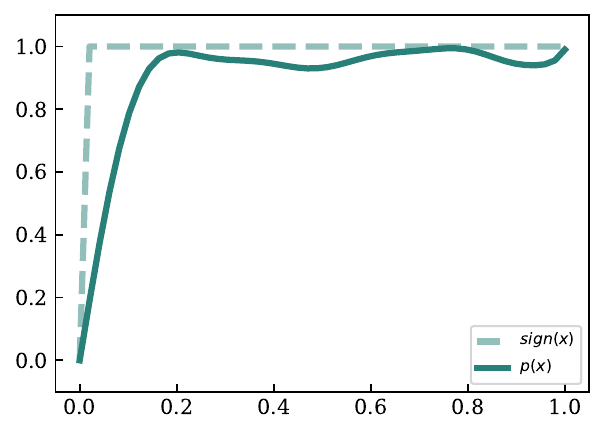}};
\node[text width=4cm] at (-1.7,-5.3){Input Value $x$};
\node[rotate=90,text width=4cm] at (-7.4,-0.6) 
    {Output};
\node[text width=1cm] at (-7.2,0.5){(b)};\end{tikzpicture}
\caption{\textbf{Comparison of our proposal to the original Grover algorithm and fixed-point algorithm.} We compare the performance of our proposed method with the original Grover and fixed-point algorithms.
(a) We demonstrate the overlap after employing these algorithms against the initial overlap $E_{0}$.
In this comparison, we fix $\mathcal{N} = 20$ and consider a Hamiltonian of dimension $2^8$, where the number of target items $M$ is varied according to the initial overlap $E_{0} = M/N$.
(b) The approximated sign function $p(x)$ obtained via heuristic optimization is illustrated.
While the results in (a) exhibit slightly lower performance than the original fixed-point algorithm due to optimization error, the proposed method successfully mitigates overshooting.
}
\label{fig:fixed_point}
\end{figure}

%%%%%%%%%%%%%%%%%%%%%%%% Figure %%%%%%%%%%%%%%%%%%%%%%%%

Next, we benchmark the performance of our QSP formulation of ITE for a fixed-point search by comparing it with the original Grover and fixed-point algorithms.
The set of angles for each algorithm is as follows;

\begin{itemize}

\item \textbf{Standard Grover method:} The phase angles for the original work~\cite{grover1996fast} is given by
\begin{align}
    \alpha_k=\beta_k= \pi \ .
\end{align}
    \item \textbf{Original fixed‑point angles:} In the original fixed-point algorithm~\cite{yoder2014fixed}, the phase angles are given by
\begin{equation}
\begin{split}
    \alpha_{k}&=\beta_{\mathcal{N}-k+1}=-\cot^{-1}\left(\tan\left(\frac{2\pi k}{\mathcal{N}}\right)\sqrt{1-\frac{1}{\gamma^2}}\right),
\end{split}
\end{equation}
where $\gamma=T_{1/\mathcal{N}}(1/\delta)$  is the Chebyshev polynomial of the first kind and we set the target infidelity $\delta^2=0.1$.

\item \textbf{Our proposal:} We use the heuristic approach mentioned above to derive phase angles for a polynomial function that approximates the sign function, such that $W\ket{0}=\ket{0}$.
We also set $\alpha_{2\mathcal{N}}=0$ to realize the solution state $\ket{\psi^{*}}$. 
\end{itemize}

Fig.~\ref{fig:fixed_point}~(a) illustrates the fidelity between the solution state and the output state generated by Grover iterations, $F=|\braket{\psi^{*}|\tilde{\psi}}|^2$, where $\ket{\tilde{\psi}}$ is a state generated by Grover iterations with $\mathcal{N}=20$ using the three different sets of phase angles
In this simulation, we set $n=8$, which corresponds to a search space of $2^{8}=256$ items, and consider $M=0,1,\ldots,N$ so that we check the performance depending on $E_{0}=M/N$.
As for the fixed-point algorithm, we set $\delta^{2}=0.1$
The results show that our method achieves fidelity comparable to that of the fixed-point algorithm, while avoiding the overshooting behavior exhibited by the original Grover algorithm. The minor discrepancy in performance can be attributed to the approximation error in the sign function used in the QSP construction.
Fig.~\ref{fig:fixed_point}~(b) shows  the approximated function realized via the QSP procedure, along with the associated approximation error. Despite this small error, our approach remains highly effective and closely matches the performance of the fixed-point method.
These results validate the practical effectiveness of our QSP formulation and provide new insights into the design of the fixed-point algorithm.

%TC:endignore

\end{document}